\documentclass[submission,copyright,creativecommons]{eptcs}
\providecommand{\event}{AiML 2026} 

\usepackage{aiml26}

\usepackage{iftex}

\usepackage{hyperref}
\usepackage{amssymb}
\usepackage{amsfonts}

\ifpdf
  \usepackage{underscore}         %
  \usepackage[T1]{fontenc}        %
\else
  \usepackage{breakurl}           %
\fi

\usepackage{graphicx}

\usepackage{hhline,multirow,tabularx,tikz,xstring,amsmath}
\usepackage{array}

\usepackage{amsfonts,mathrsfs,epic,amsbsy,amstext, amssymb, amscd,amsmath, textcomp}

\usepackage[all]{xy}

\usepackage{todonotes} %
\usepackage{comment} 

\newif\ifshow
\showfalse 

\ifshow
  \includecomment{rem}
\else
  \excludecomment{rem} 
\fi

\renewcommand{\phi}{\varphi}

\newcommand{\bq}{\begin{quote}}
\newcommand{\eq}{\end{quote}}

\newtheorem{Th}{Theorem}
\newtheorem{ax}{Axiom}
\newtheorem{df}{Definition}
\newtheorem{pr}{Proposition}
\newtheorem{cl}{Corollary}
\newtheorem{re}{Remark}
\newtheorem{as}{Assumption}
\newtheorem{wg}{Wild Guess}
\newtheorem{ex}{Example}

\newcommand{\Agents}{\mathcal{A}}

\newcommand{\ot}{\leftarrow}

\newcommand{\bM}{{\mathbf M}}

\newcommand{\bS}{{\mathbf S}}

\newcommand{\bE}{{\mathbf E}}
\newcommand{\be}{{\mathbf e}}

\newcommand{\boldf}{{\mathbf f}}
\newcommand{\myComment}[1]{}

\newcommand{\us}{\underline{{s}}}
\newcommand{\ue
}{\underline{{e}}}

\newcommand{\ux}{\overline{{x}}}
\newcommand{\uy}{\overline{{y}}}
\newcommand{\uz}{\overline{{z}}}

\newcommand{\bth}{\begin{Th}}
\newcommand{\Eth}{\end{Th}}
\newcommand{\bax}{\begin{ax}}
\newcommand{\eax}{\end{ax}}
\newcommand{\blm}{\begin{lm}}
\newcommand{\elm}{\end{lm}}
\newcommand{\bdf}{\begin{df}}
\newcommand{\edf}{\end{df}}
\newcommand{\bpr}{\begin{pr}}
\newcommand{\epr}{\end{pr}}
\newcommand{\bcl}{\begin{cl}}
\newcommand{\ecl}{\end{cl}}
\newcommand{\bre}{\begin{re}}
\newcommand{\ere}{\end{re}}
\newcommand{\bas}{\begin{as}}
\newcommand{\eas}{\end{as}}
\newcommand{\bwg}{\begin{wg}}
\newcommand{\ewg}{\end{wg}}
\newcommand{\bex}{\begin{ex}}
\newcommand{\eex}{\end{ex}}

\newcommand{\bit}{\begin{itemize}}
\newcommand{\eit}{\end{itemize}\par\noindent} \newcommand{\beq}{\begin{equation}}
\newcommand{\eeq}{\end{equation}\par\noindent} \newcommand{\beqa}{\begin{eqnarray*}}
\newcommand{\eeqa}{\end{eqnarray*}\par\noindent} \newcommand{\beqn}{\begin{eqnarray}}
\newcommand{\eeqn}{\end{eqnarray}\par\noindent}

\newcommand{\sima}{\sim_a}

\newcommand{\simb}{\sim_b}

\newcommand{\sime}{\sim_e}

\newcommand{\simB}{\sim_B}

\newcommand{\simA}{\sim_A}

\newcommand{\simc}{\sim_c}

\newcommand{\bD}{\mathbf{D}}

\title{The Logic of Data Access and Data Exchanges}
\author{Alexandru Baltag
\institute{University of Amsterdam, ILLC\\ Amsterdam, Netherlands}
\email{thealexandrubaltag@gmail.com}
\and
 Sonja Smets
\institute{University of Amsterdam, ILLC\\ Amsterdam, Netherlands}
\email{\quad s.j.l.smets@uva.nl}
}

\newcommand{\titlerunning}{The Logic of Data Access and Data Exchanges}
\newcommand{\authorrunning}{A. Baltag, S. Smets}

\hypersetup{
  bookmarksnumbered,
  pdftitle    = {\titlerunning},
  pdfauthor   = {\authorrunning},
  pdfsubject  = {EPTCS},               
  pdfkeywords = {Dynamic Epistemic Logic,
Epistemic Logic,
Knowledge of values,
Data Exchange Events,
Group knowledge,
Common Distributed Knowledge,
Logics for Multi-Agent Systems}
}

\begin{document}

\maketitle

\begin{abstract}
We investigate a new logic that extends Dynamic Epistemic Logic (DEL), by combining standard epistemic modalities $K_a \varphi$ and $K_A \varphi$ for (individual and distributed) \emph{propositional knowledge} with operators $K_a^\varphi x$, $K_A^\varphi x$ denoting \emph{(conditional) non-propositional knowledge of a number} (in which an agent $a$ or a group $A$ have knowledge of the value of some variable $x$, if given additional information $\varphi$). We also generalize these operators, by considering formulas $|x|_A^\varphi \leq N$  (for any natural number $N$), saying that: \emph{conditional on $\varphi$, $A$ can narrow down the possible values of variable $x$ to at most $N$ possibilities}. In order to name and compare such hypothetical values, we extend the logic further with \emph{definite descriptions based on minimization operators}: $\mu_N x_A^\varphi$ denotes the \emph{least of the $N$ possible values of $x$} (according to some fixed order $\leq$) that are considered possible by group $A$ (given condition $\varphi$).
On this static base, we consider DEL-style extensions with \emph{dynamic modalities for general `data-exchange events'} (covering private and public propositional announcements, but also secret hacking of a private database, or public sharing of one's data via open-source repositories, etc). In such scenarios, whole `chunks' of information may be exchanged or modified: once access to a given source is gained, all the `data' stored at that 
specific location becomes available. We give complete axiomatizations for the resulting logics, and prove their decidability and co-expressivity. \end{abstract}

\vspace{-2mm}
\section{Introduction}

\vspace{-2mm}

Information may come in both propositional form (Boolean variables) and non-propositional form (e.g. numbers, names, addresses, pictures, videos, etc). Such \emph{data} can be encoded as \emph{values of local variables} (e.g. a cryptographic key or a password), that are stored and processed in certain \emph{locations} or `sites'(e.g., websites, folders, databases, etc),  and can be retrieved when these sites are accessed as `sources'. Each such source can be thought of as an \emph{agent} (either because it actually is the knowledge base of a natural or artificial agent, or because we think of it as an abstract `agent' possessing exactly the
information that is stored at it).\footnote{The same piece of data can be stored/copied at multiple locations, and each location can store multiple pieces of data.} Such agents either `own' their data, or else have gained access to them from other sources: we say that the agents `know' the values of these variables, and some of them can also modify these values. More complex data may have an \emph{extended location}, the information being \emph{distributed} among a number of agents: one can recover the value of such complex variables only by accessing several sources.

To deal with non-propositional information, multi-agent epistemic logic has been extended in recent years with ``knowing-what'' operators $K_a x$ or $K_A x$ for (individual or distributed) \emph{knowledge of (the value of) some variable $x$} (in addition to the traditional modalities $K_a \varphi$ and $K_A \varphi$ for individual or distributed knowledge of a proposition $\varphi$). This work traces back to Plaza \cite{Plaza} and subsequent investigations by \cite{vEGWang,Yanjing,WangFan1,WangFan2,GuWang,Ding,Hong,Baltag2016,BS25} on formalizing `knowledge de re'. While Plaza axiomatized the static logic of `knowing what', the extension with dynamic operators $[!\varphi]\psi$ for (propositional) public announcements $!\varphi$ was only later axiomatized by Wang and Fan \cite{WangFan2}. To `pre-encode' this dynamics using DEL-style reduction axioms \cite{sep-dynamic-epistemic}, these authors needed a \emph{conditional version of `knowing what'}: $K_a^\varphi x$ means that $a$ knows the value of $x$ given the information that proposition $\varphi$ was the case.

In recent work \cite{BS25}, we extended this setting with operators $K_A^\varphi x$ for \emph{conditional distributed knowledge of the value of $x$ by a group $A$}. This was motivated by the need to deal with a more complex dynamics, going beyond propositional announcements or other traditional DEL-events 
\cite{sep-dynamic-epistemic,BM,BMS,DHK,NEW-BMS,BMD,LDII}. In a \emph{data-exchange event}, agents may gain access to `sources' (i.e., to other agents' knowledge bases), in which case they can be assumed to instantly `read' (and copy in their own knowledge base) \emph{all} the information stored at those sources. Such events were previously considered in our work \cite{BS20,BS24}, and in fact they subsume many other forms of non-propositional informational dynamics that were previously considered, e.g. ``tell us all you know'' in \cite{Baltag2010} (and with a different, interrogative interpretation in \cite{BenthemMinica1,BenthemMinica2}),  in-group sharing \cite{Baltag2010,Boddy2014,Goldbach2015,BBS2016}, and `resolution' \cite{AgotnesWang2017}. Other examples of data-exchange events covered in \cite{BS25} include private changing of the value of a variable (e.g. one's password), parallel data-sharing within different subgroups (e.g., in a poster session),  suspected hacking of a private database, private detection of such hacking, etc. As noted in \cite{BS20}, when an agent $a$ gains access to another agent $b$'s knowledge base, her new state of knowledge matches the \emph{distributed knowledge} of the group $\{a, b\}$. So on the side of propositional knowledge, we needed operators $K_A \varphi$ for \emph{distributed knowledge\footnote{The notation $D_A\varphi$ is typically used for distributed knowledge, but we prefer here $K_A\varphi$ because we think of this as capturing a natural notion of (virtual) group knowledge.}  within group $A$}. A similar move was also necessary on the side of ``knowing what'': operators $K_A x$, expressing that \emph{the group $A$ has distributed knowledge of the value of $x$}; in fact, to pre-encode public-announcement dynamics we needed the \emph{conditional version} $K_A^\varphi x$ of these operators.
Note that this type of conditional knowledge may \emph{not} imply knowledge of the `real' value of $x$ (in the actual world), but only the fact that the agent or group \emph{can uniquely determine a hypothetical value} of $x$ (applicable only to worlds satisfying $\varphi$). To axiomatize these notions, in the presence of equality of values $x=y$, we were lead to introduce \emph{definite description terms} $x_A^\varphi$ that can \emph{explicitly refer to such hypothetical values}:
$x_a^\varphi$ denotes (in the actual world) the \emph{(unique) value that $x$ would have according to agent $a$ if $\varphi$ were true}. Using this, we were able to provide in \cite{BS25} a complete axiomatization for the corresponding dynamic-epistemic logic. But for this, we had to \emph{restrict} the dynamics to `semi-public' events (a class that does \emph{not} include e.g. secret hacking). 
On the other hand, in \cite{BS24} we axiomatized a logic with \emph{unrestricted dynamics}, covering `arbitrary' data-exchange events (including complicated hacking scenarios). But this was based on a \emph{restricted static logic base}, that did \emph{not} include non-propositional knowledge operators, but only the traditional epistemic modalities $K_a \varphi$ and $K_A\varphi$, as well as (a major generalization of) common knowledge $C_A \varphi$ (in the form of polyadic conditional-epistemic group modalities $C_A^\be \varphi_1 \ldots \varphi_n$, indexed by data-exchange events $\be$). There were \emph{no explicit variables} $x$ in the logic, so no assertions $x=y$ or $Px_1\ldots x_n$ talking about the properties of specific pieces of (non-propositional) data, as well as \emph{no} ``knowledge-what'' operators $K_a x$, $K_A x$ or $K_A^\varphi x$. And, because of these limitations, the class of data-exchange events was still inherently restricted: e.g., there was no event of privately changing the value (such as one's password), and no events involving preconditions of the form $K_a x$ (for instance, no ``conditional hacking'', in which it is common knowledge that an agent $a$ may be hacking another agent $b$'s database if and only if she came to know $b$'s password).

\smallskip

In this paper, we overcome the limitations present in the frameworks of both \cite{BS24} and \cite{BS25}, by providing a common generalization, while also greatly increasing the expressivity of the logic. More specifically, we generalize the operators $K_A ^\varphi x$ to expressions $|x|_A^\varphi\leq N$ (for any natural number $N$), saying that: \emph{if given additional information $\varphi$, the group $A$ can narrow down the possible values of variable $x$ to at most $N$ possibilities} (for any natural number $N$). In other words, the group has distributed knowledge that, if $\varphi$ is the case then the value of $x$ belongs to a given list of $N$ possible values.\footnote{Conditional knowledge $K_A^\varphi x$ of the value can now be recovered as a special case (for $N=1$).} This is a useful addition when dealing with cryptographic protocols. Indeed, if $N$ is `small enough' and $x$ is say another agent $b$'s communication key or private password, then any intruder $a$ having the capability $|x|_a^\varphi \leq N$ will be able to hack $b$'s communication (by simply trying out all the $N$ possible values), as soon as she may receive the information $\varphi$. The same applies to a group $A$ of hackers if they can collectively narrow down the possibilities, i.e., have the capability $|x|_A^\varphi \leq N$. 
As done in \cite{BS25} for $K_A^\varphi x$, whenever we have  $|x|_a \leq N$ we introduce \emph{definite descriptions denoting each of the $N$ hypothetical values of $x$}. For this, we assume given a salient \emph{total order} $\leq$ on the set of values. This is useful in numerical applications, but it also
allows us to convert cardinality statements $|x|_a \leq N$ into ordinal descriptions $1_N.x_A^\varphi$, $2_N.x_A^\varphi$ etc, denoting ``the first (or the second, etc) of the $\leq N$ possible values of $x$ according to $A$ given $\varphi$''.\footnote{We can define all these using a \emph{cutoff-minimization construct} $\mu_N x_A^\varphi$, which is the same as $1_N.x_A^\varphi$: it denotes \emph{the least of the $\leq N$ possible values of $x$ (according to $A$ given $\varphi$)}.}
To capture the properties of the total order, we include in our language \emph{order statements $x\leq y$} (besides equality of values $x=y$ and possibly other predicates $P x_1\ldots x_n$). We obtain a very expressive dynamic-epistemic logic, and we 
are able to completely axiomatize it and prove its decidability. 
Our proofs make an innovative use of known methods. For the static logic, we first use filtration to obtain a quasi-model: this is not a `real' model, but just a syntactic construction, in which \emph{variables have no values} (and the group relations are non-standard, as they are \emph{not intersections} of the individual relations). We then follow the usual method for dealing with distributed knowledge, by unraveling the model into an infinite tree and redefining the relations to make them standard. However, new complications ensue due to the variables, and especially due to the combination of equality $x=y$ and non-propositional knowledge $K_A x$; the problems are in fact compounded by the presence of formulas for order $x\leq y$ and narrowing down $|x|_A^\varphi\leq N$. Here, we use our definite descriptions to ensure that the local properties of variables $x$ are preserved and continuously transmitted at far-away nodes of the tree.\footnote{
In effect, the `values' of variables $x$ ``hitch a ride'' on the back of various group epistemic transitions $\to_A$ in the tree, and their properties are transmitted by passing them to \emph{other ($A$-preserved) terms}.} Defining the total order on values over the tree requires a use of the Order Extension Principle, and hence of the Axiom of Choice.
Finally, the completeness for the dynamic logic uses DEL-style reduction axioms; but again, the details are quite tantalizing. Even the soundness of some reduction laws (e.g., the `Cutoff-Min. Change' axiom) is not at all obvious!

We should mention a self-imposed limitation: due to the page-limit, we chose \emph{not} to include common knowledge operators $C_A \varphi$, whose dynamics 
introduces another source of complexity, requiring additional technical developments and proofs. We leave this step for the journal-version of this paper.
\vspace{-4mm}
\section{Motivating Examples}
\vspace{-3mm}

\par\noindent\textbf{Example 1} Alice and Bob are each given a number $x(a), x(b)\in \mathbb{N}=\{0, 1, 2,\ldots\}$, while the value of the sum $x(e):=x(a)+ x(b)$ is stored in a closed envelope $e$. It is common knowledge that:
(1) \textit{each of the two sees his/her own number, but neither of them can see the others' number, nor the sum $x(e)$};
(2) \textit{one of the two numbers $x^a, x^b$ is the immediate successor of the other}
(i.e. either $x(a)= x(b) + 1$ or $x(b) = x(a) + 1$). We represent this situation using an \emph{infinite Kripke model}, where states are triplets of numbers $(x(a), x(b), x(e))$, and the accessibility relations $\sima$ and $\simb$ encode the agents' uncertainty. We also have an accessibility relation $\sime$ for the `envelope' (treated as an abstract `agent'), encoding the information contained in it. The formula $K_a x(a) \wedge K_b x(b)$, saying  
that \emph{agents know their numbers}, is valid on this model, while $K_e x(e)$ says that the sum $x(e)$ is stored in the envelope. On the other hand, agents $a$ and $b$ can \emph{together} figure out the sum: we have $K_{\{a,b\}} x(e)$, i.e.
the \emph{group} $A=\{a,b\}$ has \textit{``distributed'' knowledge} of $x(e)$. 
The formulas $|x(a)|_b \leq 2$, $|x(b)|_a\leq 2$, $|x(e)|_a\leq 2$ and $|x(e)|_b\leq 2$ say that the live agents can \textit{narrow down the possibilities} (for the other's number, as well as for the sum) \textit{to at most two}. To \emph{name} these values, we write e.g. $1_2.x(e)_a$ for the \emph{first} (i.e., the \emph{least}) of the \emph{two} possible values of the sum $x(e)$  according to Alice, $2_2.x(e)_a$ for the \emph{second} least (=largest) of the \emph{two}, etc. The picture is:
\medskip\par\noindent

\tikzset{every picture/.style={line width=0.75pt}} 

\begin{tikzpicture}[x=0.75pt,y=0.75pt,yscale=-1,xscale=1]
\draw   (114,60) .. controls (114,55.58) and (117.58,52) .. (122,52) -- (176,52) .. controls (180.42,52) and (184,55.58) .. (184,60) -- (184,84) .. controls (184,88.42) and (180.42,92) .. (176,92) -- (122,92) .. controls (117.58,92) and (114,88.42) .. (114,84) -- cycle ;
\draw   (114,141) .. controls (114,136.58) and (117.58,133) .. (122,133) -- (176,133) .. controls (180.42,133) and (184,136.58) .. (184,141) -- (184,165) .. controls (184,169.42) and (180.42,173) .. (176,173) -- (122,173) .. controls (117.58,173) and (114,169.42) .. (114,165) -- cycle ;
\draw   (235,61) .. controls (235,56.58) and (238.58,53) .. (243,53) -- (297,53) .. controls (301.42,53) and (305,56.58) .. (305,61) -- (305,85) .. controls (305,89.42) and (301.42,93) .. (297,93) -- (243,93) .. controls (238.58,93) and (235,89.42) .. (235,85) -- cycle ;
\draw   (235,142) .. controls (235,137.58) and (238.58,134) .. (243,134) -- (297,134) .. controls (301.42,134) and (305,137.58) .. (305,142) -- (305,166) .. controls (305,170.42) and (301.42,174) .. (297,174) -- (243,174) .. controls (238.58,174) and (235,170.42) .. (235,166) -- cycle ;
\draw   (353,61) .. controls (353,56.58) and (356.58,53) .. (361,53) -- (415,53) .. controls (419.42,53) and (423,56.58) .. (423,61) -- (423,85) .. controls (423,89.42) and (419.42,93) .. (415,93) -- (361,93) .. controls (356.58,93) and (353,89.42) .. (353,85) -- cycle ;
\draw   (353,142) .. controls (353,137.58) and (356.58,134) .. (361,134) -- (415,134) .. controls (419.42,134) and (423,137.58) .. (423,142) -- (423,166) .. controls (423,170.42) and (419.42,174) .. (415,174) -- (361,174) .. controls (356.58,174) and (353,170.42) .. (353,166) -- cycle ;
\draw   (475,61) .. controls (475,56.58) and (478.58,53) .. (483,53) -- (537,53) .. controls (541.42,53) and (545,56.58) .. (545,61) -- (545,85) .. controls (545,89.42) and (541.42,93) .. (537,93) -- (483,93) .. controls (478.58,93) and (475,89.42) .. (475,85) -- cycle ;
\draw   (475,142) .. controls (475,137.58) and (478.58,134) .. (483,134) -- (537,134) .. controls (541.42,134) and (545,137.58) .. (545,142) -- (545,166) .. controls (545,170.42) and (541.42,174) .. (537,174) -- (483,174) .. controls (478.58,174) and (475,170.42) .. (475,166) -- cycle ;
\draw    (184,73) -- (235.2,73) ;
\draw    (306,73) -- (354.2,73) ;
\draw    (422,73) -- (475.2,73) ;
\draw    (186,153) -- (235.2,153) ;
\draw    (306,153) -- (352.2,153) ;
\draw    (424,153) -- (475.2,153) ;
\draw    (150.4,93.2) -- (150.4,133.6) ;
\draw    (271.4,93.2) -- (271.4,133.6) ;
\draw    (390.2,93.2) -- (390.2,133.6) ;
\draw    (511.4,93.2) -- (511.4,133.6) ;
\draw    (546,73) -- (597.2,73) ;
\draw    (546,155) -- (597.2,155) ;

\draw (126,145) node [anchor=north west][inner sep=0.75pt]   [align=left] {0, 1, 1};
\draw (126,63) node [anchor=north west][inner sep=0.75pt]   [align=left] {1, 0, 1};
\draw (248,146) node [anchor=north west][inner sep=0.75pt]   [align=left] {2, 1, 3};
\draw (367,146) node [anchor=north west][inner sep=0.75pt]   [align=left] {2, 3, 5};
\draw (490,146) node [anchor=north west][inner sep=0.75pt]   [align=left] {4, 3, 7};
\draw (249,65) node [anchor=north west][inner sep=0.75pt]   [align=left] {1, 2, 3};
\draw (367,66) node [anchor=north west][inner sep=0.75pt]   [align=left] {3, 2, 5};
\draw (489,65) node [anchor=north west][inner sep=0.75pt]   [align=left] {3, 4, 7};
\draw (198,54) node [anchor=north west][inner sep=0.75pt]   [align=left] {\,\,\,$a$};
\draw (318,55) node [anchor=north west][inner sep=0.75pt]   [align=left] {\,\,\,$b$};
\draw (446,55) node [anchor=north west][inner sep=0.75pt]   [align=left] {$a$};
\draw (567,53) node [anchor=north west][inner sep=0.75pt]   [align=left] {$b$};
\draw (328,135) node [anchor=north west][inner sep=0.75pt]   [align=left] {$a$};
\draw (569,135) node [anchor=north west][inner sep=0.75pt]   [align=left] {$a$};
\draw (203,135) node [anchor=north west][inner sep=0.75pt]   [align=left] {$b$};
\draw (445,136) node [anchor=north west][inner sep=0.75pt]   [align=left] {$b$};
\draw (137,104) node [anchor=north west][inner sep=0.75pt]   [align=left] {$e$};
\draw (258,105) node [anchor=north west][inner sep=0.75pt]   [align=left] {$e$};
\draw (376,104) node [anchor=north west][inner sep=0.75pt]   [align=left] {$e$};
\draw (498,103) node [anchor=north west][inner sep=0.75pt]   [align=left] {$e$};
\draw (618,70) node [anchor=north west][inner sep=0.75pt]   [align=left] {...};
\draw (618,152) node [anchor=north west][inner sep=0.75pt]   [align=left] {...};
\end{tikzpicture}

\smallskip\par\noindent\textbf{Example 1, continued}
Next, it is common knowledge that \textit{Alice 'hacks' Bob's database}. 
This is a `semi-public' event $!(a:b)$, which \emph{updates the model} by replacing $\sima$  with the intersection $\sima\cap\simb$:
\medskip\par\noindent

{\tikzset{every picture/.style={line width=0.75pt}}        

\begin{tikzpicture}[x=0.75pt,y=0.75pt,yscale=-1,xscale=1]
%

\draw   (134,80) .. controls (134,75.58) and (137.58,72) .. (142,72) -- (196,72) .. controls (200.42,72) and (204,75.58) .. (204,80) -- (204,104) .. controls (204,108.42) and (200.42,112) .. (196,112) -- (142,112) .. controls (137.58,112) and (134,108.42) .. (134,104) -- cycle ;
\draw   (134,161) .. controls (134,156.58) and (137.58,153) .. (142,153) -- (196,153) .. controls (200.42,153) and (204,156.58) .. (204,161) -- (204,185) .. controls (204,189.42) and (200.42,193) .. (196,193) -- (142,193) .. controls (137.58,193) and (134,189.42) .. (134,185) -- cycle ;
\draw   (255,81) .. controls (255,76.58) and (258.58,73) .. (263,73) -- (317,73) .. controls (321.42,73) and (325,76.58) .. (325,81) -- (325,105) .. controls (325,109.42) and (321.42,113) .. (317,113) -- (263,113) .. controls (258.58,113) and (255,109.42) .. (255,105) -- cycle ;
\draw   (255,162) .. controls (255,157.58) and (258.58,154) .. (263,154) -- (317,154) .. controls (321.42,154) and (325,157.58) .. (325,162) -- (325,186) .. controls (325,190.42) and (321.42,194) .. (317,194) -- (263,194) .. controls (258.58,194) and (255,190.42) .. (255,186) -- cycle ;
\draw   (373,81) .. controls (373,76.58) and (376.58,73) .. (381,73) -- (435,73) .. controls (439.42,73) and (443,76.58) .. (443,81) -- (443,105) .. controls (443,109.42) and (439.42,113) .. (435,113) -- (381,113) .. controls (376.58,113) and (373,109.42) .. (373,105) -- cycle ;
\draw   (373,162) .. controls (373,157.58) and (376.58,154) .. (381,154) -- (435,154) .. controls (439.42,154) and (443,157.58) .. (443,162) -- (443,186) .. controls (443,190.42) and (439.42,194) .. (435,194) -- (381,194) .. controls (376.58,194) and (373,190.42) .. (373,186) -- cycle ;
\draw   (495,81) .. controls (495,76.58) and (498.58,73) .. (503,73) -- (557,73) .. controls (561.42,73) and (565,76.58) .. (565,81) -- (565,105) .. controls (565,109.42) and (561.42,113) .. (557,113) -- (503,113) .. controls (498.58,113) and (495,109.42) .. (495,105) -- cycle ;
\draw   (495,162) .. controls (495,157.58) and (498.58,154) .. (503,154) -- (557,154) .. controls (561.42,154) and (565,157.58) .. (565,162) -- (565,186) .. controls (565,190.42) and (561.42,194) .. (557,194) -- (503,194) .. controls (498.58,194) and (495,190.42) .. (495,186) -- cycle ;
\draw    (326,93) -- (374.2,93) ;
\draw    (206,173) -- (255.2,173) ;
\draw    (444,173) -- (495.2,173) ;
\draw    (170.2,113.2) -- (170.2,153.6) ;
\draw    (291.2,113.2) -- (291.2,153.6) ;
\draw    (410.2,113.2) -- (410.2,153.6) ;
\draw    (531.2,113.2) -- (531.2,153.6) ;
\draw    (566,93) -- (617.2,92.6) ;

\draw (146,165) node [anchor=north west][inner sep=0.75pt]   [align=left] {0, 1, 1};
\draw (146,83) node [anchor=north west][inner sep=0.75pt]   [align=left] {1, 0, 1};
\draw (268,166) node [anchor=north west][inner sep=0.75pt]   [align=left] {2, 1, 3};
\draw (387,166) node [anchor=north west][inner sep=0.75pt]   [align=left] {2, 3, 5};
\draw (510,166) node [anchor=north west][inner sep=0.75pt]   [align=left] {4, 3, 7};
\draw (269,85) node [anchor=north west][inner sep=0.75pt]   [align=left] {1, 2, 3};
\draw (387,86) node [anchor=north west][inner sep=0.75pt]   [align=left] {3, 2, 5};
\draw (509,85) node [anchor=north west][inner sep=0.75pt]   [align=left] {3, 4, 7};
\draw (338,75) node [anchor=north west][inner sep=0.75pt]   [align=left] {\,$b$};
\draw (587,73) node [anchor=north west][inner sep=0.75pt]   [align=left] {$b$};
\draw (223,155) node [anchor=north west][inner sep=0.75pt]   [align=left] {$b$};
\draw (465,156) node [anchor=north west][inner sep=0.75pt]   [align=left] {$b$};
\draw (157,124) node [anchor=north west][inner sep=0.75pt]   [align=left] {$e$};
\draw (278,125) node [anchor=north west][inner sep=0.75pt]   [align=left] {$e$};
\draw (396,124) node [anchor=north west][inner sep=0.75pt]   [align=left] {$e$};
\draw (518,123) node [anchor=north west][inner sep=0.75pt]   [align=left] {$e$};
\draw (630,90) node [anchor=north west][inner sep=0.75pt]   [align=left] {...};

\end{tikzpicture}
}

\smallskip\par\noindent\textbf{Example 2: an alternative scenario}
Start in the initial situation from Example 1. This time no hacking is allowed, but Alice and Bob are asked: ``\textit{Do you know each other's number}''? 
They both answer (truthfully, publicly and simultaneously): ``\textit{I don't know}''. This is a \emph{public announcement}
$! (\neg K_a x(b) \wedge \neg K_b x(a))$. The \emph{updated model} is obtained by deleting states $(0,1,1)$ and $(1,0,1)$ from the initial model:

\medskip\par\noindent

{\tikzset{every picture/.style={line width=0.75pt}} 

\begin{tikzpicture}[x=0.75pt,y=0.75pt,yscale=-1,xscale=1]
%

\draw   (158,77) .. controls (158,72.58) and (161.58,69) .. (166,69) -- (220,69) .. controls (224.42,69) and (228,72.58) .. (228,77) -- (228,101) .. controls (228,105.42) and (224.42,109) .. (220,109) -- (166,109) .. controls (161.58,109) and (158,105.42) .. (158,101) -- cycle ;
\draw   (158,158) .. controls (158,153.58) and (161.58,150) .. (166,150) -- (220,150) .. controls (224.42,150) and (228,153.58) .. (228,158) -- (228,182) .. controls (228,186.42) and (224.42,190) .. (220,190) -- (166,190) .. controls (161.58,190) and (158,186.42) .. (158,182) -- cycle ;
\draw   (276,77) .. controls (276,72.58) and (279.58,69) .. (284,69) -- (338,69) .. controls (342.42,69) and (346,72.58) .. (346,77) -- (346,101) .. controls (346,105.42) and (342.42,109) .. (338,109) -- (284,109) .. controls (279.58,109) and (276,105.42) .. (276,101) -- cycle ;
\draw   (276,158) .. controls (276,153.58) and (279.58,150) .. (284,150) -- (338,150) .. controls (342.42,150) and (346,153.58) .. (346,158) -- (346,182) .. controls (346,186.42) and (342.42,190) .. (338,190) -- (284,190) .. controls (279.58,190) and (276,186.42) .. (276,182) -- cycle ;
\draw   (398,77) .. controls (398,72.58) and (401.58,69) .. (406,69) -- (460,69) .. controls (464.42,69) and (468,72.58) .. (468,77) -- (468,101) .. controls (468,105.42) and (464.42,109) .. (460,109) -- (406,109) .. controls (401.58,109) and (398,105.42) .. (398,101) -- cycle ;
\draw   (398,158) .. controls (398,153.58) and (401.58,150) .. (406,150) -- (460,150) .. controls (464.42,150) and (468,153.58) .. (468,158) -- (468,182) .. controls (468,186.42) and (464.42,190) .. (460,190) -- (406,190) .. controls (401.58,190) and (398,186.42) .. (398,182) -- cycle ;
\draw    (228.7,89) -- (276.7,89) ;
\draw    (345,89) -- (398.2,89) ;
\draw    (228,170) -- (275.2,170) ;
\draw    (347,170) -- (398.2,170) ;
\draw    (194.4,109.2) -- (194.2,150.6) ;
\draw    (313.2,109.2) -- (313.2,150.6) ;
\draw    (434.4,109.2) -- (434.2,150.6) ;
\draw    (469,89) -- (520.2,89) ;
\draw    (469,171) -- (520.2,171) ;

\draw (171,162) node [anchor=north west][inner sep=0.75pt]   [align=left] {2, 1, 3};
\draw (290,162) node [anchor=north west][inner sep=0.75pt]   [align=left] {2, 3, 5};
\draw (413,162) node [anchor=north west][inner sep=0.75pt]   [align=left] {4, 3, 7};
\draw (172,81) node [anchor=north west][inner sep=0.75pt]   [align=left] {1, 2, 3};
\draw (290,82) node [anchor=north west][inner sep=0.75pt]   [align=left] {3, 2, 5};
\draw (412,81) node [anchor=north west][inner sep=0.75pt]   [align=left] {3, 4, 7};
\draw (241,71) node [anchor=north west][inner sep=0.75pt]   [align=left] {\,\,\,\,$b$};
\draw (369,71) node [anchor=north west][inner sep=0.75pt]   [align=left] {$a$};
\draw (490,69) node [anchor=north west][inner sep=0.75pt]   [align=left] {$b$};
\draw (251,151) node [anchor=north west][inner sep=0.75pt]   [align=left] {$a$};
\draw (492,151) node [anchor=north west][inner sep=0.75pt]   [align=left] {$a$};
\draw (368,152) node [anchor=north west][inner sep=0.75pt]   [align=left] {$b$};
\draw (181,121) node [anchor=north west][inner sep=0.75pt]   [align=left] {$e$};
\draw (299,120) node [anchor=north west][inner sep=0.75pt]   [align=left] {$e$};
\draw (421,119) node [anchor=north west][inner sep=0.75pt]   [align=left] {$e$};
\draw (540,85) node [anchor=north west][inner sep=0.75pt]   [align=left] {...};
\draw (540,168) node [anchor=north west][inner sep=0.75pt]   [align=left] {...};
\end{tikzpicture}
}

\smallskip\par\noindent
If \emph{asked the same question again}, they again answer ``\textit{I don't know}'', which deletes $(2,1,3)$ and $(1,2,3)$:

\medskip\par\noindent

{\tikzset{every picture/.style={line width=0.75pt}}        

\begin{tikzpicture}[x=0.75pt,y=0.75pt,yscale=-1,xscale=1]
%

\draw   (210,73) .. controls (210,68.58) and (213.58,65) .. (218,65) -- (272,65) .. controls (276.42,65) and (280,68.58) .. (280,73) -- (280,97) .. controls (280,101.42) and (276.42,105) .. (272,105) -- (218,105) .. controls (213.58,105) and (210,101.42) .. (210,97) -- cycle ;
\draw   (210,154) .. controls (210,149.58) and (213.58,146) .. (218,146) -- (272,146) .. controls (276.42,146) and (280,149.58) .. (280,154) -- (280,178) .. controls (280,182.42) and (276.42,186) .. (272,186) -- (218,186) .. controls (213.58,186) and (210,182.42) .. (210,178) -- cycle ;
\draw   (332,73) .. controls (332,68.58) and (335.58,65) .. (340,65) -- (394,65) .. controls (398.42,65) and (402,68.58) .. (402,73) -- (402,97) .. controls (402,101.42) and (398.42,105) .. (394,105) -- (340,105) .. controls (335.58,105) and (332,101.42) .. (332,97) -- cycle ;
\draw   (332,154) .. controls (332,149.58) and (335.58,146) .. (340,146) -- (394,146) .. controls (398.42,146) and (402,149.58) .. (402,154) -- (402,178) .. controls (402,182.42) and (398.42,186) .. (394,186) -- (340,186) .. controls (335.58,186) and (332,182.42) .. (332,178) -- cycle ;
\draw    (279,85) -- (332.2,84.6) ;
\draw    (281,166) -- (332.2,165.6) ;
\draw    (247.2,105.2) -- (247.2,145.6) ;
\draw    (368.4,105.2) -- (368.2,145.6) ;
\draw    (403,85) -- (454.2,84.6) ;
\draw    (403,167) -- (454.2,166.6) ;

\draw (224,158) node [anchor=north west][inner sep=0.75pt]   [align=left] {2, 3, 5};
\draw (347,158) node [anchor=north west][inner sep=0.75pt]   [align=left] {4, 3, 7};
\draw (224,78) node [anchor=north west][inner sep=0.75pt]   [align=left] {3, 2, 5};
\draw (346,77) node [anchor=north west][inner sep=0.75pt]   [align=left] {3, 4, 7};
\draw (303,67) node [anchor=north west][inner sep=0.75pt]   [align=left] {$a$};
\draw (424,65) node [anchor=north west][inner sep=0.75pt]   [align=left] {$b$};
\draw (426,147) node [anchor=north west][inner sep=0.75pt]   [align=left] {$a$};
\draw (302,148) node [anchor=north west][inner sep=0.75pt]   [align=left] {$b$};
\draw (233,116) node [anchor=north west][inner sep=0.75pt]   [align=left] {$e$};
\draw (355,115) node [anchor=north west][inner sep=0.75pt]   [align=left] {$e$};
\draw (470,82) node [anchor=north west][inner sep=0.75pt]   [align=left] {...};
\draw (470,165) node [anchor=north west][inner sep=0.75pt]   [align=left] {...};
\end{tikzpicture}
}

\smallskip\par\noindent
If \textit{asked for the 3rd time}, Alice says `` \textit{I know Bob's number}'' and Bob says ``\textit{I don't know}''. The announcement $! (K_a x(b) \wedge \neg K_b x(a))$ \emph{deletes all states but $(2,3,5)$}: so we have $x(a)=2, x(b)=3, x(e)=5$. 

\medskip

\par\noindent\textbf{Example 3} 
Starting again in the initial situation from Example 1, agents are publicly announced that $x(a)<x(b)$. After that, Alice knows Bob's number, i.e. $[!(x(a)<x(b))] K_a x(b)$. We \emph{can} reason hypothetically about this scenario \textit{in the original model, without updating it}, using \textit{conditional knowledge $K_a^{x(a)< x(b)} x(b)$ of the (hypothetical) value of $x(b)$ given the condition $x_a<x_b$}. We can also name this hypothetical value using a conditionalized version of the ``least-of-$N$" description introduced above, e.g. writing $1_1.x(b)_a^{x(a)<x(b)}$ for the unique (=least out of only one) value of $x(b)$ that is considered possible by Alice conditional on $x(a)<x(b)$.  But note that, when considering a condition that is \emph{known to be false} (e.g., $x(a)=x(b)$), there are \emph{no such hypothetical values}: so the best we can do is to interpret the `least' value as the `infimum' $1_1. x(b)_a^{x(a)=x(b)}= inf\,\emptyset= \infty$. This means that we are working in $\mathbb{N}^\top:=\mathbb{N}\cup \{\infty\}$.

\medskip

\par\noindent\textbf{Example 4: Back in time}
Imagine now that our story starts \textbf{earlier}, at a time when each `agent' (Alice $a$, Bob $b$, and the envelope $e$) only `knows' their own number (but not yet the correlations between numbers).
The model is just $\mathbb{N}\times \mathbb{N}\times \mathbb{N}$, with $(x(a), x(b), x(c))\sima (x'(a), x'(b), x'(c))$ iff $x(a) =x'(a)$,
and similarly for $b$ and $e$. Next, the correlations are publicly announced $!((x(a)=x(b)+1\vee x(b)=x(a)+1)\wedge x(e)=x(a)+x(b))$, and after that the semi-public hacking !(a:b) happens. Can we check that Alice will know $x(e)$ after this scenario \emph{without updating the model }$\mathbb{N}\times \mathbb{N}\times \mathbb{N}$?
For this, we will need an operator $K_{a,b}^{  (x(a)=x(b)+1\vee x(b)=x(a)+1) \wedge (x(e)=x(a)+x(b))} x(e)$ for \emph{conditional distributed knowledge of a value}.

\medskip

To formalize the examples, we need the following ingredients:
\emph{equality} $x=y$; \emph{order} $x<y$;
\emph{functions} $x=y+1$, $z=x+y$;
\emph{(conditional) (distributed) knowledge of a number} $K_A^\varphi x$;
a way to 
express than an agent/group can \emph{(conditionally) narrow down to $N$ possible values} ($|x|_A^\varphi \leq N$),
and to \emph{name these values in increasing order}: $1_N.x_A^\varphi$, $2_N.x_A^\varphi$, etc;
\emph{public announcements} $!\varphi$, \emph{semi-public hacking} $!(a:b)$,  etc. 

\vspace{-4mm}

\section{Syntax and Semantics of $LDA$ and $LDAE$}\label{Logic}
\vspace{-2mm}

\par\noindent\textbf{Vocabularies: static and dynamic.} 
A \emph{static vocabulary} $\mathcal{V}=(\Agents, V, Prop, 
Pred, Funct, ar, \epsilon)$ consists of the following:\footnote{All the sets ($\Agents, V, Prop, 
Pred, Funct,\mathcal{E})$ in a vocabulary are assumed to be mutually disjoint.}
a finite set $\Agents$ of \emph{agents} $a, b, c, \ldots$, also called `locations' (e.g., websites, databases, processors, etc), where data are stored and processed;
a set $V$ of \emph{basic variables} $v$, $v', v'', \ldots$;
a set $Prop$ of \emph{atomic propositions} $p, q, \ldots$;
a set $Pred$ of \emph{predicate symbols} $P, Q, \ldots$, 
including 
\emph{equality} ($=$) and an \emph{order predicate} $\leq$; 
a set $Funct$ of \emph{function symbols} $F, G, \ldots$, including two \emph{constants} $\bot, \top$; 
an \emph{arity} map $ar: Pred\cup Funct\to \mathbb{N}$, sending predicate symbols $P\in Pred$ and function symbols $F\in Funct$ to natural numbers $ar(P), ar(F)\in \mathbb{N}$,  with $ar(\top)=ar(\bot)=0$ and $ar(=)=ar(\leq)=2$;
an auxiliary symbol $\epsilon\not\in \Agents\cup V\cup Prop\cup Pred\cup Funct$, denoting ``the \emph{environment}'' (e.g., the `envelope' in our example).

\smallskip 
\par A \emph{dynamic vocabulary} $(\mathcal{V}, \mathcal{E})$ is a pair consisting of a static vocabulary $\mathcal{V}$ and a \emph{countable} set $\mathcal{E}$ of \emph{`event names'} (denoted by $e, e', f, \ldots$), that includes some special symbols $!, ?, \tau$.

\smallskip 

\par\noindent\textbf{Groups} Given the static vocabulary $\mathcal{V}$, a \emph{group} $A\subseteq\Agents$ is any non-empty set of agents in $\Agents$. We use capital letters $A, B, \ldots$ to denote groups. 

\medskip

\par\noindent\textbf{Ordered Value Domains} A \emph{value domain} (for a given vocabulary $\mathcal{V}$) is a first-order model $\bD=(D, I)$ for $\mathcal{V}$, consisting of: a \emph{domain of `values'} $D$, of cardinality $|D|>1$; and an \emph{interpretation function} $I$, mapping each functional symbol $F$ of arity $n$ into a function $I(F)=F_\bD: D^n\to D$ (so that the constants $\bot$ and $\top$ are interpreted as values $\bot_\bD, \top_\bD\in D$), and each
relational symbol $P$ of arity $n$ into a set $I(P)=P_\bD\subseteq D^n$ of $n$-tuples of values in $D$, subject to the following conditions:
the equality symbol $=$ is interpreted as the \emph{identity relation} $=_\bD$ on $D$; and the order symbol $\leq$ is interpreted as \emph{some total order} $\leq_\bD\subseteq D\times D$ on the set $D$ having \emph{$\bot_\bD$ as its bottom element and $\top_\bD$ as its top element} (i.e., $\bot_\bD \leq_\bD\, d \leq_\bD \top_\bD$ for all $d\in D$).

\smallskip
\par\noindent\textbf{Minimization} Given an ordered value domain $\bD=(D, I)$, it is easy to see that \emph{every finite non-empty subset $D'\subseteq D$} \emph{has a unique minimum} $min\, D'\in D'$, s.t. $min\, D'\leq_\bD d'$ for all $d'\in D'$. 

\smallskip

\par\noindent\textbf{Notation: `Cutoff Minimization'} We can also consider, for any natural number $N\geq 1$, an ``$N$-cutoff'' version of minimization $min_N$, defined on \emph{all} sets $D'\subseteq D$ (including the infinite ones), by putting:
\vspace{-2mm}
\[
\begin{array}{lllll}
min_N \, D' \  \ & := \ & \top_\bD \,  (\mbox{`trivial'}), & \mbox{ when $D'=\emptyset$,}
\ \\
min_N \, D' \  \ & := \ \ & min\, D', & \mbox{ when  $1\leq |D'|\leq N$,} \,\, \mbox{ and}
\ \\
min_N \, D' \  \ & := \ \ & \bot_\bD \,  (\mbox{`undefined'}),  & \mbox{ otherwise (for $|D'|> N$).}
\vspace{-2mm}
\end{array}
\]
\par\noindent\textbf{Notation: `Cutoff' $n^{th}$ element} 
For all sets $D'\subseteq D$ and all $1\leq n\leq N$, we can now introduce a notation $n_N (D')$ for \emph{``the $n^{th}$ element of the $\leq N$ elements'' of $D$}, by putting:
\vspace{-2mm}
$$1_N (D')\, \, := \, \, min_N\, D', \,\,\,\,\,\,\,\,\,\,\,\,\,
(n+1)_N (D') \,\, :=\,\, min_{N-1}\, (D'- \{n_N (D')\})  \, \,\,\, \mbox{ for $1\leq n < N$}.
\vspace{-2mm}$$

\par\noindent\textbf{Epistemic State Models} 
An \emph{epistemic state model over a value domain $\bD=(D, I)$} is a tuple $\bM= (S, \sim, \underline{\bullet}(\bullet))$, where:
(i) $S$ is a set of \emph{states} (or `possible worlds'), typically denoted by $s, w, \ldots$; 
(ii) $\sim: \Agents\to \mathcal{P}(S\times S)$ maps agents $a\in\Agents$ to \emph{equivalence relations} $\sima\subseteq S\times S$,
called \emph{`indistinguishability' relations};
(iii) $\underline{\bullet}(\bullet):S\times (V\cup Prop) \to D$ is an \emph{assignment map} (`valuation'), mapping pairs $(s,v)\in S\times V$ into
arbitrary values $\underline{s}(v)\in D$, and mapping pairs $(s,p)\in S\times Prop$ into values $\us(p)\in\{\bot_\bD, \top_\bD\}$.
\smallskip
\par\noindent\textbf{Group indistinguishability} Given a state model $\bM= (S, \sim, \underline{\bullet}(\bullet))$, we define \emph{group indistinguishability relations} $\simA :=\bigcap_{a\in A} \sima$ on $S$ (for all groups $A\subseteq \Agents$) by \emph{taking intersections}.

\medskip
\par\noindent\textbf{Syntax of $LDAE$} For a fixed dynamic vocabulary $(\mathcal{V}, \mathcal{E})$, the \emph{(dynamic) Logic of (group) Data Access \& Exchange} ($LDAE$) has a syntax consisting of: (1) a set $Var:=Var (\mathcal{V})$ of compound \emph{variables} (or
`terms') $x$; (2) a set $Fml:=Fml(\mathcal{V})$ of \textit{formulas} $\varphi$;  (3) a set of \emph{`syntactic' event models}; (4) a set $Events$ of \emph{data-exchange events}.
These components are simultaneously defined by mutual recursion, as follows:
\\
\par (1) \& (2) \textbf{Variables and Formulas} 
The sets $Var$ and $Fml$ are given by the clauses:
\vspace{-2.5mm}
$$
\begin{array}{ccc ccc cc cc cc}
x & ::= &  
v &|&
F(\overline{x}) &|&\varphi\!\!\to\!\! x|x &|& \mu_N\, x_A^\varphi
&|&
\be(x)
\\
\varphi & :: = &  
p  &|&  P\overline{x}  &|& \varphi \to \varphi         &|&
K_A \varphi 
&|&
[\be] \varphi
\end{array}\vspace{-2.5mm}
$$ 
where: $v\in V$ are basic variables; $p\in Prop$ are atoms;
$\overline{x}=(x_1, \ldots, x_n)$ are tuples of terms; $P\in Pred$ are $n$-ary predicate symbols; $F$ are $n$-ary function symbols; $a\in \Agents$ are agents; $A\subseteq \Agents$ are groups; $N\geq 1$ are integers; and the \emph{events} $\be$  are technically ``pointed event models'' $(\bE,e)$, as defined below.
\\
\par (3) \& (4) \textbf{Event Models and Events} An \emph{event model $\bE=(E, \sim, \underline{\bullet}(\bullet))$} consists of:
(i) a finite set $E\subseteq \mathcal{E}$ of event names;
(ii) an equivalence relation $\sima\subseteq E\times E$ (\emph{agent $a$'s indistinguishability} over events)
for each $a\in\Agents$;
(iii) a \emph{change map} $\underline{e}(\bullet): Prop\cup V\cup \Agents\cup\{\epsilon\}\to  Fml\cup Var\cup \mathcal{P}(\Agents)$ for each event $e\in E$, 
mapping $\epsilon$ to a formula $\ue(\epsilon)\in Fml$, atoms $p\in Prop$ to formulas $\ue(p)\in Fml$, basic variables $v\in V$ to terms $\underline{e}(v)\in Var$, and agents $a\in \Agents$ to groups $\underline{e}(a)\subseteq \Agents$.
We require these items to satisfy \textit{two conditions}:
    \begin{enumerate}
\item \emph{Self-Access} (agents access their own database): $a\in \underline{e}(a)$;
\item \emph{Known Access} (agents know their sources): $e \sima f$ \mbox{ implies } 
$\underline{e}(a)=\underline{f}(a)$;
\end{enumerate}
As mentioned, a \emph{data-exchange event} is just a \emph{`pointed' event model} $\be=(\bE, e)$, i.e. a \emph{pair} of an event model $\bE=(E, \sim, \underline{\bullet}(\bullet))$ and an event name $e\in E$. We denote by $Events$ the set of data-exchange events.

\medskip

\par\noindent\textbf{The Static Logic $LDA$} The `static' fragment of our logic, called the \emph{(static) Logic of (group) Data Access} ($LDA$), consists of all formulas of $LDAE$ that are built without the use of dynamic operators $[\be]\varphi$ or $\be(x)$. 

\smallskip  

\par\noindent\textbf{Precondition and Postconditions} 
The \emph{precondition} of any event $\be=(\bE,e)$
is the formula $pre_\be:=\underline{e}(\epsilon)$, which intuitively gives \emph{the event's condition of possibility}: $\be$ can only happen in states satisfying its precondition $pre_\be$. The \emph{event $\be$'s postcondition for $p\in Prop$} is the formula $post_\be (p):=\ue(p)$; similarly,
the \emph{event's postcondition for $v\in V$} is the term $post_\be (v):= \ue(v)$. Intuitively, the postconditions determine the way an event changes the values of atoms and variables: the new value of variable $v$ (or atom $p$) after event $\be$ coincides with the `old' value of the formula $post_\be (p)$ (or the term $post_\be (v)$) before the event.

\smallskip  

\par\noindent\textbf{(Extended) Access Map}
\emph{Event $\be$'s access map} is the restriction of $\ue$ to $\Agents$, specifying for each agent $a\in\Agents$ the group 
$\ue(a)\subseteq \Agents$ of \emph{all sources/agents (whose locations/databases are) accessed by agent $a$ during the event}. We can also \emph{extend} the access map to \emph{groups} $A\subseteq \Agents$, by putting $\underline{e}(A):=\bigcup \{\underline{e}(a): a\in A\}$, for the set of \emph{sources that are distributedly accessible to the group $A$ during event $e$}.
\smallskip

\par\noindent\textbf{Subexpression-complexity} An \emph{expression} $\alpha$ is any formula $\varphi$, term $x$ or event $\be$ of $LDAE$. The \emph{sub-expression complexity order} $<$ is the least transitive relation s.t.: (1) every formula is $>$ its subformulas, and also $>$ all terms and events occurring in it; (2) every term is $>$ its subterms, and also $>$ all formulas and events in it; (3) every event $\be=(\bE,e)$ is $>$ all preconditions and postconditions of the form $pre_f$, $\underline{f}(p)$ and $\underline{f}(v)$ with $f\in E$, $p\in Prop$ and $v\in V$.
It is easy to see that \emph{$<$ is a well-founded partial order}.

\smallskip
\par\noindent\textbf{Semantics} We simultaneously define three  semantic notions 
on state models $\bM= (S, \sim, \underline{\bullet}(\bullet))$: 
\vspace{-1mm}
\begin{enumerate}
\item a \emph{satisfaction relation} $s\models_{\bM} \varphi$ between states $s\in S$ (in any state model $\bM$) and formulas $\varphi\in Fml$;
\item an \emph{extended assignment/valuation function} from $S\times (Var\cup Fml)$ to $D$, mapping $(state, variable)$-pairs $(s,x)\in S\times Var$
into arbitrary values $s(x)_\bM\in D$, and mapping  $(state, formula)$-pairs $(s,\varphi)\in S\times Fml$
into extreme values $s(\varphi)_\bM\in \{\bot_\bD, \top_\bD\}$. 
\item a \emph{product update operation}, mapping state models $\bM=(S,\sim,\underline{\bullet}(\bullet))$ and event models $\bE=(E,\sim,\underline{\bullet}(\bullet))$ into \emph{updated state models} $\bM\bigotimes \bE=(S\otimes E,\sim, \underline{\bullet}(\bullet))$ (over the same value domain $\bD$).
\end{enumerate}
\par\noindent For this definition, we need an auxiliary notation: for variables $x\in Var$, groups $A$, formulas $\varphi$ and states $s\in S$, the \emph{set of possible values of $x$ at state $s$ according to group $A$ conditional on $\varphi$} is
$$(x_A^\varphi)_{s,\bM} \, :=\, \{w(x)_\bM : w\simA s, w\models_\bM\varphi\}.
$$
With this notation, we define our three semantic notions by mutual recursion:
\[
\begin{array}{lllll}
s\models_\bM p \  \ & \mbox{iff} \ \ && \underline{s}(p)=\top_\bD\\
s\models_\bM P x_1\ldots x_n \  \ & \mbox{iff} \ \ && (s(x_1)_\bM, \ldots, s(x_n)_\bM)\in I(P)\\
s\models_\bM\varphi\to\psi \  \ & \mbox{iff} \ \ &&  s\models_\bM\varphi \mbox{ implies } s\models_\bM\psi
\\
s\models_\bM K_A \varphi \ \ & \mbox{iff} \ \ && w\models_\bM\varphi \mbox{ for all $w\simA s$}
\\
s\models_\bM [\be] \varphi \ \ & \mbox{iff} \ \ && (s,e)\in S\otimes E \, \mbox{ implies } (s,e)\models_{\bM\bigotimes \bE} \varphi,
\,\,\,\,\,
\mbox{ where } \be=(\bE,e).
\vspace{2mm}
\\
s(v)_\bM \ \ & = \ \ && \underline{s}(v) \,\,\,\,\,\,\,\,\,\, \,\,\,\,\, \,\,\,\,\, \mbox{ as given in the model $\bM$} \\
s(\varphi\!\!\to\!\! x|y)_\bM  \ \ & = \ \ && s(x)_\bM  \,\,\,\,\, \,\,\,\,\, \,\,\,\,\, \mbox{ if $s\models_\bM\varphi$, and }\\
s(\varphi\!\!\to\!\! x|y)_\bM  \ \ & = \ \ && s(y)_\bM  \,\,\,\,\, \,\,\,\,\, \,\,\,\,\, \mbox{ if $s\not\models_\bM\varphi$ }\\
s(F(x_1,\ldots, x_n))_\bM  \ \ & = \ \ && (I(F)) (s(x_1)_\bM, \ldots, s(x_n)_\bM)
\\
s(\mu_N x_A^\varphi)_\bM \ \ & = \ \ &&  min_N \, Val(x_A^\varphi)_{s,\bM} \\
s(\be(x))_\bM \ \ &  = \ \ && (s,e)(x)_{\bM\bigotimes \bE} \,\,\,\,\, \,\,\,\,\, 
\mbox{ if $\be=(\bE,e)$ is s.t. $(s,e)\in S\otimes E$, and}\\
s(\be(x))_\bM \ \ &  = \ \ && \top_\bD \,\,\,\,\,  \,\,\,\,\,  \,\,\,\,\,\,\,\,\,\,\,\,\,\,\, \,\,\,\,\,\,\,\,\,\, \,\,\,\,\,\,\,\,\,\,\mbox { otherwise.}
\\
s(\varphi)_\bM \ \ & = \ \ && \top_\bD  \,\,\,\,\, \,\,\,\,\, \,\,\,\,\, \mbox{ if $s\models_\bM\varphi$, and }\\
s(\varphi)_\bM \ \ & = \ \ && \bot_\bD  \,\,\,\,\, \,\,\,\,\, \,\,\,\,\, \mbox{ if $s\not\models_\bM\varphi$.}
\vspace{2mm}\\
\bM=(S, \sim, \underline{\bullet}(\bullet)), \,\, \bE=(E, \sim, \underline{\bullet}(\bullet)) \ \ &  \mapsto \ \ && \bM\bigotimes\bE=(S\otimes E, \sim, \underline{\bullet}(\bullet)), \,\,\, \,\,\,\,\,\,\, \mbox{  where: } 
\\\
S\otimes E \ \ & = \ \ && \{(s,e)\in S\times E \mid s\models_\bM pre_\be\}
\\\
(s,e)\sim_a (s', e')   \ \ & \mbox{iff} \ \ && s\sim_{\underline{e}(a)} s' \mbox{ and } e\sim_a e'
\\\
\underline{(s,e)} (p)  \ \ & =   \ \ &&  s (post_\be(p))_\bM
\\\
\underline{(s,e)}(v)  \ \ & = \ \ && s(post_\be (v))_\bM
\end{array}
\]

\medskip
\par\noindent\textbf{Extended assignment on sets of expressions} We can `lift' the assignment map to the level of \emph{sets} of variables $X\subseteq Var$ and \emph{sets} of formulas $\Phi\subseteq Fml$), by putting:
$s(X)_\bM \, :=\, \{s(x)_\bM : x\in X\}$; $s(\Phi)_\bM\, :=\, \{s(\varphi)_\bM : \varphi\in \Phi\}$.
Whenever the model is understood, we skip the subscript $\bM$, writing simply $s\models \varphi$, $s(x)$, $s(X)$ and $s(\Phi)$. For sets of formulas $\Phi$, we also write $s\models \Phi$ whenever $s\models\varphi$ for all $\varphi\in \Phi$. 
\smallskip

\par
\noindent\textbf{Abbreviations}. We define the usual Boolean connectives $true := (\top=\top)$, $false:= (\top=\bot)$,  
$\neg\varphi := (\varphi\to false)$, $\varphi\wedge \psi$, $\varphi\vee\psi$, $\varphi\leftrightarrow \psi$, as well as the dual existential (Diamond) modalities $\langle K_A\rangle \varphi  :=\neg K_A\neg \varphi$, $\langle K_A^\theta\rangle \varphi := \neg K_A^\theta\neg \varphi$, $\langle \be\rangle \varphi   := \neg [\be] \neg \varphi$. 
We also use the following abbreviations, for variables $x,y\in Var$, finite sets of variables $X,Y\subseteq Var$ and natural numbers $n,N$ with $1\leq n\leq N$:
\vspace{1mm}

\centerline{$x\in Y \, :=\, \bigvee\{ x=y: y\in Y\},  \,\,\, \,\,\,\,\,\, ?_\varphi \, :=\, \varphi\!\!\to\!\! \top|\bot, \,\,\, \,\,\,\,\,\,
\varphi\!\!\to\!\! x \, :=\, \varphi\!\!\to\!\! x|\top,$}
\vspace{1.5mm}
\centerline{$K_A^\theta \varphi \, :=\, K_A (\theta\to \varphi),  \,\,\, K_A^\theta x  \, :=\, K_A^\theta (x= \mu_1 x_A^\theta), \,\,\,\,\,
K_A x \,\, :=\,\, K_A^{true} x,  
$}
\vspace{1.5mm}
\centerline{$1_N.x_A^\theta \,\, :=\,\, \mu_N x_A^\theta,  \,\, \,\,\,
(n+1)_N.x_A^\theta \,\, :=\,\, \mu_{N-1} x_A^{\theta \wedge x> n_N.x_A^\theta}, $}
\vspace{1.5mm}
\centerline{$  |x|_A^\theta\leq 0 \,\, :=\,\, K_A \neg\theta,
 \,\,\,\,\, \,\,\,\,\, |x|_A^\theta \leq n \,\, :=\,\, K_A^\theta (x\in Var^n (x_A^\theta)), \,\,\,\, \mbox { where } Var^n(x_A^\theta):=
\{i_n.x_A^\theta : 1\leq i\leq n\},$}
\vspace{1.5mm}
\centerline{$|x|_A^\theta= 0 \,\, :=\,\, |x|_A^\theta\leq 0, \,\,\,\,\,
 |x|_A^\theta > n \,\, :=\,\, |x|_A^\theta\not\leq n, $}
 \vspace{1.5mm}
 \centerline{$
 |x|_A^\theta = (n+1) \,\, :=\,\, |x|_A^\theta> n \wedge |x|_A^\theta \leq (n+1)\, \mbox{ for $n\geq 0$},$}
 \vspace{1.5mm}
\centerline{$min \, \{x\} \, :=\,  x,  \,\,\,  min \, (X\cup \{y\})  \, :=\, (min \, X\leq y)\!\!\to\!\! (min\, X)|y.$}
\noindent Intuitively, $x\in Y$ says that $x$ currently takes the same value as some term in $Y$. The term $?_\varphi$ is a Boolean variable, taking value $\top_\bD$ if $\varphi$ is true, and value $\bot_\bD$ otherwise. The operator $\varphi\!\!\to\!\! x$ is the term analogue of material implication: it takes the value of $x$ if $\varphi$ is true, and value $\top_\bD$ otherwise. 
Next, \emph{`knowledge of value' is definable in our logic}, in both conditional and unconditional forms, for groups and individuals, via the abbreviations $K_A^\theta x$, $K_A x$, $K_A^\theta\ux$, $K_A\ux$. Note that, for the empty tuple $\lambda=()$, we have $K_A^\theta \lambda = \bigwedge \emptyset =\top$. The term $n.x_A^\theta$ denotes the $n^{th}$ value of $x$ (in $\leq_\bD$-order) considered possible by $A$, conditional on $\theta$. The expressions $|x|_A^\theta =n$, $|x|_A^\theta \leq n$ and $|x|_A^\theta >n$ refer to the cardinality of the set of possible values of $x$, according to group $A$, given $\theta$. Finally, $min\, X$ denotes the minimum value of all terms in $X$.

\medskip

\par\noindent \textbf{Distributed Location} For any \emph{expression}, i.e., any term, formula or event $\alpha\in Var\cup Fml\cup Events$, its \emph{(distributed) location} $\mathcal{L}(\alpha)\subseteq\Agents \cup \{\epsilon\}$ is
given by the following recursive clauses:
\vspace{1mm}

\centerline{$ \mathcal{L}(p)= \mathcal{L} (v) \, \, =\,\, \{\epsilon\}, \,\,\, \,\,\,\,  \, \,\,\,\,\, \, \,\,\,\,\,\, \,\,\,\,\,  \,\,\, \,\,\,\, \mathcal{L}(K_A \varphi)=
\mathcal{L}(\mu_N x_A^\varphi)\,\, =\,\, A,$}
\vspace{1mm}
\centerline{$\,\,\,\,\, \mathcal{L}(Px_1\ldots x_n)=\mathcal{L}(F(x_1, \ldots, x_n)) \,\, =\,\, \bigcup \{\mathcal{L}(x_i):1\leq i\leq n\},$}
\vspace{1mm}
\centerline{$\mathcal{L}(\varphi\to \psi)  \,\, =\,\, \mathcal{L}(\varphi)\cup \mathcal{L}(\psi),  \,\,\, \,\,\,\, \, \,\,\,\,\, \,\,\,\, \, \,\,\,\,\, \, \,\,\,\,\,\mathcal{L}(\varphi\!\!\to\!\! x|y)\,\, =\,\, \mathcal{L}(x)\cup \mathcal{L}(\varphi)\cup \mathcal{L}(y),
$}
\vspace{1mm}
\centerline{$
\mathcal{L}(\be)\,\, = \,\, \mathcal{L}(pre_\be), \,\,\, \,\,\,\, \, \,\,\,\,\, \, \,\,\,\,\,
\mathcal{L}([\be] \varphi) \,\, = \,\, \mathcal{L}(\be)\cup \underline{e}(\mathcal{L}(\varphi)),   \,\,\,\, \, \,\,\,\,\, \, \,\,\,\,\,
\mathcal{L}(\be (x)) \,\, = \,\, \mathcal{L}(\be)\cup \underline{e}(\mathcal{L}(x)),$}
\vspace{1mm}
\noindent where we used the extended access map $\underline{e}(A)$.
In particular, this gives us that $\mathcal{L}(\bot)=\mathcal{L}(\top)=\mathcal{L}(true)=\mathcal{L}(false)=\emptyset$, $\mathcal{L}(\neg\varphi)=\mathcal{L}(\varphi)$, $\mathcal{L}(\varphi\wedge \psi)= \mathcal{L}(\varphi)\cup \mathcal{L}(\psi)$ and $\mathcal{L}(K_A^\theta x)=A$. 

We can also extend the location map to \emph{finite sets of terms} $X\subseteq Var$, by putting:
$$\mathcal{L}(X) \,\, :=\,\, \bigcup_{x\in X} \mathcal{L}(x).$$
\par
\noindent\textbf{Locality} The values of terms or propositions having a distributed location within a group $A\subseteq \Agents$ are \emph{distributed knowledge among group $A$'s members}, as shown by the results below.
\begin{proposition}(\emph{Preservation})\label{Preservation} 
Let $s,w$ be states in a model $\bM$ s.t. $s\simA w$ for some group $A\subseteq\Agents$, let $\varphi$ be a static formula s.t. $\mathcal{L}(\varphi)\subseteq A$, and let $x\in Var$ be a static term s.t. $\mathcal{L}(x)\subseteq A$. Then we have:
\begin{enumerate}
\item $s\models_\bM \varphi$ iff $w\models_\bM \varphi$;
\item $s(x)_\bM=w(x)_\bM$.
\end{enumerate}
\end{proposition}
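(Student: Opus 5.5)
We argue by simultaneous induction on the well-founded subexpression-complexity order $<$, proving items 1 and 2 together for all static formulas $\varphi$ and static terms $x$ (with $\mathcal{L}(\varphi)\subseteq A$, resp. $\mathcal{L}(x)\subseteq A$) and all pairs $s\simA w$. Only static expressions occur, so the clauses for $[\be]\varphi$ and $\be(x)$ never arise. Both items must be proved together because terms contain formulas (via $\varphi\!\!\to\!\! x|y$ and $\mu_N x_B^\varphi$) and formulas contain terms (via $P\overline{x}$).

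\emph{Base cases.} For atoms $p$ and basic variables $v$ we have $\mathcal{L}(p)=\mathcal{L}(v)=\{\epsilon\}$. Since $\epsilon\notin\Agents$ and $A\subseteq\Agents$, these cases cannot satisfy the hypothesis, so they are vacuous. For $0$-ary function symbols, including $\bot$ and $\top$, the location is empty and the value $I(F)$ does not depend on the state. So the claim holds trivially.

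\emph{Compositional cases.} For $P x_1\ldots x_n$ and $F(x_1,\ldots,x_n)$, the location is the union of the $\mathcal{L}(x_i)$, so each $\mathcal{L}(x_i)\subseteq A$. The induction hypothesis gives $s(x_i)=w(x_i)$ for all $i$, which yields equal truth values, resp. equal values. For $\varphi\to\psi$ and $\varphi\!\!\to\!\! x|y$, the location is again a union, so the hypothesis applies to every component. Truth of $\varphi$ then agrees at $s$ and $w$, the same branch is selected at both states, and the selected terms have equal values.

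\emph{Modal cases.} These are the cases where the hypothesis genuinely matters, and I expect them to be the main point, though not a real obstacle. For $K_B\varphi$ and $\mu_N x_B^\varphi$ the location is exactly $B$, so $B\subseteq A$; note that no locality assumption on $\varphi$ or $x$ is needed here. Since $\simA=\bigcap_{a\in A}\sima\subseteq\bigcap_{b\in B}\sim_b=\simB$, we get $s\simB w$. Moreover $\simB$ is an intersection of equivalence relations, hence itself an equivalence relation. Therefore $\{u: u\simB s\}=\{u : u\simB w\}$. It follows that $s\models K_B\varphi$ iff $w\models K_B\varphi$. Likewise $(x_B^\varphi)_{s,\bM}=(x_B^\varphi)_{w,\bM}$, since this set depends only on the $\simB$-class of the state. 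Applying $min_N$ to equal sets gives equal values, so $s(\mu_N x_B^\varphi)=w(\mu_N x_B^\varphi)$.

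This exhausts the static syntax and completes the induction.
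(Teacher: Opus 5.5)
Your proposal is correct. The paper states this proposition without proof, and your simultaneous induction on subexpression complexity is exactly the routine verification it leaves implicit: the atom and basic-variable cases are vacuous since $\epsilon\notin\Agents$, the compositional cases follow from the induction hypothesis, and the $K_B\varphi$ and $\mu_N x_B^\varphi$ cases follow from $B\subseteq A$, $\simA\subseteq\simB$ and $\simB$ being an equivalence relation, with no appeal to locality of the embedded $\varphi$ or $x$.
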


\begin{corollary}\label{Local Knowledge} (\emph{Local Knowledge})
For formulas $\varphi$ and terms $x$, we have the following validities:
$$\models\, \varphi \to K_A \varphi, \,\,\, \mbox{ whenever  $\mathcal{L}(\varphi)\subseteq A$}; 
\,\,\,\,\,\,\,\,\,\,\,\,\, \,\,\,\,\,\,\,\,\,\,\,\,\, \,\,\,\,\,\,\,\,\,\,\,\,\, \,\,\,\,\,\,\,\,\,\,\,\,\, \models\, K_A x,  \,\,\, \mbox{ whenever  $\mathcal{L}(x)\subseteq A$.}$$
\end{corollary}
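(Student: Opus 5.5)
The statement to prove is Corollary~\ref{Local Knowledge}, and the plan is to derive both validities directly from Proposition~\ref{Preservation} by unfolding the semantics of $K_A$ and of the abbreviation $K_A x$. No induction is needed at this level; all the inductive work is in the Preservation result.

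\emph{First validity.} Take any model $\bM$ and state $s$ with $s\models\varphi$, where $\mathcal{L}(\varphi)\subseteq A$. For every $w\simA s$, part 1 of Proposition~\ref{Preservation} gives $w\models\varphi$. Hence $s\models K_A\varphi$.

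The proposition as stated covers only static formulas. For $LDAE$-formulas containing $[\be]$ or $\be(x)$, I would first check whether the preservation argument extends along the location clauses $\mathcal{L}([\be]\varphi)=\mathcal{L}(\be)\cup\underline{e}(\mathcal{L}(\varphi))$. Failing that, I would state the corollary for static expressions and recover the dynamic case later via the reduction axioms. The plan assumes the intended reading is the one the proposition supports.

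\emph{Second validity.} By the abbreviations, $K_A x = K_A^{true}x = K_A(true\to x=\mu_1 x_A^{true})$. So we must show that at every $w\simA s$, the value $w(x)$ equals $w(\mu_1 x_A^{true}) = min_1\, (x_A^{true})_{w,\bM}$.

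Since $\simA$ is an equivalence relation, the $\simA$-class of $w$ equals that of $s$. By part 2 of Proposition~\ref{Preservation}, every $u$ in this class has $u(x)=s(x)$. Therefore the set $(x_A^{true})_{w,\bM}=\{u(x): u\simA w\}$ is exactly the singleton $\{s(x)\}=\{w(x)\}$. It is non-empty because $w\simA w$, and it has size $1\leq N=1$, so $min_1$ of it is its unique element, $w(x)$.

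So $w\models x=\mu_1 x_A^{true}$ for all $w\simA s$, which gives $s\models K_A x$ at every state of every model.

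\emph{Main obstacle.} The only delicate point is the cutoff-minimization bookkeeping: one must check that the possible-value set is non-empty (reflexivity of $\simA$) and has at most one element, so that neither the $\top_\bD$ case nor the $\bot_\bD$ case of $min_N$ is triggered. Everything else is immediate from Proposition~\ref{Preservation}.
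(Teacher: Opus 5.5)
Your proof is correct and is exactly the intended argument. The paper gives no separate proof and treats the corollary as immediate from Proposition~\ref{Preservation}: part~1 gives $\varphi\to K_A\varphi$, and part~2 gives $K_Ax$, which unfolds to $K_A(true\to x=\mu_1 x_A^{true})$ and holds because the relevant value set at each $w$ is the non-empty singleton $\{w(x)\}$, so $min_1$ returns $w(x)$. You are also right that the corollary inherits the Proposition's static scope; if you want the dynamic version (which the soundness of $\mathbf{LDAE}$ tacitly uses, since Local Knowledge is extended there to the full language), extend the Preservation induction itself to $[\be]\varphi$ and $\be(x)$, using preservation of $pre_\be$ and the product-update clause $(s,e)\sim_b(w,e)$ iff $s\sim_{\underline{e}(b)}w$, rather than going through the reduction axioms, where you would also have to check that the static reducts still have location within $A$.
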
 

\subsection{Examples of Data-Exchange Events and Event Models}
\vspace{-2mm}

In \emph{`semi-public' event models} (with \emph{only one event}), it is common knowledge who can read whose data.

\smallskip

\par\noindent\textbf{Public Announcements}
For every formula $\varphi$, the event \emph{$!\varphi= (\bE_{!\varphi}, !)$ of publicly announcing $\varphi$} has an event model $\bE_{!\varphi}= (\{!\}, \sim, \underline{\bullet}(\bullet))$ whose only event name is the special symbol $!$, with identity $\sim_a =\{(!,!)\}$ as accessibility relation for every agent, and where we put $\underline{!}(p)=p$, $\underline{!}(v)=v$, $\underline{!}(a)=\{a\}$ and $\underline{!} (\epsilon)=\varphi$, hence $pre_{!\varphi}=\varphi$. This event encodes the dynamics of truthful public announcements \cite{Plaza}: the updated model $\bM \otimes \bE_{!\varphi}$ is (isomorphic to the one) obtained by deleting all the $\varphi$-worlds from the original model $\bM$ (and keeping everything else the same).

\smallskip

\par\noindent\textbf{Public and Semi-Public Sharing} (\emph{``Tell Us All You Know''}) For groups $A,B\subseteq\Agents$, $!(A:B)= (\bE_{!(A:B)}, !)$ is a semi-public event whose model $\bE_{!(A:B)}= (\{!\}, \sim, \underline{\bullet}(\bullet))$ has the same structure as  $\bE_{!\varphi}$, except for the access map and the precondition, which are given by putting: $\underline{!}(a)=B\cup\{a\}$ for all $a\in A$: $\underline{!}(a)=\{a\}$ for $a\not\in A$; and 
$\underline{!} (\epsilon)=true$ (so the precondition $pre_{!(A:B)}=true$ is tautological). In this action, it is common knowledge that \emph{all agents in $A$ gain access to the databases of all agents in $B$}. Its effect is to replace all the relations $\sima$ (with $a\in A$) from the original model $\bM$ by the new relations
$\sima^{!(A:B)} := \sima \cap \simB$ in the updated model $\bM\otimes \bE_{!(A:B)}$, while keeping everything else unchanged (including the relations $\sim_a$ with $a\not\in A$). A special case is semi-public sharing $!(a: b)$ from $b$ to $a$, obtained by taking $A=\{a\}$ and $B=\{b\}$: it is common that $b$ shares all his knowledge with $a$.
Another special case is \emph{$A$-public sharing} $!B:= !(\Agents: B)$, which is obtained by taking $A=\Agents$: all agents in $B$ publicly share all their information. An even more special case is \emph{$b$-public sharing} $!b=!\{b\}= !(\Agents:\{b\})$, obtained by taking $A=\{a\}$ for some designated agent $a$ (and $B=\Agents$): agent $a$ publicly ``tells all she knows''.\footnote{This corresponds to a dynamics 
introduced and axiomatized in this form in \cite{Baltag2010}, though technically isomorphic to the action of public resolution of an issue/question \cite{BenthemMinica1}.}

\smallskip

\par\noindent\textbf{Sharing \emph{between} Multiple Groups} For groups $A_1, B_1, \ldots, A_n, B_n$, the event $!(A_1:B_1, \ldots, A_n: B_n)$ is a semi-public event, whose model $\bE_{!(A_1:B_1, \ldots, A_n: B_n)}= (\{!\}, \sim, \underline{\bullet}(\bullet))$ has the same structure as $\bE_{!(A:B)}$, except that the access map is given by $\underline{!}(a)=\{a\}\cup \bigcup \{B_i: i\leq n, a\in A_i\}$ for all $a$: it is common knowledge that \emph{all agents in each group $A_i$  gain access to the databases of all agents in the corresponding group $B_i$}.

\smallskip

\par\noindent\textbf{Sharing \emph{within} Groups} For groups $A_1, \ldots, A_n\subseteq \Agents$, the event of \emph{parallel sharing within different groups} \linebreak $!(A_1, \ldots, A_n)=!(A_1:A_1, \ldots, A_n: A_n)$ is the special case of  $!(A_1:B_1, \ldots, A_n: B_n)$ where $B_i=A_i$. In this action, it is common knowledge that \emph{all agents in each group $A_i$ simultaneously share all their knowledge with all other agents in the same group $A_i$}. A very special case is $n=1$, which represents the \emph{resolution} action $!(A)$: it is common knowledge that all the agents in $A$ share their information with each other.\footnote{The special case $\mathbf{!({\cal A})}$ was introduced in \cite{Baltag2010}, and axiomatized in different contexts in \cite{Boddy2014,Goldbach2015,BBS2016}. The general case $\mathbf{!(G)}$ was studied (as ``\emph{group resolution}'') in \cite{AgotnesWang2017}.}

\smallskip

\par\noindent\textbf{Public Hacking}  $!H_{a:b}$ (as in the \emph{WikiLeaks} case). It is common knowledge that \textit{agent $a$ `hacks' agent $b$'s database, using her knowledge of $b$'s password} (represented by some variable $v_b$), \textit{and makes all $b$'s data public}. Everybody gets to see all $b$'s data (including the value of his password $v_b$), but only $a$ and $b$ knew the password beforehand. The model is the same as for public sharing $!b$, except for the precondition, which is $\it{pre}_{!}:=K_a v_b\wedge K_b v_b$: this event happens only if $a$ knew $b$'s password $v_b$ before the event.

\smallskip

\par\noindent\textbf{Semi-public Hacking} This time, it is common knowledge that \textit{agent $a$ hacks agent $b$'s database 
(using her prior knowledge of his password $v_b$) and can read all his data; but the others cannot read these data}. The event model is similar to the one for semi-public sharing $!(a:b)$ (and in particular, the access map is the same), except that the precondition is $K_a v_b\wedge K_b v_b$ (as in the case of public hacking $!H_{a:b}$).

\smallskip

\par\noindent\textbf{Semi-public Change of Password} $!(v_a:= F(v_a, v'_a))$. It is common knowledge that \emph{$a$ changes her password $v_a$ to a 
value $F(v_a, v'_a)$, that is a function of her current password $v_a$ and of another (secret) local variable $v'_a$}. The encryption function $F$ is common knowledge, but the values of the former password $v_a$ and of the other secret number $v'_a$ are known (hopefully) only by $a$. The event model is similar for $!\varphi$, except that the precondition is $\it{pre}_{!}= K_a v_a\wedge K_a v'_a$, and the postcondition for $v_a$ is $\underline{!}(v_a)= F(v_a, v'_a)$.

\smallskip

Using larger event models, we can represent various forms of \emph{private and semi-private data-exchanges}.

\smallskip

\par\noindent\textbf{Secret Hacking}  Agent $a$ may be hacking
$b$'s database iff she got hold of $b$'s \emph{password $v_b$}. Only the hacker ($a$) knows whether or not she succeeded to get $v_b$. We represent this event $SH_{a:b}= (\bE, !)$ using a model $\bE= (\{!, \tau\}, \sim, \underline{\bullet}(\bullet))$ with two events $!$ (for \emph{successful hacking}) and $\tau$ (\emph{unsuccessful hacking}). The preconditions are $\it{pre}_{!}:= K_a v_b \wedge K_b v_b$ and $\it{pre}_{\tau}:= \neg K_a v_b \wedge K_b v_b$. The access maps are $\underline{!}(a):=\{a,b\}$, $\underline{\tau}(a)=\{a\}$, and $\underline{!}(c)=\underline{\tau}(c)=\{c\}$ for all $c\not=a$; all postconditions are given by identity; $a$'s accessibility is the identity relation, and the others' relations are the universal relation.

\smallskip

\par\noindent\textbf{Conditional Change of Password} $!(K_a |v_a|_b^{true} \leq N /v_a:= F(v_a, v'_a))$. It is common knowledge that \emph{$a$ changes her password $v_a$} (as in the semi-public change) \emph{only iff she knows that $b$ has succeeded to narrow down her possible passwords to $N$ possibilities}. This is an event $(\bE, !)$, whose event model $\bE= (\{!, ?\}, \sim, \underline{\bullet}(\bullet))$ has two event names $!$ (for \emph{changing $a$'s password}) and $?$ (for \emph{no change}). The preconditions are $\it{pre}_{!}:= K_a |v_a|_b^{true} \leq N$ and $\it{pre}_{?}:= \neg K_a |v_a|_b^{true} \leq N$. The postconditions for $v_a$ are 
$\underline{!} (v_a):=  F(v_a, v'_a)$ and $\underline{?} (v_a):= v_a$, while all other postconditions and access map are given by identity: nobody gains access to others' databases (since the hacking is prevented by password-change). Finally, agent $a$'s accessibility is the identity relation, while all others' accessibility is the universal relation.

\smallskip

\par\noindent\textbf{Secret Detection of Hacking} We can modify the secret hacking example $SH_{a:b}$ to allow the possibility that $b$ might \emph{detect} $a$'s hacking attack (so that he might come to know that he is being hacked). Only $b$ knows whether he actually detects an attack. This event $(\bE, !)$ has a model $E=\{!,?,\tau\}$ with \emph{three} event names $!$ (detected hacking), $?$ (successful, undetected hacking) and $\tau$ (unsuccessful hacking). The access maps for $!$ and $\tau$ are as in the model for $SH_{a:b}$, while the access map for $?$ is the same as for $!$; and the same goes for $pre_!$, $pre_\tau$ and $pre_?$. Besides loops for all agents, we also have $!\sima ?$ (i.e. $a$ doesn't know whether his hacking is detected or not) and $? \simb \tau$ ($b$ can't distinguish between undetected hacking and unsuccessful hacking), while $\simc$ is the universal relation for all outsiders  $c\neq a,b$. 
\vspace{-4mm}

\section{Axiomatization and Decidability}\label{Axiomatization LDAE}
\vspace{-2mm}

We first look at the \emph{static fragment} $LDA$: its proof system $\mathbf{LDA}$ is in Table \ref{tb0}.

\begin{table}[h!]
\centering
\begin{tabular}{>{\raggedright\arraybackslash}p{4cm} >{\raggedright\arraybackslash}p{11cm}}
\hline
\textbf{(I)} & \textbf{Axioms and rules of classical propositional logic (CPL):}
\\
Modus Ponens rule & \& \,\, \, all instances of CPL axioms in the language of $LDAE$
\\
 \textbf{(II)} & \textbf{Axioms for equality and order}:
\end{tabular}
\\
\begin{tabular}{>{\raggedright\arraybackslash}p{4cm} >{\raggedright\arraybackslash}p{11cm}}
(Indiscernability) & $\ux=\uy \to (P\ux \uz \leftrightarrow P\uy\uz)$
\ \\
(Functionality) & $\ux=\uy \to F(\ux)=F(\uy)$
\ \\
(Definition by Cases) & $\varphi \, \to\, \left( x= (\varphi\!\!\to\!\! x|y)\right)$, \, \,\,\,\,\,\,\,\,\,\,\,\, 
$\neg\varphi \, \to \, \left(y= (\varphi\!\!\to\!\! x|y)\right)$
\ \\
(Transitivity) & $(x\leq y \wedge y\leq z)\, \to \, x\leq z$
\ \\
(Anti-symmetry) & $(x\leq y\wedge y\leq x) \, \to \, x=y$
\ \\
(Totality) & $x\leq y \vee y\leq x$
\ \\ 
(Top and Bottom) & $\bot \leq x\leq \top$
\ \\
(Non-trivial Constants) & $\top \neq \bot$
\end{tabular}
\\
\begin{tabular}{>{\raggedright\arraybackslash}p{4cm} >{\raggedright\arraybackslash}p{11cm}}
\textbf{(III)} & \textbf{Axioms and rules for (distributed) knowledge}:
\end{tabular}
\\
\begin{tabular}{>{\raggedright\arraybackslash}p{4cm} >{\raggedright\arraybackslash}p{11cm}}
 (Necessitation) &  From $\varphi$, infer $K_A \varphi$
 \ \\
(Distribution) & $K_A (\varphi\to \psi)\, \to \, (K_A \varphi\to K_A\psi)$
\ \\
(Veracity) & $K_A\varphi \, \to \, \varphi$
\ \\
(Local Knowledge) & $\varphi \, \to \, K_A\varphi$, \,\,\,\,\, \,\,\,\,\,\,\,\,\,\,\,\,\,\,\, for  $\mathcal{L}(\varphi)\subseteq A$
\end{tabular}
\\
\begin{tabular}{>{\raggedright\arraybackslash}p{4cm} >{\raggedright\arraybackslash}p{11cm}}
\textbf{(IV)} & \textbf{Axioms for (cutoff) minimum value}:
\end{tabular}
\\
\begin{tabular}{>{\raggedright\arraybackslash}p{4cm} >{\raggedright\arraybackslash}p{11cm}}
(Lower Bound) & $\varphi \, \to\, \mu_N x_A^\varphi \leq x$ 
\ \\
(Trivial Minimum) &
$K_A\neg\varphi \, \to \, \mu_N x_A^\varphi =\top
$
\ \\
(Undefined Minimum) & 
$|x|_A^\varphi>N \, \to \, \mu_N x_A^\varphi =\bot$
\\
(Reaching the Minimum) &
$\left(\varphi \wedge K_A^\varphi (x\!\in Y) \right)\, \to\,  \langle K_A^\varphi\rangle \left(\mu_N x_A^\varphi =x\right)$,
\,\,\,  for $|Y|\leq N$ s.t. $\mathcal{L}(Y)\subseteq A$
\ \\
\hline
\end{tabular}
\caption{The proof system  
$\mathbf{LDA}$ for the static fragment, using abbreviations 
$K_A^\theta\varphi$, $\langle K_A^\theta\rangle \varphi$, $x\in Y$, $|x|_A^\varphi> N$.
}\label{tb0}
\vspace{-2mm}
\end{table}

\begin{proposition}\label{theorems}
The following theorems are provable in the system $\mathbf{LDA}$:
\vspace{-1mm}\begin{enumerate}
\item (Propositional Introspection) \,\, \,\, $K_A\varphi \, \to \, K_A K_A\varphi$;\,\, \,\, $\neg K_A\varphi \, \to \, K_A\neg K_A\varphi$;
\item (Term Introspection) \,\, $K_A x$,\,\,\,\,\,\,\,\,\, \,\,\,\,\, for $\mathcal{L}(x)\subseteq A$;
\item (Group Monotonicity)  \,\, $K_A \, \varphi \to \, K_B \varphi$, \,\,\,\,\,\,\,\,\,\, \,\,\,\,\, for $A\subseteq B$.
\end{enumerate}
\end{proposition}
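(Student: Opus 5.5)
All three items follow from the Local Knowledge axiom, instantiated with formulas whose location is computed by the recursive clauses for $\mathcal{L}$. I will use three location facts: $\mathcal{L}(K_A\varphi)=A$; $\mathcal{L}(\neg\varphi)=\mathcal{L}(\varphi)$, since $\mathcal{L}(false)=\emptyset$; and $\mathcal{L}(x=y)=\mathcal{L}(x)\cup\mathcal{L}(y)$.

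\emph{(1) Propositional Introspection.} Positive introspection is just Local Knowledge applied to the formula $K_A\varphi$: since $\mathcal{L}(K_A\varphi)=A\subseteq A$, we get $K_A\varphi\to K_AK_A\varphi$. For negative introspection, the formula $\neg K_A\varphi$ also has location $A$, so Local Knowledge gives $\neg K_A\varphi\to K_A\neg K_A\varphi$ directly. No appeal to an S5-style derivation from Veracity is needed.

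\emph{(3) Group Monotonicity.} Assume $A\subseteq B$. The formula $K_A\varphi$ has location $A\subseteq B$, so Local Knowledge (for group $B$) gives $K_A\varphi\to K_BK_A\varphi$. Veracity gives $\vdash K_A\varphi\to\varphi$. Necessitation for $K_B$ then yields $K_B(K_A\varphi\to\varphi)$, and Distribution turns this into $K_BK_A\varphi\to K_B\varphi$. Chaining the two implications gives $K_A\varphi\to K_B\varphi$.

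\emph{(2) Term Introspection.} This is the main obstacle, because we must unfold the abbreviation. We have $K_Ax=K_A^{true}x=K_A(true\to x=\mu_1x_A^{true})$, so it suffices to derive $K_A(x=\mu_1 x_A^{true})$.

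First I derive $x=\mu_1x_A^{true}$ itself. Lower Bound gives $\mu_1x_A^{true}\le x$. For the reverse direction, use Reaching the Minimum with $N=1$ and $Y=\{x\}$; this is allowed because $|Y|=1$ and $\mathcal{L}(Y)=\mathcal{L}(x)\subseteq A$. The antecedent is $true\wedge K_A^{true}(x\in\{x\})$, i.e. $K_A(true\to x=x)$, which is provable by reflexivity of equality (from Anti-symmetry and Totality) and Necessitation. Hence we obtain $\langle K_A\rangle(\mu_1x_A^{true}=x)$.

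Next I transport this back to the actual state. The formula $\mu_1x_A^{true}=x$ has location $A\cup\mathcal{L}(x)=A$, and so does its negation. Local Knowledge gives $x\neq\mu_1x_A^{true}\to K_A(x\neq\mu_1x_A^{true})$, which contradicts $\langle K_A\rangle(\mu_1x_A^{true}=x)$. Therefore $x=\mu_1x_A^{true}$ is a theorem.

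Finally, Necessitation applied to $true\to x=\mu_1x_A^{true}$ gives $K_Ax$. The only delicate points are the location computations and checking that the side condition of Reaching the Minimum is met.
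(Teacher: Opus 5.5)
Your proposal is correct. The paper states Proposition~\ref{theorems} without proof, so there is no official argument to compare against. Your derivations are the natural ones it leaves implicit:
\begin{itemize}
\item for (1), direct instances of Local Knowledge, using $\mathcal{L}(K_A\varphi)=\mathcal{L}(\neg K_A\varphi)=A$;
\item for (3), Local Knowledge for $B$ followed by Veracity, Necessitation and Distribution;
\item for (2), Reaching the Minimum with $N=1$ and $Y=\{x\}$, followed by the Local Knowledge contrapositive $\langle K_A\rangle\psi\to\psi$ for $\mathcal{L}(\psi)\subseteq A$.
\end{itemize}

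Two cosmetic remarks on (2). First, the Lower Bound step is never used, since Reaching the Minimum already gives the equation $\mu_1x_A^{true}=x$ inside the diamond, not just an inequality. Second, passing between $\mu_1x_A^{true}=x$ and $x=\mu_1x_A^{true}$ tacitly uses symmetry of $=$. This follows from Indiscernability together with the reflexivity you correctly derived from Totality and Anti-symmetry.
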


\begin{theorem}\label{CompLKV} (\emph{Soundness, Completeness and Decidability of $\mathbf{LDA}$})
The proof system $\mathbf{LDA}$
is complete for the static fragment $LDA$. Moreover, the static logic $LDA$ is decidable.
\vspace{-1mm}\end{theorem}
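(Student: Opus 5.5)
Soundness is a routine check of each axiom against the semantics, with Proposition~\ref{Preservation} handling (Local Knowledge). The only delicate case is (Reaching the Minimum): given $\varphi$ and $K_A^\varphi(x\in Y)$ with $|Y|\le N$ and $\mathcal{L}(Y)\subseteq A$, Preservation makes the values of $Y$ constant across the $\simA$-class of $s$. Hence the set $(x_A^\varphi)_{s}$ is non-empty and has at most $N$ elements, so its minimum is attained at some $\varphi$-world $w\simA s$, and $\mu_N x_A^\varphi$ is itself $A$-local. For completeness and decidability together, I will prove a finite-quasi-model property and then unravel the quasi-model into a real model.

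\textbf{Step 1 (filtration).} Fix a consistent static $\varphi_0$. I take a finite closure set $\Sigma$ containing the subformulas and subterms of $\varphi_0$. It is closed under negation, under the defining unfoldings of the abbreviations $|x|_A^\theta\le n$ and $i_n.x_A^\theta$, and under all equalities and order atoms $y=z$, $y\le z$ between the finitely many terms involved. The states are the $\Sigma$-restricted maximal consistent sets. I define $\Gamma\to_A\Delta$ iff $\Gamma$ and $\Delta$ agree on all $K_A\psi\in\Sigma$, on all formulas of $\Sigma$ with location $\subseteq A$, and on all equalities and orders between $A$-local terms. These relations are equivalences, but $\to_A$ need not be the intersection of the $\to_a$. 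The truth lemma for $K_A$ follows from (Veracity), introspection (Proposition~\ref{theorems}) and (Local Knowledge). The crucial point is that each $\mu_N x_A^\theta$ is $A$-local, so its equality pattern is shared across $\to_A$-classes. The minimum axioms then give, in each class, a \emph{witness} state for each description whose value is neither $\top$ nor $\bot$. This is a quasi-model: variables carry only a syntactic equality/order type, not values.

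\textbf{Step 2 (unraveling).} I unravel the quasi-model into a tree of paths labelled by groups. The standard distributed-knowledge trick applies: an edge labelled $B$ is taken to be an $a$-edge for all $a\in B$, and $\sim_A$ is defined as the intersection of the individual reflexive–transitive–symmetric closures. This makes $\sim_A$ coincide with paths all of whose labels are supersets of $A$, which project to $\to_A$-related quasi-states by group monotonicity. Each node must then receive actual values. I take as domain the equivalence classes of pairs (node, term) under the least equivalence that satisfies two conditions:
\begin{itemize}
\item it identifies terms equated within a node;
\item it identifies $(u,y)$ with $(u',y)$ whenever $y$ is $A$-local and $u,u'$ are joined by an $A$-labelled edge.
\end{itemize}
This is the ``hitching a ride'' mechanism via definite descriptions. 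Function symbols are interpreted by their $\Sigma$-types, and the remaining values are filled in freely. The partial order generated by the local order atoms is extended to a total order by the Order Extension Principle, with $\bot$ and $\top$ placed at the ends.

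\textbf{Main obstacle.} The hard part is the consistency of this global identification: tree-acyclicity must ensure that no two terms declared distinct (or strictly ordered) at some node become identified or reversed through chains of local transfers. I expect to prove this by induction on paths, showing that the induced equality/order type at each node agrees with the node's own MCS. I also have to check that $\min_N$ of the \emph{real} value sets matches the syntactic $\mu_N$. Upper bounds on cardinality come from $K_A^\theta(x\in Var^n)$, and the witnesses from (Reaching the Minimum) guarantee the minimum is attained. The lower bounds $|x|_A^\theta>N$ need distinct realized values, which the unraveling must duplicate faithfully. Decidability follows because consistency reduces to the existence of a finite quasi-model over $\Sigma$, whose size is bounded by a function of $|\varphi_0|$ and which can be checked effectively.
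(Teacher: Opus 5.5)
Your architecture matches the paper's, and your ``main obstacle'' is resolved by the paper's Path Lemma~\ref{PathLemma}: out-and-back detours along an $A$-edge are mediated by $A$-local terms on which both types agree. However, the two steps that carry the real content are missing.

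\textbf{The undefined-minimum case.} Suppose $[h,\mu_N x_A^\varphi]=\bot_\bD$ and $\bot_\bD\notin Val_h(x_A^\varphi)$. Then you must produce $N+1$ pairwise distinct values, and the unravelling does not give them for free. The type does contain $|x|_A^\varphi>N$, but that yields only a single $\varphi$-successor on which $x$ avoids the descriptions $i_N.x_A^\varphi$, and these may all be $\bot$. Moreover, values at two successors collapse as soon as $x$ equals the same $A$-local term at both. The paper's idea (Lemma~\ref{Undefined-Min}) runs as follows:
\begin{itemize}
\item derive $K_A^\varphi(\mu_N x_A^\varphi\neq x)\in last(h)$;
\item iterate the contrapositive of (Reaching the Minimum), i.e.\ condition~(\ref{Reaching the Min}) on types, with a growing set $Y=\{y_1,\ldots,y_n\}$, where $y_n$ is an $A$-local term equal to $x$ at the previously built node (or $\bot$ if none exists);
\item this builds a chain $h_0\to_A\cdots\to_A h_N$ of $\varphi$-nodes.
\end{itemize}
Along such a pure $A$-chain, $(h_m,x)\approx(h_n,x)$ forces a single $A$-local $y$ with $y=x$ at both ends (Corollary~\ref{EasyEqual}). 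That clashes with $x=y_{m+1}$ at $h_m$, $x\neq y_{m+1}$ at $h_n$, and the $A$-agreement of the two types on $y=y_{m+1}$. For this to hold in \emph{arbitrary} quasi-models, as your decidability claim needs, $\Sigma$ must contain $K_A^\varphi(z\in Y)$ and $\langle K_A^\varphi\rangle(z\in Y)$ for all $Y\subseteq Var_\Sigma$. Your closure only has the instances $Y=Var^n(x_A^\theta)$.

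\textbf{Function symbols.} ``Interpreted by their $\Sigma$-types'' presupposes that your gluing is a congruence: $[h,\ux]=[h',\uy]$ must imply $[h,F(\ux)]=[h',F(\uy)]$. With a finite closure this can fail. Take one agent $a$ and two types, each the $\Sigma$-theory of one world of a two-world, $a$-universal structure in which $F$ is read locally:
\begin{itemize}
\item $\Delta_0\ni v=\top,\ F(v)=\top,\ v'\neq\top$;
\item $\Delta_1\ni v'=\top,\ F(v')=\bot,\ v\neq\top$.
\end{itemize}
If $F(\top)\notin Var_\Sigma$, the only $a$-local formulas are $K_a$-formulas and comparisons of $\top,\bot$, so $\Delta_0\sima\Delta_1$. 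Then $(h,v)$ and $(h',v')$ are glued through $\top$, while $F(v)$ and $F(v')$ land on $\top_\bD$ and $\bot_\bD$. In fact $\{\Delta_0,\Delta_1\}$ meets all the quasi-model conditions for $v=\top\wedge F(v)=\top\wedge\langle K_a\rangle(v'=\top\wedge F(v')=\bot)$. This formula is unsatisfiable, because $I(F)$ and $\top_\bD$ are rigid. The same configuration also occurs in the canonical quasi-model of the consistent formula $v=\top\wedge F(v)=\top\wedge\langle K_a\rangle(v'=\top)\wedge F(v')=F(v')$.

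This affects the paper too: its Interpretation Lemma~\ref{IntLemma}(1) asserts exactly this congruence, and the example refutes it. So your plan needs an extra ingredient before the model is even well defined. Options include a (still finite) closure of $Var_\Sigma$ under $F$ applied to the mediating local terms, or a strengthened relation on types forcing $F$ to behave rigidly on them.
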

\begin{proof}
\vspace{-2mm}
\emph{Completeness} is shown in Section \ref{A}, using Prop \ref{theorems}. \emph{Soundness} is trivial, except for the 'Reaching the Minimum' axiom, whose soundness we sketch here. Suppose that $s\models \varphi \wedge K_A^\varphi (x\!\in Y)$ for some $Y\subseteq Var$ s.t. $|Y|\leq N$ and $\mathcal{L}(Y)\subseteq A$. To show that $s\models  \langle K_A^\varphi\rangle \left(\mu_N x_A^\varphi =x\right)$, we first prove the following:

\par\emph{Claim:} $Val(x_A^\varphi)_{s,\bM}\subseteq s(Y)_\bM$. 

To show this, let $d\in Val(x_A^\varphi)_{s,\bM}$, i.e. there is some $w\simA s$ s.t. $w\models\varphi$ and $w(x)=d$. Since $s\models K_A^\varphi (x\!\in Y)$, we infer that $w\models (x\!\in Y)$, hence $w(x)\in w(Y)$. But, given that $w\simA s$ and $\mathcal{L}(Y)\subseteq A$, we have that $w(Y)=s(Y)$ (by 
Proposition \ref{Preservation}). So we obtain $d=w(x)\in w(Y)=s(Y)$. Since this holds for any $d\in Val(x_A^\varphi)_{s,\bM}$, we conclude that $Val(x_A^\varphi)_{s,\bM}\subseteq s(Y)$, thus establishing our Claim.

Using the above Claim and the fact that $|Y|\leq N$, we have that $|Val(x_A^\varphi)_{s,\bM}|\leq N$. Since we also have that $s(x)\in Val(x_A^\varphi)_{s,\bM}\neq \emptyset$ (since $s\models\varphi$), we obtain that $s(\mu_N x_A^\varphi)= min_N Val(x_A^\varphi)_{s,\bM}= min\, Val(x_A^\varphi)_{s,\bM}\in Val(x_A^\varphi)_{s,\bM}$ (by the semantic clause for $\mu_N$ and the definition of $min_N$), and hence 
$s(\mu_N x_A^\varphi)= w(x)$ for \emph{some} $w\simA s$ with $w\models \varphi$. Applying again Proposition \ref{Preservation}, we have 
$w(\mu_N x_A^\varphi)= s(\mu_N x_A^\varphi)=w(x)$ (since $\mathcal{L}(\mu_N x_A^\varphi)=A$). This, together with  $w\simA s$ and $w\models \varphi$, yields $s\models  \langle K_A^\varphi\rangle \left(\mu_N x_A^\varphi =x\right)$, as desired.
\end{proof} 

\smallskip

As for the full \emph{dynamic logic} $LDAE$, we need a few more notations and results to state our axioms.

\smallskip
 
\par\noindent\textbf{Possible Values after an Event} Recall that $Val(x_A^\varphi)_{s, \bM} =\{w(x)_\bM: w\simA s, w\models_\bM \varphi\}$ is the set of possible values at state $s$ according to $A$ given $\varphi$. The following (easily checked) result gives us a characterization of the corresponding set of possible values of $x$ (according to $A$ given $\varphi$) \emph{after an event $\be$}:
\begin{lemma}\label{Post-Values}
$Val(x_A^\varphi)_{(s,\be), \bM\otimes \bE}= \bigcup_{f\simA e} Val \left(\mathbf{f}(x)_{\underline{f}(x)}^{\langle \mathbf{f} \rangle \varphi}\right)_{s,\bM}$
\end{lemma}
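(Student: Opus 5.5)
The plan is to unfold both sides of the identity directly from the semantic clauses for the product update and for the dynamic term constructor $\be(x)$, and then show the two sets coincide by double inclusion. I read the subscript in the right-hand side as the group $\underline{f}(A)$ (the sources distributedly accessed by $A$ during $f$), and the superscript $\langle\mathbf{f}\rangle\varphi$ as the dynamic formula for the pointed event $\mathbf{f}=(\bE,f)$. So the right-hand side is the set of values $w(\mathbf{f}(x))_\bM$ with $f\simA e$ (where $\simA=\bigcap_{a\in A}\sima$ on events), $w\sim_{\underline{f}(A)} s$ and $w\models_\bM\langle\mathbf{f}\rangle\varphi$.

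\textbf{Preliminary facts.} The first is that for any groups $B_a$ ($a\in A$) we have $\bigcap_{a\in A}\sim_{B_a}=\sim_{\bigcup_a B_a}$. This holds because group relations are intersections of individual relations. It yields
\[(w,f)\simA(s,e) \iff f\simA e \text{ and } w\sim_{\underline{e}(A)} s\]
in $\bM\otimes\bE$. The second is that whenever $f\simA e$, \emph{Known Access} gives $\underline{f}(a)=\underline{e}(a)$ for every $a\in A$, hence $\underline{f}(A)=\underline{e}(A)$. This is the only place where the structural conditions on event models are used. I expect it to be the main (though small) subtlety, since it is what allows the relation to be indexed by $\underline{f}(A)$ rather than $\underline{e}(A)$.

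\textbf{Inclusion $\subseteq$.} Take $d=(w,f)(x)_{\bM\otimes\bE}$ with $(w,f)\simA(s,e)$ and $(w,f)\models\varphi$. By the first fact, $f\simA e$ and $w\sim_{\underline{e}(A)}s$, and by the second fact $w\sim_{\underline{f}(A)}s$. Since $(w,f)\in S\otimes E$, we have $w\models pre_{\mathbf f}$. Together with $(w,f)\models\varphi$, the clause for $[\mathbf f]$ gives $w\models\langle\mathbf f\rangle\varphi$. Finally, the clause for $\mathbf f(x)$ (the case $(w,f)\in S\otimes E$) gives $w(\mathbf f(x))_\bM=(w,f)(x)_{\bM\otimes\bE}=d$.

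\textbf{Inclusion $\supseteq$.} Reverse these steps. From $w\models\langle\mathbf f\rangle\varphi$ we get both $(w,f)\in S\otimes E$ and $(w,f)\models\varphi$. From $f\simA e$, $w\sim_{\underline{f}(A)}s$ and Known Access we get $(w,f)\simA(s,e)$. The value equation then follows as before. Note that the `otherwise $\top_\bD$' case of $\mathbf f(x)$ never arises, since $\langle\mathbf f\rangle\varphi$ already forces $(w,f)\in S\otimes E$.
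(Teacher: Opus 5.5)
Your proposal is correct and is essentially the routine unfolding of the product-update and $\be(x)$ clauses that the paper leaves as ``easily checked'' (it gives no written proof). This includes reading the subscript $\underline{f}(x)$ as the intended $\underline{f}(A)$, and using Known Access to identify $\underline{f}(A)$ with $\underline{e}(A)$ for $f\simA e$. The only point you leave implicit is $(s,e)\in S\otimes E$: it is presupposed by the left-hand side being defined, and in the $\supseteq$ direction it is needed so that $(w,f)\simA(s,e)$ relates two states of $\bM\otimes\bE$.
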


\par\noindent\textbf{Counting the Possible Values after an Event} We want a formula expressing the fact that \emph{there will be at most $N$ possible values of $x$} (according to $A$ given $\varphi$) \emph{after the event $\be$}. Moreover, we want to express this \emph{at the current state} (before the event). Put now $Var_e^N (x_A^\varphi) :=
\{n_N \mathbf{f}(x)_{\underline{f}(A)}^{\langle \mathbf{f}\rangle \varphi} : f\simA e, n\leq N\}$.
Semantically, given Lemma \ref{Post-Values}, it should be clear that, \textit{if $|Val(\mathbf{f}(x)_{\underline{f}(A)}^{\langle f\rangle \varphi})_{s,\bM}|\leq N$ holds for all events $f\simA e$, then $Val(x_A^\varphi)_{(s,\be), \bM\otimes \bE}\subseteq s(Var_e^N (x_A^\varphi))_\bM$}. 
However, \emph{this inclusion might be strict}: it can happen that \emph{not all of the values in  $s(Var_e^N (x_A^\varphi))_\bM$ are possible values} of $x$ according to $A$ (given $\varphi$, after the event $e$). The problem is that whenever $|Val(\mathbf{f}(x)_{\underline{f}(A)}^{\langle f\rangle \varphi})_{s,\bM}| < N$ 
for \emph{some }$f\simA e$, we get a possibly 'fake' $x$-value $N_N. \mathbf{f}(x)_{\underline{f}(A)}^{\langle \mathbf{f}\rangle \varphi}=\top_\bD$. So, for any $z\in Var_e^N (x_A^\varphi)$,
its condition of possibility is given by the formula:
\vspace{-2mm}
$$\Diamond z \,\, :=\,\, \left(z=\top \, \to\, \bigvee_{f\simA e} \langle K_{f(A)}^{\langle f\rangle\varphi} \rangle \mathbf{f}(x)=\top \right).
\vspace{-4mm}$$  
\vspace{-2mm}
Thus, for any \emph{subset $Z\subseteq Var_e^N (x_A^\varphi)$}, the formula
\vspace{-3mm}
$$|Z|^\Diamond \leq N \,\, :=\,\, \bigvee_{Y\subseteq Z, |Y|\leq N} \, \bigwedge_{z\in Z} \left( \Diamond z \to \bigvee_{y\in Y} z=y\right)
\vspace{-2mm}$$
says that \emph{the number of possible values of $x$ in $Z$ (according to $A$ given $\varphi$) is at most $N$}. Finally, by applying this to the whole set 
$Z= Var_e^N (x_A^\varphi)$ above, we obtain the desired formula:

\begin{lemma}\label{Counting}
Let $s$ be a state in a state model $\bM$, let $\be$ be an event in an event model $\bE$ with $s\models_\bM pre_\be$, and let $x$,$A$, $\varphi$ be s.t. $|Val(f(x)_{\underline{f}(A)}^{\langle \mathbf{f}\rangle \varphi})_{s,\bM}|\leq N$ holds for \emph{every} $f\simA e$. Then we have the equivalences:
\vspace{-2mm}
$$s\models_\bM |Var_e^N (x_A^\varphi)|^\Diamond\leq N \,\, \,\,\mbox{ iff }\,\, \,\, |Val(x_A^\varphi)_{(s,\be), \bM\otimes \bE}|\leq N \,\, \,\,\mbox{ iff } \,\, \,\,
(s,e)\models_{\bM\otimes\bE} |x_A^\varphi|\leq N.
$$
\end{lemma}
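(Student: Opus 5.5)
The second equivalence is immediate from the semantics. It only requires unfolding the abbreviation $|x|_A^\varphi\leq N$, i.e. $K_A^\varphi(x\in Var^N(x_A^\varphi))$, at $(s,e)$ in $\bM\otimes\bE$.

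If $|Val(x_A^\varphi)_{(s,e)}|\leq N$, then the terms $i_N.x_A^\varphi$ enumerate that set in increasing order via the cutoff $n^{th}$-element operators. So every $\varphi$-world $A$-accessible from $(s,e)$ gives $x$ one of these values, and by Proposition \ref{Preservation} these terms (located at $A$) have the same values there as at $(s,e)$.

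Conversely, if the $K_A^\varphi$ statement holds, the set of possible values is contained in the value set of $N$ terms. Hence it has size at most $N$, exactly as in the Claim in the soundness proof of Theorem \ref{CompLKV}.

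The substantive part is the first equivalence. First I would use Lemma \ref{Post-Values} to write $Val(x_A^\varphi)_{(s,e)}=\bigcup_{f\simA e} V_f$, where $V_f:=Val(\mathbf{f}(x)_{\underline{f}(A)}^{\langle\mathbf{f}\rangle\varphi})_{s,\bM}$. By hypothesis $|V_f|\leq N$ for each $f$. So for each $f$ the terms $n_N.\mathbf{f}(x)_{\underline{f}(A)}^{\langle\mathbf f\rangle\varphi}$ (with $n\leq N$) take, at $s$, exactly the elements of $V_f$ in increasing order, followed by padding values $\top_\bD$ once $V_f$ is exhausted. This follows from the definitions of $min_N$ and $n_N$: removing elements lowers the cutoff, so the cutoff never triggers $\bot_\bD$ and an empty remainder yields $\top_\bD$.

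The key intermediate claim is then: for $z\in Var_e^N(x_A^\varphi)$, we have $s\models\Diamond z$ iff $s(z)\in Val(x_A^\varphi)_{(s,e)}$. If $s(z)\neq\top$, then $z$ is a genuine element of some $V_f$, so it is a genuine value, and $\Diamond z$ holds vacuously. If $s(z)=\top$, then $\top$ is a genuine value iff $\top\in V_f$ for some $f\simA e$. That is the case iff some $w\sim_{\underline f(A)}s$ with $w\models\langle\mathbf f\rangle\varphi$ has $w(\mathbf f(x))=\top$, which is exactly the disjunct $\langle K_{f(A)}^{\langle f\rangle\varphi}\rangle\mathbf{f}(x)=\top$.

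Here the main subtlety is the case where $s(z)=\top$ is merely padding but $\top$ also occurs genuinely in some other $V_{f'}$. Then $\Diamond z$ is true, which is harmless, because the value $\top$ really is possible. So the claim is really about values, not about terms, and I would phrase it that way.

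Given the claim, $s(\{z\in Z: s\models\Diamond z\})=Val(x_A^\varphi)_{(s,e)}$ with $Z=Var_e^N(x_A^\varphi)$. Now $s\models|Z|^\Diamond\leq N$ says that some $Y\subseteq Z$ with $|Y|\leq N$ covers the values of all $\Diamond$-true $z$. That holds iff this value set has cardinality at most $N$: choose one representative term per value for one direction, and count the values of $Y$ for the other.

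I expect the main obstacle to be verifying carefully that the cutoff terms $n_N$ enumerate $V_f$ with only $\top_\bD$ padding. This needs the recursive definition $(n+1)_N(D')=min_{N-1}(D'-\{n_N(D')\})$. One has to check by induction that the nested cutoffs never exceed the remaining cardinality, and also that the term abbreviation $(n+1)_N.x_A^\theta$, with the extra condition $x>n_N.x_A^\theta$, matches this set-theoretic definition. A further point to check there is that, by preservation, $n_N.x$ is constant across the $A$-class.
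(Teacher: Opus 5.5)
Your proposal is correct and takes the same route the paper sketches in the paragraph before the lemma. The paper gives no formal proof: you decompose via Lemma \ref{Post-Values}, observe that the $n_N$-terms enumerate each $V_f$ in increasing order with $\top_\bD$-padding, and use $\Diamond z$ to discard the fake $\top_\bD$ values, phrasing the $\Diamond$-claim at the level of values rather than terms, which is the right way to handle padding that coincides with a genuine $\top_\bD$.

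One caution: check the enumeration directly on the term abbreviation $(n+1)_N.x_A^\theta=\mu_{N-1}x_A^{\theta\wedge x>n_N.x_A^\theta}$, whose condition removes all earlier values at once because the $n_N$-term is constant on the class. Do not try to show it matches the printed set-level recursion $min_{N-1}(D'-\{n_N(D')\})$: that recursion deletes only one element and does not enumerate. For instance, for $D'=\{d_1<d_2<d_3\}$ and $N=3$ it returns $d_1$ as the third element.
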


Using these notations, the proof system $\mathbf{LDAE}$ for our full dynamic logic is given in Table \ref{tb1}.
\begin{table}[h!]
\centering
\begin{tabular}{>{\raggedright\arraybackslash}p{3.5cm} >{\raggedright\arraybackslash}p{11.6cm}}
\hline
\textbf{(I)} & \textbf{Static Axioms and rules of $\mathbf{LDA}$}
\\
All $\mathbf{LDA}$ rules \,\, \& \, all & $\mathbf{LDA}$ axiom schemas extended to formulas of the full language $LDAE$
\ \\
\end{tabular}
\\
\begin{tabular}{>{\raggedright\arraybackslash}p{3.5cm} >{\raggedright\arraybackslash}p{11.5cm}}
 \textbf{(II)} & \textbf{Reduction axioms and rules for prop.} {\bf formulas}:
\end{tabular}
\\
\begin{tabular}{>{\raggedright\arraybackslash}p{3.5cm} >{\raggedright\arraybackslash}p{11.5cm}}
($[e]$-Necessitation) &  From $\varphi$, infer $[\be]\varphi$
 \ \\
 (Change of Facts) 
  & $[\be] p \, \, \leftrightarrow \,\, \left( pre_\be \to 
  post_\be (p)\right)$
 \ \\
 (Change of Properties) 
 & $[\be] P\ux \,\, \leftrightarrow \,\, \left( pre_\be \to P\be(\ux) \right)$
\ \\
(Distributivity) & $[\be] (\varphi\to \psi) \, \leftrightarrow \, \left([\be]\varphi\to [\be]\psi\right)$
\ \\
(Knowledge Update) & $[\be] K_A\varphi \, \, \leftrightarrow \,\, \left( pre_\be \to \bigwedge \{ K_{\underline{e}(A)} [{\mathbf{f}}] \varphi : f\!\!\simA\! e\}\right)$
\ \\
\end{tabular}
\\
\begin{tabular}{>{\raggedright\arraybackslash}p{3.5cm} >{\raggedright\arraybackslash}p{11.5cm}}
\textbf{(III)} & \textbf{Reduction axioms for data terms}:
\end{tabular}
\\
\begin{tabular}{>{\raggedright\arraybackslash}p{3.5cm} >{\raggedright\arraybackslash}p{11.5cm}}
(Basic Value Change) & $\be(v)\, \, =\, \, \left(pre_\be\!\!\to\!\!
post_\be(v)\right)$ \ \\
(Functional Change) & $\be(F(\ux))  \, \, = \, \, \left(pre_\be\!\!\to\!\! F(\be(\ux))  \right)$ \ \\
(Change of Cases) & $\be(\varphi\!\!\to\!\!x|y) \,\,  =\, \,  \left( [\be]\varphi\!\!\to\!\! \be(x)|\be(y)   \right)$
\ \\
(Cutoff-Min. Change) 
 & $\be(\mu_N x_A^\varphi) \, \, =\, \,
\left(pre_\be\!\!\to\!\! \left(|Var_e^N (x_A^\varphi)|^\Diamond \leq N \!\!\to\!\! 
min \{\mu_N {\mathbf{f}}(x)_{\underline{e}(A)}^{\langle {\mathbf{f}}\rangle \varphi}:f\!\!\simA\! e\}
|\bot \right) \right)$
\ \\ 
\hline
\end{tabular}
\caption{The proof system  
$\mathbf{LDAE}$, 
where $\be=(\bE,e), {\mathbf{f}}=(\bE,f)\in Events$.
}\label{tb1}
\vspace{-2mm}
\end{table}

\begin{proposition}\label{theorems2}
The following derived reduction laws are provable in the system $\mathbf{LDAE}$:
\begin{itemize}
\item (Impossible Change)\, $[\be] false \, \leftrightarrow \, \neg pre_e$; and $[\be] true \,\leftrightarrow \, true$;
\item (Negation \& Conjunction Reduction)\, $[\be] \neg\varphi \, \leftrightarrow \,  \left(pre_e\to \neg [\be]\varphi \right)$; and
$[\be](\varphi\wedge \psi) \, \leftrightarrow \,  \left([\be]\varphi\wedge [\be]\psi\right)$;
\item\, (Preservation of Constants) $\be(\top) = \top$; and $\be(\bot) \, =\, (pre_e\!\! \to\!\! \bot)$;
\item\,  (Minimum Reduction) $\be(min\, X) \,=\, min\, \{\be(x): x\in X\}$.
\end{itemize}
\end{proposition}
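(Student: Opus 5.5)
Each of the four items will be derived syntactically in $\mathbf{LDAE}$, using the static part of $\mathbf{LDA}$ (which is available for the full language) together with the reduction axioms of Table \ref{tb1}. We first establish the two propositional items, and then use them for the two term items.

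\textbf{Propositional items.} For (Impossible Change), write $false$ as $(\top=\bot)$, which is an atomic predicate formula. (Change of Properties) gives $[\be]false\leftrightarrow(pre_\be\to \be(\top)=\be(\bot))$. Since $\top$ is a $0$-ary function symbol, (Functional Change) gives $\be(\top)=(pre_\be\!\!\to\!\!\top)$ and $\be(\bot)=(pre_\be\!\!\to\!\!\bot)$.

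Now reason by cases on $pre_\be$ using (Definition by Cases):
\begin{itemize}
\item if $pre_\be$ holds, these terms equal $\top$ and $\bot$ respectively, so by Indiscernability the equality $\be(\top)=\be(\bot)$ is equivalent to $\top=\bot$, which is refuted by (Non-trivial Constants);
\item if $pre_\be$ fails, both terms equal $\top$.
\end{itemize}
Hence $\vdash \be(\top)=\be(\bot)\leftrightarrow\neg pre_\be$. Then $[\be]false\leftrightarrow(pre_\be\to\neg pre_\be)\leftrightarrow\neg pre_\be$.

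The clause $[\be]true$ follows from $[e]$-Necessitation applied to $true$. Negation reduction follows by unfolding $\neg\varphi=\varphi\to false$ and applying (Distributivity): we get $[\be]\neg\varphi\leftrightarrow([\be]\varphi\to\neg pre_\be)$, which is propositionally equivalent to $pre_\be\to\neg[\be]\varphi$. Conjunction, defined from $\to$ and $\neg$, then follows by combining (Distributivity) with negation reduction, via routine CPL manipulation.

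\textbf{Preservation of Constants.} By the case analysis above, $\be(\bot)=(pre_\be\!\!\to\!\!\bot)$ is exactly (Functional Change). For $\be(\top)$, (Functional Change) gives $(pre_\be\!\!\to\!\!\top)$, and the term $\varphi\!\!\to\!\!\top$ abbreviates $\varphi\!\!\to\!\!\top|\top$. Both cases of (Definition by Cases) give the value $\top$, so $\be(\top)=\top$.

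\textbf{Minimum Reduction.} We argue by induction on the construction of $min\,X$. The base case $X=\{x\}$ is trivial. For the step, $min(X\cup\{y\})=(min X\leq y)\!\!\to\!\!(min X)|y$, and (Change of Cases) gives
$$\be(min(X\cup\{y\}))=[\be](min X\leq y)\!\!\to\!\!\be(min X)|\be(y).$$
(Change of Properties) rewrites $[\be](min X\leq y)$ as $pre_\be\to \be(min X)\leq\be(y)$, and the induction hypothesis replaces $\be(min X)$ by $min\{\be(x):x\in X\}$. Substitution of provably equal terms inside the case construct follows from Indiscernability and (Definition by Cases).

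This leaves a case split on $pre_\be$:
\begin{itemize}
\item \textbf{if $pre_\be$ holds}, the condition is exactly $min\{\be(x)\}\leq\be(y)$, and we obtain the desired $min(\{\be(x):x\in X\}\cup\{\be(y)\})$;
\item \textbf{if $pre_\be$ fails}, the condition $pre_\be\to\ldots$ is true, so the term equals $\be(min X)$. Here we need that every $\be(z)$ equals $\top$ when $\neg pre_\be$.
\end{itemize}

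\textbf{Main obstacle.} This last auxiliary lemma, $\neg pre_\be\to \be(z)=\top$ for all terms $z$, is the step I expect to be hardest. I would first try a subinduction on $z$ using the term reduction axioms. The clauses for $v$, for $F$, and for cutoff-min all have the outer form $pre_\be\!\!\to\!\!\cdots$, so they yield $\top$ directly. For case terms, $\be(\varphi\!\!\to\!\!x|y)$ reduces via (Change of Cases) to $\be(x)$ or $\be(y)$, both $\top$ by induction. Alternatively, it is cleaner to note that, in the $\neg pre_\be$ case, both sides of the Minimum Reduction equation are $\top$, by Top and Bottom and the min unfolding.
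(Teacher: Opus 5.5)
Your derivations of Impossible Change, Negation Reduction and Preservation of Constants are fine. One small correction concerns Conjunction. Distributivity plus Negation Reduction only give $[\be](\varphi\wedge\psi)\leftrightarrow(pre_\be\to([\be]\varphi\wedge[\be]\psi))$. Dropping the antecedent needs $\neg pre_\be\to[\be]\chi$, which is not CPL: it comes from $[e]$-Necessitation on $false\to\chi$, Distributivity and Impossible Change.

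\textbf{The gap in Minimum Reduction.} The genuine gap is the auxiliary fact $\neg pre_\be\to\be(z)=\top$. You need it for each added member $y$ of $X$: under $\neg pre_\be$ your left side is $m:=min\{\be(x):x\in X\}$, and the right side is $(m\leq\be(y))\!\!\to\!\! m|\be(y)$.
\begin{itemize}
\item Your subinduction covers $z=v$, $F(\ux)$, $\mu_N x_A^\varphi$ and case terms. The grammar also has terms $z=\be'(w)$, and no axiom of Table~\ref{tb1} has a left-hand side of the form $\be(\be'(w))$.
\item Reducing $\be'(w)$ to a static $w'$ first does not help. Transferring $\vdash\be'(w)=w'$ through $\be(\cdot)$ needs a congruence rule. $[e]$-Necessitation and Change of Properties only give $\vdash pre_\be\to\be(\be'(w))=\be(w')$, and the $\neg pre_\be$ half is exactly the fact at issue.
\item Your fallback does not close this. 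Top and Bottom give only $\be(x)\leq\top$. Under $\neg pre_\be$ the left side unfolds by Change of Cases to $\be(x_1)$ for the first-listed $x_1$. That agrees with the right-hand minimum only once the values $\be(x)$ are already pinned down.
\end{itemize}

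\textbf{Why the displayed axioms do not settle it.} As far as I can check, Table~\ref{tb1} does not pin these values down. At states where $pre_\be$ fails, evaluate $\be(z)$ as a state-independent value $g_\be(z)$, with:
\begin{itemize}
\item $g_\be(z)=\top_\bD$ for $z$ headed by $v$, $F$ or $\mu_N$;
\item $g_\be(\varphi\!\!\to\!\! x|y)=g_\be(x)$;
\item $g_\be(\be'(w))=\bot_\bD$.
\end{itemize}
All axioms then remain sound:
\begin{itemize}
\item Under $\neg pre_\be$, the right-hand sides of Basic Value, Functional and Cutoff-Min Change have the form $pre_\be\!\!\to\!\!\cdots$. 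Change of Cases picks $\be(x)$, since $[\be]\varphi$ is vacuously true.
\item Under $pre_\be$ nothing changes, since Cutoff-Min Change only evaluates $\boldf(x)$ at worlds satisfying $\langle\boldf\rangle\varphi$.
\item Change of Properties only inspects $\be(\ux)$ under $pre_\be$.
\item Preservation, and hence Local Knowledge and Reaching the Minimum, survive because $\mathcal{L}(pre_\be)\subseteq\mathcal{L}(\be(z))$.
\end{itemize}
Yet for $X=\{v,\be'(w)\}$ with $v$ listed first, the left side evaluates to $\top_\bD$ and the right side to $\bot_\bD$.

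So your argument establishes Minimum Reduction for every $X$ none of whose members leads, through case branches, to an event-headed term; in particular it covers all static $X$. The fully general law needs an extra principle. One option is an axiom $\neg pre_\be\to\be(x)=\top$. Another is a congruence rule for $\be(\cdot)$, like the one asserted later in Lemma~\ref{re}(8), though its $\neg pre_\be$ case faces the same obstacle.
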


\begin{theorem}\label{CompDLKV} (Soundness, Completeness, Expressivity and Decidability of $\mathbf{LDAE}$)
The proof system $\mathbf{LDAE}$ in Table \ref{tb1} is sound and complete for the dynamic logic $LDAE$. Moreover, $LDAE$ is provably co-expressive with its static fragment $LDA$, and thus it is decidable. 
\vspace{-1mm}\end{theorem}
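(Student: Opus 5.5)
The argument follows the standard DEL reduction strategy, relying on Theorem \ref{CompLKV} for the static base. The plan has three parts: soundness, a translation eliminating dynamic operators, and the transfer of completeness and decidability.

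\emph{Soundness.} The static axioms of Table \ref{tb0} stay sound on the full language, because the semantics of $LDAE$-formulas is again evaluated on ordinary state models (product updates are state models). Change of Facts, Change of Properties, Distributivity, the term reductions for $v$, $F$ and cases, and $[e]$-Necessitation follow directly from the product-update clauses. For Knowledge Update we use that $(s,e)\sim_A(s',f)$ iff $s\sim_{\underline{e}(A)}s'$ and $f\simA e$, together with Known Access, which gives $\underline{e}(A)=\underline{f}(A)$. The delicate case is Cutoff-Min.\ Change, which we split by whether $|Val(\mathbf{f}(x)^{\langle\mathbf f\rangle\varphi}_{\underline{f}(A)})_{s}|\le N$ holds for every $f\simA e$.
\begin{itemize}
\item If it holds for every $f$, Lemma \ref{Counting} says the guard $|Var_e^N(x_A^\varphi)|^\Diamond\le N$ is equivalent to $|Val(x_A^\varphi)_{(s,e)}|\le N$. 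Lemma \ref{Post-Values} then shows that the minimum of the union equals the minimum of the componentwise cutoff-minima. Empty components contribute $\top$, which is harmless under $min$ unless all components are empty; in that case both sides are $\top$.
\item If it fails for some $f$, the post-event value set has more than $N$ elements, so the left side is $\bot$. On the right, the corresponding $\mu_N$ term is $\bot$, so the $min$ is $\bot$, and $\bot$ is also the else-branch. Either way both sides agree.
\end{itemize}
Lemma \ref{Post-Values} also requires care with conditioning on $\langle\mathbf f\rangle\varphi$ and with the groups $\underline{f}(A)=\underline{e}(A)$.

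\emph{Translation.} Next I define a translation $t$ from $LDAE$ to $LDA$. It is the identity on static atoms and commutes with $\to$, $K_A$, $F$, cases and $\mu_N$. It pushes $[\be]$ and $\be(\cdot)$ inward using the reduction axioms, together with the derived laws of Proposition \ref{theorems2}: for example $[\be]\neg$, $[\be]\wedge$, $\be(\top)$ and $\be(\bot)$, and $\be(min\,X)$, which is needed because Cutoff-Min.\ Change introduces $min$-terms.

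To show that $t$ is well defined we need a complexity measure that decreases at each rewrite step. The plain subexpression order $<$ is not enough, since the right-hand sides contain $[\mathbf f]\varphi$, $\langle\mathbf f\rangle\varphi$ inside conditions, and $\mu_N\mathbf f(x)$ terms. I would assign weights in which an event operator counts exponentially more than static structure, for instance $c([\be]\alpha)=(c(\be)+4)\cdot c(\alpha)$ for a suitable base. I would then check clause by clause that each right-hand side is lighter, and handle nested events by translating innermost first. For innermost translation, a dynamic operator applied to an already static argument must reduce to a static formula, so I would prove by induction on the static argument that each reduction axiom strictly lowers the weight.

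Finally, I would show by the same induction that $\vdash_{\mathbf{LDAE}}\alpha\leftrightarrow t(\alpha)$ for formulas, and $\vdash t(x)=x$ for terms. This uses replacement of provable equivalents inside $[\be]$, obtained from $[e]$-Necessitation and Distributivity, and inside $\mu_N$ and $K_A$ contexts, obtained via Necessitation together with the equality axioms. Replacement inside the condition $\varphi$ of $\mu_N x_A^\varphi$ needs a small lemma: if $K_A(\varphi\leftrightarrow\psi)$, then $\mu_N x_A^\varphi=\mu_N x_A^\psi$. This follows from Lower Bound, Reaching the Minimum, Trivial Minimum and Undefined Minimum, and I expect it to be the fiddly static step.

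\emph{Completeness and decidability.} Completeness then follows. If $\models\alpha$, soundness of the equivalence gives $\models t(\alpha)$, so Theorem \ref{CompLKV} gives $\vdash_{\mathbf{LDA}}t(\alpha)$, and hence $\vdash_{\mathbf{LDAE}}\alpha$. Co-expressivity is exactly the translation. Decidability follows because $t$ is computable and $LDA$ is decidable. The main obstacle is the soundness of Cutoff-Min.\ Change, especially matching the $\Diamond z$ guard against the spurious $\top$ values; the termination measure is the secondary one.
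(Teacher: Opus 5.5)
Your route is the paper's: the same case analysis for the soundness of Cutoff-Min.\ Change, then reducibility of every $LDAE$-expression to a static one, then transfer of completeness and decidability from Theorem~\ref{CompLKV}. In the soundness part, add the trivial case $s\not\models pre_\be$: there both sides are $\top$, and Lemma~\ref{Counting} does not apply.

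\textbf{The termination weight should go.} With a measure of the form $c([\be]\alpha)=(c(\be)+4)\cdot c(\alpha)$, the right-hand side of Cutoff-Min.\ Change is \emph{not} lighter. Several parts of it duplicate material:
\begin{itemize}
\item the guard $|Var_e^N(x_A^\varphi)|^\Diamond\leq N$, a disjunction over subsets $Y$ of conjunctions over $z$;
\item the recursive terms $n_N.(\cdot)$;
\item the $min$ abbreviation, which duplicates its argument at each step.
\end{itemize}
Together these contain a number of copies of $\mathbf{f}(x)$ and $\langle\mathbf{f}\rangle\varphi$ that grows with $|E|$ and $N$. Each copy of $\mathbf{f}(x)$ weighs $(c(\mathbf{f})+4)\cdot c(x)$. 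The left side, by contrast, is $(c(\be)+4)\cdot c(\mu_N x_A^\varphi)$, and $c(\mu_N x_A^\varphi)$ does not depend on the event model.

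What you actually need is the induction on the static argument that you mention at the end, and it uses no weight; this is the paper's Lemma~\ref{one-step reduction}. For a reducible $\be=(\bE,e)$, prove by induction on the static $\theta$ and $x$, simultaneously for all $\mathbf{f}=(\bE,f)$ with $f\in E$, that $[\mathbf{f}]\theta$ and $\mathbf{f}(x)$ are reducible. This works because every right-hand side applies event operators of the same model only to proper subexpressions of the argument. The induction hypothesis, reducibility of the $pre_f$, and Replacement of Equivalents (Lemma~\ref{re}) then assemble a static equivalent. An outer induction on subexpression complexity (Lemma~\ref{reduction}) makes your innermost-first strategy precise.

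\textbf{Congruence lemma for $\mu_N$.} This is worth keeping. The paper states Lemma~\ref{re} without proof, and part~3 is exactly what substitutes the static $\varphi_f,y_f$ for $\langle\mathbf{f}\rangle\varphi,\mathbf{f}(y)$. It is derivable as you claim, but not by first proving $|x|_A^\varphi>N\leftrightarrow|x|_A^\psi>N$: those abbreviations are built from the very terms $i_N.x_A^\varphi$, $i_N.x_A^\psi$ under comparison, so that route is circular. Instead, use that Reaching the Minimum accepts \emph{any} $Y$ with $|Y|\leq N$ and $\mathcal{L}(Y)\subseteq A$. Assume $K_A(\varphi\leftrightarrow\psi)$.
\begin{itemize}
\item If $K_A\neg\varphi$, both terms are $\top$.
\item If $\langle K_A\rangle\varphi$ and one of $|x|_A^\varphi\leq N$, $|x|_A^\psi\leq N$ holds, its witness set serves both conditions. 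Apply Reaching the Minimum at an $A$-accessible $\varphi$-world, then Lower Bound and Local Knowledge, to get $\mu_N x_A^\varphi=\mu_N x_A^\psi$.
\item Otherwise both terms are $\bot$ by Undefined Minimum.
\end{itemize}

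\textbf{Static axioms on dynamic instances.} The observation that product updates are state models does not cover dynamic instances of Local Knowledge and Reaching the Minimum, since their side conditions refer to $\mathcal{L}$. For these you need Proposition~\ref{Preservation} extended to dynamic expressions. The extension goes by induction, using $\mathcal{L}([\be]\varphi)=\mathcal{L}(\be)\cup\underline{e}(\mathcal{L}(\varphi))$, the analogous clause for $\be(x)$, and the definition of $\sim_a$ on $\bM\otimes\bE$. The paper also passes over this under ``soundness is trivial''.
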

\begin{proof}
\vspace{-2mm}
\emph{Completeness} is shown in Section  \ref{B}. \emph{Soundness} is trivial, except for the `Cutoff-Min' reduction axiom, which we sketch here. Let $\bM$ be any state model, and $s$ be any state. We consider \emph{two cases}:
\medskip

\noindent\emph{Case 1:} $s\not\models pre_\be$. In this case, both terms of the equality claimed in the axiom take value $\top_\bS$ at $s$ in $\bM$.
\smallskip

\noindent\emph{Case 2:}  $s\models pre_\be$. In this case, $s(\be(\mu_N x_A^\varphi))_\bM= (s,e)(\mu_N x_A^\varphi))_{\bM\otimes \bE}= min_N \left(Val(x_A^\varphi)_{(s,e),\bM\otimes \bE} \right)$, and there are \emph{two subcases} to consider:
\medskip

\noindent\emph{Subcase (2A)}: there exists some $f^0\in E$ s.t. $f^0\simA e$ and $|Val(\mathbf{f^0}(x)_{\underline{f^0}(A)}^{\langle \mathbf{f^0}\rangle \varphi})_{s,\bM}|> N$. Then we have by definition that $s(\mu_N {\mathbf{f^0}}(x)_{\underline{e}(A)}^{\langle {\mathbf{f^0}}\rangle \varphi})=\bot_\bD$ (given that 
$f_0\simA e$ implies $\underline{f^0}(A)= \underline{e}(A)$), and thus $min \{\mu_N {\mathbf{f}}(x)_{\underline{e}(A)}^{\langle {\mathbf{f}}\rangle \varphi}:f\!\!\simA\! e\}$ takes value $\bot_\bD$ at $s$. So the right-hand side term of the equality in the 'Cutoff-Min' reduction axiom takes value $\bot_\bD$ at $s$ (regardless of whether we have $s\models_\bM |Var_e^N (x_A^\varphi)|^\Diamond\leq N$ or not). On the other hand, the left-hand side also evaluates to $\bot_\bD$ at $s$ (since by Lemma \ref{Post-Values}, $|Val(\mathbf{f^0}(x)_{\underline{f^0}(A)}^{\langle \mathbf{f^0}\rangle \varphi})_{s,\bM}|> N$ implies that $|Val(x_A^\varphi)_{(s,\be), \bM\otimes \bE}|> N$, hence $ (s,e)(\mu_N x_A^\varphi))_{\bM\otimes \bE}= min_N \left(Val(x_A^\varphi)_{(s,e),\bM\otimes \bE} \right)=\bot_\bD$), as desired.
\smallskip

\noindent\emph{Subcase (2B)}: we have $|Val(\mathbf{f}(x)_{\underline{f}(A)}^{\langle \mathbf{f}\rangle \varphi})_{s,\bM}|\leq N$ for all $f\in E$ s.t. $f\simA e$. We are in the conditions of Lemma \ref{Counting}, and there are again \emph{two subcases} to consider:
\smallskip

\noindent\emph{Subcase (2B1)}: $s\not\models_\bM |Var_e^N (x_A^\varphi)|^\Diamond\leq N$. In this case, we can use Lemma \ref{Counting} to check that both sides of the equality in the 'Cutoff-Min' reduction axiom evaluate to $\bot_\bD$ at $s$.
\smallskip

\noindent\emph{Subcase (2B2)}: $s\models_\bM |Var_e^N (x_A^\varphi)|^\Diamond\leq N$. In this case, we can use Lemma \ref{Post-Values} to show that  $s(\be(\mu_N x_A^\varphi))_\bM= (s,e)(\mu_N x_A^\varphi))_{\bM\otimes \bE}= min_N \left(Val(x_A^\varphi)_{(s,e),\bM\otimes \bE} \right)=
 min_N \left(\bigcup_{f\simA e} Val \left(\mathbf{f}(x)_{\underline{f}(x)}^{\langle \mathbf{f} \rangle \varphi}\right)_{s,\bM} 
\right)$. So the left-hand side of the equality in the 'Cutoff-Min' axiom evaluates at state $s$ (in $\bM$) to $min_{f\simA e}\, min_N  Val \left(\mathbf{f}(x)_{\underline{f}(x)}^{\langle \mathbf{f} \rangle \varphi}\right)_{s,\bM}$. On the other hand, we can use Lemma \ref{Counting} to check that the right-hand side of the equality in the 'Cutoff-Min' reduction axiom evaluates to the same expression at $s$.
\vspace{-2mm}\end{proof}
\vspace{-4mm}
\section{Completeness and Decidability Proofs for $LDA$}\label{A}
\vspace{-2mm}
Throughout this section, \emph{we fix a formula $\varphi_0\in Fml$}. We prove Theorem \ref{CompLKV} by
the method of \emph{quasi-models}. But, to obtain an appropriate analogue of Fischer-Ladner closure, we need two auxiliary notions:

\smallskip
\par\noindent\textbf{Restricted Vocabulary} For any finite set $\Sigma\subseteq Fml$, the \emph{$\Sigma$-restricted vocabulary} 
${\mathcal{V}_\Sigma := (\Agents_\Sigma, V_\Sigma, Prop_\Sigma,}$ ${Pred_\Sigma, Funct_\Sigma, ar_\Sigma, \epsilon)}$ is formed as follows: 
$\Agents_\Sigma$ is the set of agents occurring (inside terms or modalities in formulas) in $\Sigma$;
$V_\Sigma$ is \emph{the set of basic variables occurring (as subterms of any term) in formulas of $\Sigma$}; 
$Prop_\Sigma: = Prop\cap \Sigma$ is the set of atomic propositions in $\Sigma$;   
$Pred_\Sigma$ is the set of predicates occurring in (formulas of) $\Sigma$; 
$Funct_\Sigma$ is the set of function symbols occurring in (formulas of) $\Sigma$; 
$ar_\Sigma$ is the restriction of $ar$ to $Pred_\Sigma\cup Funct_\Sigma$.
Clearly, if $\Sigma$ is finite, then all the sets in $\mathcal{V}_\Sigma$ are finite.

\smallskip
\par\noindent\textbf{The $\Sigma$-Restricted Set of Terms} For any finite set of formulas $\Sigma\subseteq Fml$, the \emph{$\Sigma$-restricted set of terms}
$Var_\Sigma$ is the smallest set of terms 
satisfying the following closure conditions: $Var_\Sigma$ contains $\bot$ and $\top$, as well as all terms occurring in any formula of $\Sigma$ (hence $V_\Sigma\subseteq Var_\Sigma$); 
$Var_\Sigma$ is closed under subterms;
if $i_N.x_A^\varphi\in Var_\Sigma$ for some $i\leq N$, then  $j_N.x_A^\varphi\in Var_\Sigma$ for all $j\leq N$.
Note that $Var_\Sigma$ is only \emph{a finite subset} of the (typically infinite) set $Var(\mathcal{V}_\Sigma)$ of terms of the language $LDA(\mathcal{V}_\Sigma)$.

\medskip

We now proceed to introduce the appropriate notions of Fisher-Ladner Closure, syntactic types, and special sets of types called quasi-models.

\smallskip

\par\noindent\textbf{Fisher-Ladner Closure} Given now our fixed formula $\varphi_0$, the \emph{closure of $\varphi_0$} is the smallest set of formulas $\Sigma=\Sigma(\varphi_0)$ satisfying the following closure conditions:
$\varphi_0\in \Sigma$; 
$true, false\in \Sigma$;
$\Sigma$ is closed under subformulas and single negations $\sim\varphi$; 
if $P\in Pred_\Sigma$ has arity $ar(P)=n$ and $\ux=(x_1, \ldots, x_n)$ is an $n$-tuple with all $x_1, \ldots, x_n\in Var_\Sigma$, then $P\ux\in \Sigma$;
if $(K_A\varphi)\in \Sigma$, $\theta$ is a subformula of $\sim\varphi$ and $B\subseteq\Agents_\Sigma$, then $(K_B\theta)\in \Sigma$ (hence also 
$\langle K_A \rangle \theta\in \Sigma$);
if $(\varphi\!\!\to\!\!x|y)\in Var_\Sigma$, then $\varphi\in \Sigma$;
if $x,y\in Var_\Sigma$, then $(x=y), (x\leq y)\in \Sigma$;
if $(\mu_N x_A^\varphi), z\in Var_\Sigma$ and $Y\subseteq Var_\Sigma$, 
then $K_A^\varphi(z\in Y), \langle K_A^\varphi\rangle (z\in Y) \in\Sigma$.

\medskip

\par\noindent\textbf{Types} Let $\Sigma$ be the closure of $\varphi_0$. A \emph{$\Sigma$-type} is a subset of $\Sigma$ with the following properties:
\begin{enumerate}
\item\label{neg} for every $\varphi\in\Sigma$: $(\sim\varphi)\in \Delta$ iff $\varphi\not\in\Delta$;
\item\label{conj} for every $(\varphi\wedge \psi)\in \Sigma$: $(\varphi\wedge \psi)\in \Delta$ iff $\varphi\in\Delta$ and $\psi\in \Delta$;
\item\label{eq4} for every  $P\ux\uz\in \Sigma$: if $(\ux=\uy), P\ux\uz\in \Delta$ then $P\uy\uz\in \Delta$;
\item\label{eq5} for $F(\ux), F(\uy)\in Var_\Sigma$: if  $(\ux=\uy)\in \Delta$, then $(F(\ux)=F(\uy))\in \Delta$;
\item\label{trans} if $(x\leq y), (y\leq z)\in \Delta$, then $(x\leq z)\in \Delta$;
\item\label{antisymm} if $(x\leq y), (y\leq x)\in \Delta$, then $(x=y)\in \Delta$;
\item\label{total} either for all $x, y\in Var_\Sigma$, we have either $(x\leq y)\in \Delta$ or $(y\leq x)\in \Delta$;
\item\label{bounds} for all $x\in Var_\Sigma$, we have $(\bot\leq x), (x\leq \top)\in \Delta$;
\item\label{non-trivial} $(\top\neq\bot)\in \Delta$;
\item\label{cases} for every $(\varphi\!\!\to\!\! x|y)\in\Sigma$: $\varphi\in \Delta$ implies $(x=(\varphi\!\!\to\!\! x|y))\in\Delta$; and $\varphi\not\in\Delta$ implies $(y=(\varphi\!\!\to\!\! x|y))\in \Delta$;
\item\label{prop-veracity} if $(K_A\varphi)\in\Delta$ then $\varphi\in\Delta$;
\item\label{Lower Bound} for all $\mu_N x_A^\varphi\in Var_\Sigma$: $(\mu_N x_A^\varphi\leq x)\in \Delta$;
\item\label{Reaching the Min} if $\varphi, K_A^\varphi (\mu_N x_A^\varphi \neq x)\in \Delta$ and $Y\subseteq Var_\Sigma$ is s.t. $\mathcal{L}(Y)\subseteq A$ and $|Y|\leq N$, then $\langle K_A^\varphi\rangle (x\not\in Y)\in \Delta$;
\item\label{Trivial Min} for all $\mu_N x_A^\varphi\in Var_\Sigma$: if $(K_A\neg\varphi)\in \Delta$, then
$(\mu_N x_A^\varphi=\top)\in \Delta$;
\item\label{Undefined Min}  for all $\mu_N x_A^\varphi\in Var_\Sigma$:  
if $(|x|_A^\varphi>N)\in \Delta$, then $(\mu_N x_A^\varphi=\bot)\in \Delta$.   
        \end{enumerate}

        \par\noindent\textbf{Observation} \emph{Types are closed under modus ponens}: if $\Delta$ is a type and $(\varphi\to\psi), \varphi\in \Delta$, then $\psi\in \Delta$.
        
        \medskip

\par\noindent\textbf{Accessibility relations on types}
For types $\Delta, \Delta'$ and group $A\subseteq \Agents_\Sigma$, we put:
\vspace{-2mm}
$$\Delta \simA \Delta' \,\, \mbox{ iff } \,\, \varphi\in \Delta \Leftrightarrow \varphi\in\Delta' \, \mbox{ for all } \varphi\in\Sigma \mbox{ s.t. } \mathcal{L}(\varphi)\subseteq A.
\vspace{-3mm}$$
\vspace{-2mm}
\begin{proposition}\label{Access}
\vspace{-2mm}
The relations $\simA$ are equivalence relations on types, satisfying Group Monotonicity:
$\Delta\simA \Delta' \mbox{ and } B\subseteq A \mbox{ imply } \Delta\simB \Delta'.$
\vspace{-2mm}
\end{proposition}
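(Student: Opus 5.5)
The relation $\simA$ on types is defined by agreement on the set $\Sigma_A:=\{\varphi\in\Sigma : \mathcal{L}(\varphi)\subseteq A\}$. So the plan is to show that any relation of the form ``$\Delta$ and $\Delta'$ contain the same members of a fixed set $\Gamma$'' is an equivalence relation, and then to reduce Group Monotonicity to an inclusion between these fixed sets. No properties of types beyond their being subsets of $\Sigma$ will be needed.

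\emph{Equivalence.} Set $\Gamma:=\Sigma_A$, so that $\Delta\simA\Delta'$ iff $\Delta\cap\Gamma=\Delta'\cap\Gamma$. Reflexivity and symmetry follow at once from the reflexivity and symmetry of equality of sets (equivalently, of the biconditional ``$\varphi\in\Delta\Leftrightarrow\varphi\in\Delta'$''). For transitivity, suppose $\Delta\cap\Gamma=\Delta'\cap\Gamma$ and $\Delta'\cap\Gamma=\Delta''\cap\Gamma$. Then $\Delta\cap\Gamma=\Delta''\cap\Gamma$ by transitivity of equality, so $\Delta\simA\Delta''$.

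\emph{Group Monotonicity.} Let $B\subseteq A$ and $\Delta\simA\Delta'$, and take any $\varphi\in\Sigma$ with $\mathcal{L}(\varphi)\subseteq B$. Then $\mathcal{L}(\varphi)\subseteq B\subseteq A$, so $\varphi\in\Sigma_A$. Hence $\varphi\in\Delta\Leftrightarrow\varphi\in\Delta'$. Since $\varphi$ was an arbitrary member of $\Sigma_B$, we get $\Sigma_B\subseteq\Sigma_A$, and therefore $\Delta\simB\Delta'$.

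There is no real obstacle here; the one point to check is the direction of the inclusion between the agreement sets. The locations $\mathcal{L}(\varphi)$ may contain $\epsilon$, but the condition $\mathcal{L}(\varphi)\subseteq A$ simply excludes such formulas for every group $A$, so this does not affect the argument. Note also that the statement does not ask for the harder property that $\simA$ is the intersection of the individual relations $\sim_a$; the text flags that this fails in the quasi-model and is repaired later by unraveling.
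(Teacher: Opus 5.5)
Your proof is correct and takes the same approach as the paper, which simply says the result follows directly from the definition of $\sim_A$ on types. You have spelled out that unfolding: agreement on the fixed set $\{\varphi\in\Sigma : \mathcal{L}(\varphi)\subseteq A\}$ is an equivalence relation, and $B\subseteq A$ makes the $B$-set a subset of the $A$-set, which gives Group Monotonicity.
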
 \vspace{-1mm}
\begin{proof} \vspace{-1mm}
This follows directly from the definition of relations $\simA$ on types.
\end{proof}
\vspace{-1mm}
\begin{proposition}\label{Access2}
If $\Delta\simA \Delta'$ and $(K_A\varphi)\in \Delta$, then $\varphi\in \Delta'$.
\vspace{-1mm}
\end{proposition}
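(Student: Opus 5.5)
The argument is a short locality argument. Its core is that any formula of the form $K_A\varphi$ has location exactly $A$ by the recursive definition $\mathcal{L}(K_A\varphi)=A$. Hence, whenever $(K_A\varphi)\in\Sigma$, it is one of the formulas on which $\simA$-related types are required to agree.

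The plan runs as follows. Assume $\Delta\simA\Delta'$ and $(K_A\varphi)\in\Delta$.

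\emph{Step 1: transfer the modal formula.} Since $K_A\varphi\in\Sigma$ (types are subsets of $\Sigma$) and $\mathcal{L}(K_A\varphi)=A\subseteq A$, the definition of $\simA$ on types gives $K_A\varphi\in\Delta \Leftrightarrow K_A\varphi\in\Delta'$. Therefore $(K_A\varphi)\in\Delta'$.

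\emph{Step 2: strip the modality.} Apply clause \ref{prop-veracity} in the definition of types (if $(K_A\varphi)\in\Delta'$ then $\varphi\in\Delta'$) to the type $\Delta'$. This yields $\varphi\in\Delta'$, as required.

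Membership $\varphi\in\Sigma$ needs no separate check, since $\Sigma$ is closed under subformulas. I do not expect a real obstacle. The only point to verify is that the location clause for $K_A\varphi$ returns exactly $A$, not something involving $\mathcal{L}(\varphi)$ or $\epsilon$; the definition of $\mathcal{L}$ confirms this. Consequently, Step 1 is valid even when $\mathcal{L}(\varphi)\not\subseteq A$, for instance when $\varphi$ contains basic variables with location $\{\epsilon\}$.
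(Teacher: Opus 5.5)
Your proposal is correct and follows the paper's proof exactly. Like the paper, you transfer $K_A\varphi$ from $\Delta$ to $\Delta'$ using $\mathcal{L}(K_A\varphi)=A$ and the definition of $\simA$ on types, and then apply the veracity condition (\ref{prop-veracity}) on types to $\Delta'$.
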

\begin{proof}
\vspace{-1mm}
Since $\mathcal{L}(K_A\varphi)=A$ and $\Delta\simA \Delta'$, we use the definition of $\simA$ on types and the fact that
$(K_A\varphi)\in \Delta$ to infer that $(K_A\varphi)\in \Delta'$. This together with condition \ref{prop-veracity} on types, gives us the desired conclusion.
\end{proof}

\smallskip

\par\noindent\textbf{Hat notation}. For every type $\Delta$, we put
$\widehat{\Delta} \, \,\,\, :=\,\, \, \, \bigwedge\Delta$ for the \emph{conjunction of all formulas in $\Delta$}.

\begin{proposition}\label{Access3}
Given types $\Delta$ and $\Lambda$, if $\widehat{\Delta}\wedge \langle K_A \rangle \widehat{\Lambda}$ is consistent, then $\Delta \simA \Lambda$.
\vspace{-2mm}
\end{proposition}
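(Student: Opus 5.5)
We argue by contraposition, using the Local Knowledge axiom of $\mathbf{LDA}$ together with the normality of $K_A$. Suppose $\Delta\not\simA\Lambda$. By the definition of $\simA$ on types, there is some $\varphi\in\Sigma$ with $\mathcal{L}(\varphi)\subseteq A$ on which $\Delta$ and $\Lambda$ disagree. Replacing $\varphi$ by its single negation $\sim\varphi$ if needed, we may assume $\varphi\in\Delta$ and $\varphi\notin\Lambda$. Two facts make this replacement harmless. First, $\Sigma$ is closed under single negations. Second, $\mathcal{L}(\sim\varphi)=\mathcal{L}(\varphi)$, since $\mathcal{L}(false)=\emptyset$ and the location of an implication is the union of the locations of its parts. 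Type condition \ref{neg} then gives $(\sim\varphi)\in\Lambda$.

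Next we use Local Knowledge on $\varphi$, which applies because $\mathcal{L}(\varphi)\subseteq A$: it yields $\vdash\varphi\to K_A\varphi$. Since $\varphi$ is a conjunct of $\widehat{\Delta}$, we get $\vdash\widehat{\Delta}\to K_A\varphi$. On the other side, $\sim\varphi$ is a conjunct of $\widehat{\Lambda}$, so $\vdash\widehat{\Lambda}\to\neg\varphi$ by propositional logic, using that $\sim\varphi$ is provably equivalent to $\neg\varphi$. Necessitation and Distribution then give $\vdash\langle K_A\rangle\widehat{\Lambda}\to\langle K_A\rangle\neg\varphi$, that is, $\vdash\langle K_A\rangle\widehat{\Lambda}\to\neg K_A\varphi$.

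Combining the two derivations, $\vdash(\widehat{\Delta}\wedge\langle K_A\rangle\widehat{\Lambda})\to(K_A\varphi\wedge\neg K_A\varphi)$. Hence $\widehat{\Delta}\wedge\langle K_A\rangle\widehat{\Lambda}$ is inconsistent, which contradicts the hypothesis.

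The only step needing care is the treatment of $\sim$. One must check that the location is preserved, because $\sim\varphi$ is either $\neg\varphi$ or the formula obtained by stripping a leading negation, and in both cases the location is unchanged. One must also check that $\sim\varphi$ and $\neg\varphi$ are interderivable in CPL. Both checks are routine, so no real obstacle is expected. Note that this argument uses only Local Knowledge and normality, not Proposition \ref{Access2}.
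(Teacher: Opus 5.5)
Your proof is correct and follows the same route as the paper. You assume $\Delta\not\sim_A\Lambda$, pick $\varphi\in\Sigma$ with $\mathcal{L}(\varphi)\subseteq A$, $\varphi\in\Delta$ and $\sim\varphi\in\Lambda$, and derive a clash between Local Knowledge ($\varphi\to K_A\varphi$) and $\langle K_A\rangle\neg\varphi$; you also spell out the normalization step (closure of $\Sigma$ under single negations, $\mathcal{L}(\sim\varphi)=\mathcal{L}(\varphi)$) and the normality reasoning, which the paper leaves implicit.
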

\vspace{-1mm}
\begin{proof} 
\vspace{-1mm} 
Let $\Delta$ and $\Lambda$ be types as above, and suppose towards a contradiction that we have $\Delta \not\simA \Lambda$. Then there must exist $\varphi\in \Sigma$ s.t. $\mathcal{L}(\varphi)\subseteq A$ and $\varphi\in\Delta$, but $(\sim\!\varphi)\in\Lambda$. From this together with the assumption that $\widehat{\Delta}\wedge \langle K_A \rangle \widehat{\Lambda}$ is consistent, we infer that $\varphi\wedge \langle K_A \rangle \neg\varphi$ is consistent, and thus $\varphi\wedge \neg K_A\varphi$ is consistent. But this contradicts the fact that $\vdash\, \varphi\to K_A\varphi$ is an $\mathbf{LDA}$-theorem for formulas $\varphi$ with $\mathcal{L}(\varphi)\subseteq A$.
\end{proof}

\medskip 

\par\noindent\textbf{Quasi-Models}
A \emph{quasi-model for $\varphi_0$} is a set $S$ of types over $\Sigma$, with the following two properties:
(*) $\varphi_0\in \Delta_0$ for some type $\Delta_0\in S$; (**) if $\langle K_A \rangle \varphi\in \Delta\in S$, then there is some $\Delta'\in S$ with $\Delta\simA \Delta'$ and $\varphi\in \Delta'$.

\begin{proposition}\label{Quasi-model properties}
If $\Delta\in S$ is a type in a quasi-model $S$, then we have the following:
\begin{description}
\item[(1)] if $K_A(\varphi\to\psi)\in \Delta$ and $(K_A\varphi)\in \Delta$, then $K_A\psi\in \Delta$;
\item[(2)] for every $(K_A\varphi)\in \Sigma$ s.t. $\mathcal{L}(\varphi)\in A$, if $\varphi\in \Delta$ then $(K_A\varphi)\in \Delta$;
\item[(3)] if $(K_A\varphi)\in \Delta$ and $B\subseteq A$, then $(K_B \varphi)\in\Delta$.
\end{description}
\end{proposition}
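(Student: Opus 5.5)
Items (1) and (2) are routine consequences of the definitions; for item (3) as worded ($B\subseteq A$), my plan ends in a counterexample rather than a proof, so I only expect to establish the converse inclusion.

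\textbf{Items (1) and (2).} Both go by contradiction, using the closure conditions on $\Sigma$, the type conditions, property (**) of quasi-models, and Propositions \ref{Access} and \ref{Access2}.

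For (1), first check that $K_A\psi\in\Sigma$. Since $K_A(\varphi\to\psi)\in\Sigma$ and $\psi$ is a subformula of $\sim(\varphi\to\psi)$, the closure condition for $K$ gives this. Now suppose $K_A\psi\notin\Delta$. Then $\sim K_A\psi$, i.e. $\langle K_A\rangle\sim\psi$, lies in $\Delta$. By (**) there is $\Delta'\in S$ with $\Delta\simA\Delta'$ and $\sim\psi\in\Delta'$. By Proposition \ref{Access2} both $\varphi\to\psi$ and $\varphi$ lie in $\Delta'$. Closure of types under modus ponens then puts $\psi$ in $\Delta'$, contradicting $\sim\psi\in\Delta'$.

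For (2), the hypothesis is $\mathcal{L}(\varphi)\subseteq A$ (reading the statement's $\in$ as $\subseteq$). Suppose $\varphi\in\Delta$ but $\langle K_A\rangle\sim\varphi\in\Delta$. By (**) there is $\Delta'$ with $\Delta\simA\Delta'$ and $\sim\varphi\in\Delta'$. Since $\mathcal{L}(\varphi)\subseteq A$, the definition of $\simA$ on types forces $\varphi\in\Delta'$, a contradiction.

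\textbf{Item (3), the natural attempt.} I would argue the same way. Since $\varphi$ is a subformula of $\sim\varphi$, the closure condition gives $K_B\varphi\in\Sigma$. Suppose $K_B\varphi\notin\Delta$. By (**) we get $\Delta'$ with $\Delta\simB\Delta'$ and $\sim\varphi\in\Delta'$. To contradict $K_A\varphi\in\Delta$ via Proposition \ref{Access2}, we would need $\Delta\simA\Delta'$. Proposition \ref{Access} only lets us pass from $\sim_{A}$ down to $\sim_{B}$ when $B\subseteq A$, not up. So this is the main obstacle, and I expect it cannot be overcome.

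\textbf{Item (3), counterexample.} Take a genuine model and let $S$ be the set of $\Sigma$-truth-sets of its states. This $S$ is a quasi-model: every truth-set is a type by soundness of $\mathbf{LDA}$, and (**) holds semantically, because $w\simA s$ in the model yields equal truth-sets on the $A$-local formulas by Proposition \ref{Preservation}. Apply this to Example 1 with $\varphi:=(x(e)=\mu_1 x(e)_{\{a,b\}})$. At state $(2,3,5)$ we have $K_{\{a,b\}}\varphi$, but $K_a\varphi$ fails, since the world $(2,1,3)$ is $a$-accessible. So item (3) as worded is false.

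\textbf{Item (3), converse.} What this argument does prove is the version with $A\subseteq B$. From $\Delta\simB\Delta'$ and $A\subseteq B$, Proposition \ref{Access} gives $\Delta\simA\Delta'$, and Proposition \ref{Access2} then yields $\varphi\in\Delta'$, contradicting $\sim\varphi\in\Delta'$. This matches Group Monotonicity (Proposition \ref{theorems}(3)), and I take it to be the intended reading. Under the stated inclusion $B\subseteq A$, I can only establish the counterexample above.
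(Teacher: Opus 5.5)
The paper states this proposition without proof, so there is no argument of theirs to compare with. Your proofs of (1) and (2) are correct. So is your proof of item (3) with the inclusion reversed ($A\subseteq B$). You are also right that (3) as printed cannot hold: it contradicts Group Monotonicity (Proposition \ref{theorems}(3)) and runs against the direction of Proposition \ref{Access}. One small caveat for the corrected (3): the closure clause only puts $K_B\varphi$ into $\Sigma$ when $B\subseteq\Agents_\Sigma$. So this must be assumed, as it tacitly is throughout the quasi-model construction.

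The error is in your counterexample. The description $\mu_1 x(e)_{\{a,b\}}$ is not rigid: at each world $w$ it is evaluated over $w$'s own $\sim_{\{a,b\}}$-class. In Example 1 the relation $\sim_{\{a,b\}}$ is the identity, so this term denotes $w(x(e))$ at every $w$. Hence $\varphi=(x(e)=\mu_1 x(e)_{\{a,b\}})$ is true at every state; at $(2,1,3)$ both sides equal $3$. So $K_a\varphi$ does hold at $(2,3,5)$. Note that $K_a\varphi$ says that $a$ knows that the group knows $x(e)$. It does not say that $a$ knows $x(e)$, which would be $K_a(x(e)=\mu_1 x(e)_a)$.

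The refutation is easily repaired by using a formula without an $\{a,b\}$-located description. In the same model, take $\varphi:=(x(a)\leq x(b))$. It is true at $(2,3,5)$, hence $K_{\{a,b\}}\varphi$ holds there, but it is false at the $a$-accessible state $(2,1,3)$. Alternatively, take $\varphi_0:=K_{\{a,b\}}p\wedge\neg K_a p$. It is satisfiable: use two states $s\sim_a t$, with $\sim_b$ the identity and $p$ true only at $s$. By soundness it is therefore consistent, and the $\Sigma$-theory containing it in the canonical quasi-model $S_\Sigma$ already violates (3) with $A=\{a,b\}$, $B=\{a\}$.
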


\smallskip 

\par\noindent\textbf{The $\Sigma$-canonical quasi-model} A \emph{$\Sigma$-theory} is any maximally consistent subset of $\Sigma$.  We denote by
$S_\Sigma$ the \emph{set of all $\Sigma$-theories}. We will \emph{show that $S_\Sigma$ is a (finite) ``canonical'' quasi-model} for $\Sigma$.

\begin{lemma}\label{Types}
Every $\Sigma$-theory is a $\Sigma$-type.
\end{lemma}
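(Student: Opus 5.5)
Let $\Delta$ be a $\Sigma$-theory, i.e.\ a maximally consistent subset of $\Sigma$. We check the fifteen type conditions one at a time. Each check follows one pattern. Suppose the condition failed. Then some finite set of members of $\Delta$ (and, for the ``either/or'' conditions, $\sim$-negations of non-members, which lie in $\Delta$ by maximality) would be jointly inconsistent with an $\mathbf{LDA}$-theorem. That contradicts the consistency of $\Delta$.

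First we establish the basic maximality fact behind condition \ref{neg}. Since $\Sigma$ is closed under single negations, for each $\varphi\in\Sigma$ both $\varphi$ and $\sim\varphi$ lie in $\Sigma$. They cannot both be in $\Delta$, by consistency. If neither were in $\Delta$, then $\Delta\cup\{\varphi\}$ and $\Delta\cup\{\sim\varphi\}$ would both be inconsistent, so $\widehat{\Delta}\vdash\neg\varphi$ and $\widehat{\Delta}\vdash\varphi$; this contradicts consistency. From this we obtain the standard consequence: if $\psi\in\Sigma$ and $\widehat{\Delta}\vdash\psi$, then $\psi\in\Delta$. This is the tool for all remaining conditions. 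Condition \ref{conj} then follows from CPL.

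The equality and order conditions \ref{eq4}--\ref{non-trivial} come from the axioms in group (II). For \ref{eq4} we use Indiscernability, for \ref{eq5} Functionality, for \ref{trans} Transitivity, for \ref{antisymm} Anti-symmetry, for \ref{total} Totality, for \ref{bounds} Top and Bottom, and for \ref{non-trivial} Non-trivial Constants. Condition \ref{cases} comes from Definition by Cases, condition \ref{prop-veracity} from Veracity, and conditions \ref{Lower Bound}, \ref{Trivial Min} and \ref{Undefined Min} from Lower Bound, Trivial Minimum and Undefined Minimum. In every case the relevant conclusion formula ($P\uy\uz$, $F(\ux)=F(\uy)$, $x\leq z$, $x=y$, $\mu_N x_A^\varphi=\top$, and so on) belongs to $\Sigma$. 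This membership is guaranteed by the closure clauses: predicates applied to terms of $Var_\Sigma$, equalities and inequalities between such terms, and subformulas. For \ref{Trivial Min} and \ref{Undefined Min} we must also note that the antecedents $K_A\neg\varphi$ and $|x|_A^\varphi>N$ are given as members of $\Delta$. Their conclusions $\mu_N x_A^\varphi=\top$ and $\mu_N x_A^\varphi=\bot$ are equalities of $Var_\Sigma$-terms, so they lie in $\Sigma$. For \ref{total}, the disjunction $x\leq y\vee y\leq x$ is a theorem. If both disjuncts were absent, their negations would be in $\Delta$, which is inconsistent.

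The one condition that needs a little care is \ref{Reaching the Min}. Suppose $\varphi\in\Delta$ and $K_A^\varphi(\mu_N x_A^\varphi\neq x)\in\Delta$, and let $Y\subseteq Var_\Sigma$ with $\mathcal{L}(Y)\subseteq A$ and $|Y|\le N$. Assume toward a contradiction that $\langle K_A^\varphi\rangle(x\notin Y)\notin\Delta$. This formula is in $\Sigma$ by the last closure clause. So its negation is in $\Delta$, and that negation is provably equivalent to $K_A^\varphi(x\in Y)$. The Reaching the Minimum axiom then gives $\widehat\Delta\vdash\langle K_A^\varphi\rangle(\mu_N x_A^\varphi=x)$. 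This contradicts $K_A^\varphi(\mu_N x_A^\varphi\neq x)$, using the normal-modal reasoning (Distribution, Necessitation) available for $K_A$. The main obstacle here is purely bookkeeping: checking that each formula we invoke is in $\Sigma$, or is provably equivalent to a member of $\Sigma$, so that maximality within $\Sigma$ applies. This includes the various $\sim$ versus $\neg$ forms. The closure clauses for $K_A^\varphi(z\in Y)$ and $\langle K_A^\varphi\rangle(z\in Y)$ are designed exactly to handle this.
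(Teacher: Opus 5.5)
Your proposal is correct and is essentially the paper's own argument. The paper only remarks that the conditions are easy to check from the $\mathbf{LDA}$ axioms (and Proposition~\ref{theorems}), and your condition-by-condition verification via maximal consistency, including the contrapositive use of Reaching the Minimum for condition~\ref{Reaching the Min}, is that check spelled out.

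One point needs to be made explicit. The Lower Bound axiom gives $(\mu_N x_A^\varphi\leq x)\in\Delta$ only when $\varphi\in\Delta$, so you should prove condition~\ref{Lower Bound} with that hypothesis, which is how it is used later. Taken literally as printed, the condition is false: for $\varphi:=false$ and $x:=\bot$, the formula $\neg(\mu_N\bot_A^{false}\leq\bot)$ is provable.
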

\begin{proof}
\vspace{-1mm}
This is easy to check, using the axioms of $\mathbf{LDA}$ and Proposition \ref{theorems}.
\end{proof}

\vspace{-2mm}
\begin{lemma}\label{K-ExistenceLemma}
For every $\Delta\in S_\Sigma$, if $\langle K_A \rangle \varphi\in \Delta$ then there is some $\Delta'\in S_\Sigma$ with $\Delta\simA \Delta'$ and $\varphi\in \Delta'$.
\vspace{-1mm}
\end{lemma}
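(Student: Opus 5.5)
The argument is the standard existence lemma for finite canonical models, adapted to the locality-based definition of $\simA$ on types. Fix $\Delta\in S_\Sigma$ with $\langle K_A\rangle\varphi\in\Delta$. We will build a consistent set
$$\Gamma_0 \,:=\, \{\varphi\}\cup\{\psi\in\Sigma : \psi\in\Delta,\ \mathcal{L}(\psi)\subseteq A\}\cup\{\sim\!\psi : \psi\in\Sigma,\ \psi\notin\Delta,\ \mathcal{L}(\psi)\subseteq A\},$$
extend it to a maximally consistent subset $\Delta'$ of $\Sigma$ by the finite Lindenbaum construction, and then read off the conclusion. The extension is possible because $\Sigma$ is finite and closed under single negations, so at each step we add either a formula or its single negation. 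By Lemma \ref{Types}, $\Delta'$ is a type. It contains $\varphi$. It agrees with $\Delta$ on every $\psi\in\Sigma$ with $\mathcal{L}(\psi)\subseteq A$: the formulas of $\Delta$ of this kind are in $\Gamma_0$, and for the others $\sim\psi\in\Gamma_0$, so $\psi\notin\Delta'$ by consistency. Hence $\Delta\simA\Delta'$.

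The main step is proving that $\Gamma_0$ is consistent. Let $\chi$ be the conjunction of the $A$-local part of $\Gamma_0$, that is, of all members of $\Gamma_0$ other than $\varphi$. This is a finite conjunction, since $\Sigma$ is finite. Every conjunct has location inside $A$: $\mathcal{L}(\sim\psi)=\mathcal{L}(\psi)$, and locations of conjunctions are unions of the locations of the conjuncts. So $\mathcal{L}(\chi)\subseteq A$, and the Local Knowledge axiom gives $\vdash\chi\to K_A\chi$. Moreover $\chi$ follows from $\widehat{\Delta}$, because each conjunct is either in $\Delta$ or is the single negation of something not in $\Delta$, and therefore lies in $\Delta$ by type condition \ref{neg}.

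Now suppose towards a contradiction that $\Gamma_0$ is inconsistent, so that $\vdash\chi\to\neg\varphi$. By Necessitation and Distribution we get $\vdash K_A\chi\to K_A\neg\varphi$, and combined with Local Knowledge this yields $\vdash\chi\to K_A\neg\varphi$. Since $\Delta$ proves $\chi$, it proves $K_A\neg\varphi$, which contradicts $\langle K_A\rangle\varphi=\neg K_A\neg\varphi\in\Delta$ and the consistency of $\Delta$.

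The only delicate point is the bookkeeping that everything stays inside $\Sigma$ and that single negations $\sim$ (rather than $\neg$) behave correctly. These are routine, since $\vdash\sim\psi\leftrightarrow\neg\psi$. In particular Proposition \ref{Access3} is not needed: the relation $\Delta\simA\Delta'$ holds by construction of $\Gamma_0$.
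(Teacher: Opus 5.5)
Your proof is correct and is essentially the paper's argument: your $\Gamma_0\setminus\{\varphi\}$ is the paper's $\Delta_A$. The consistency step (Necessitation and Distribution, then $\vdash\widehat{\Delta_A}\to K_A\widehat{\Delta_A}$ by locality, giving a contradiction with $\langle K_A\rangle\varphi\in\Delta$) and the final Lindenbaum extension are the same, and your direct appeal to the Local Knowledge axiom for the conjunction $\chi$ is what the paper's citation of ``Strong Introspection'' amounts to.
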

\vspace{-1mm}
\begin{proof}
\vspace{-1mm}
Put
$\Delta_A\, :=\, \{\theta: \theta\in\Delta \mbox{ s.t. } \mathcal{L}(\theta)\subseteq A\}
\cup \{\sim \theta: \theta\in (\Sigma -\Delta) \mbox{ s.t. } \mathcal{L}(\theta)\subseteq A\}$.

\emph{Claim}: $\Delta_A \cup \{\varphi\}$ is consistent wrt the system $\mathbf{LDA}$.

\emph{Proof of Claim}: Suppose not. Then we have $\vdash \widehat{\Delta_A}\to \sim \varphi$. Applying Necessitation and Distribution, we obtain $\vdash K_A\widehat{D_A} \to K_A\sim \varphi$. On the other hand, by inspecting the structure of $\Delta_A$, it is easy to see that $\mathcal{L}(\Delta_A)\subseteq A$, so by Strong Introspection (Proposition \ref{theorems}) we have $\vdash \widehat{\Delta_A} \to K_A \widehat{\Delta_A}$. Putting these together, we get $\vdash \widehat{\Delta_A}\to K_A\sim \varphi$. Since $\Delta_A\subseteq \Delta$ and $\Delta$ is closed under $\Sigma$-consequences, we have $(K_A\sim\varphi)\in \Delta$. But this contradicts the assumption that $\langle K_A \rangle \varphi\in \Delta$ (given the consistency of $\Delta$).
\\
Using our Claim and the standard Lindenbaum Lemma, we get that $\Delta_A \cup \{\varphi\}$ has a $\Sigma$-maximally consistent extension $\Delta'\in S_\Sigma'$. So we have $\varphi\in \Delta'$ and $\Delta_A\subseteq \Delta'$, which implies that $\Delta\sim_A \Delta'$.
\end{proof}

\vspace{-2mm}
\begin{proposition}\label{QuasiModel}
If $\varphi_0\in \Sigma$ is consistent, then there exists a quasi-model for $\varphi_0$.
\vspace{-1mm}
\end{proposition}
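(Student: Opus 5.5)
The plan is to show that the canonical set $S_\Sigma$ of all $\Sigma$-theories is itself a quasi-model for $\varphi_0$. By Lemma \ref{Types}, every element of $S_\Sigma$ is a $\Sigma$-type, so $S_\Sigma$ is a set of types over $\Sigma$. It then remains to verify the two defining conditions (*) and (**) of a quasi-model.

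For condition (*), assume $\varphi_0$ is consistent. Then $\{\varphi_0\}$ is a consistent subset of the finite set $\Sigma$, since $\varphi_0\in\Sigma$ by the definition of the closure. The standard Lindenbaum construction, relativized to $\Sigma$, extends it to a maximally consistent subset $\Delta_0\subseteq\Sigma$. Concretely, we enumerate the finitely many formulas of $\Sigma$ and add each formula $\psi$ or its single negation $\sim\psi$, whichever keeps the set consistent; one of the two always does, because $\vdash \psi\vee\sim\psi$ by CPL. Thus $\Delta_0\in S_\Sigma$ and $\varphi_0\in\Delta_0$.

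For condition (**), take $\Delta\in S_\Sigma$ with $\langle K_A\rangle\varphi\in\Delta$. Lemma \ref{K-ExistenceLemma} directly supplies some $\Delta'\in S_\Sigma$ with $\Delta\simA\Delta'$ and $\varphi\in\Delta'$. This is exactly (**), so $S_\Sigma$ is a quasi-model for $\varphi_0$. As a bonus, it is finite, since $\Sigma$ is finite.

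The only point that needs care is that the Lindenbaum step is relative to $\Sigma$ rather than to the full language. This requires $\Sigma$ to be closed under single negations, which the definition of the Fischer--Ladner closure guarantees. The substantive work (that theories satisfy all the type conditions, in particular those for $\mu_N$, and the existence lemma) has already been done in Lemmas \ref{Types} and \ref{K-ExistenceLemma}. I therefore expect no real obstacle here; the proof is a direct assembly of those two results.
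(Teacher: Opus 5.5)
Your proposal is correct and matches the paper's proof: take the set $S_\Sigma$ of all $\Sigma$-theories, obtain $\Delta_0\ni\varphi_0$ by a Lindenbaum argument relative to $\Sigma$, and get the type property and condition (**) from Lemmas \ref{Types} and \ref{K-ExistenceLemma}. Your explicit handling of the $\Sigma$-relative Lindenbaum step, using closure of $\Sigma$ under single negations, spells out a detail the paper leaves implicit.
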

\vspace{-1mm}
\begin{proof}
\vspace{-1mm}
Take the set $S_\Sigma$ of all $\Sigma$-theories. By the Lindenbaum Lemma, there exists a maximally consistent subset $\Delta_0\in S_\Sigma$, such that $\varphi_0\in \Delta_0$. By Lemmas \ref{Types} and \ref{K-ExistenceLemma},
$S_\Sigma$ is a quasi-model for $\varphi_0$.
\end{proof}

\par\noindent
\begin{corollary} \emph{If $\varphi_0$ is satisfiable then there exists a quasi-model for $\varphi_0$.}
\end{corollary}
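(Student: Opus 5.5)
The plan is to derive the corollary from Proposition \ref{QuasiModel} by way of soundness. Suppose $\varphi_0$ is satisfiable, i.e. there is a state model $\bM$ over some ordered value domain $\bD$ and a state $s$ with $s\models_\bM \varphi_0$. We want to show that $\varphi_0$ is consistent with respect to $\mathbf{LDA}$; Proposition \ref{QuasiModel} then immediately yields a quasi-model for $\varphi_0$, namely the canonical quasi-model $S_\Sigma$ of all $\Sigma$-theories.

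The central step is the soundness half of Theorem \ref{CompLKV}: every theorem of $\mathbf{LDA}$ is valid on all state models. This is proved by the usual induction on the length of derivations. The classical propositional axioms and Modus Ponens are clearly sound. The equality and order axioms hold because $=$ is interpreted as identity and $\leq$ as a total order with bottom $\bot_\bD$ and top $\top_\bD$, where $\bot_\bD\neq\top_\bD$ since $|D|>1$. Necessitation, Distribution and Veracity are sound because each $\simA$ is an intersection of equivalence relations, hence itself an equivalence relation. Local Knowledge follows from Corollary \ref{Local Knowledge}. For the cutoff-minimum axioms:
\begin{itemize}
\item Lower Bound, Trivial Minimum and Undefined Minimum follow directly from the definition of $min_N$, since $s\models\varphi$ places $s(x)$ in the relevant value set.
\item Reaching the Minimum was already checked in the proof of Theorem \ref{CompLKV}.
\end{itemize}

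With soundness in hand, the argument is short. If $\varphi_0$ were inconsistent, then $\vdash \neg\varphi_0$, so $\neg\varphi_0$ would be valid, contradicting $s\models_\bM\varphi_0$. Hence $\varphi_0$ is consistent. Since $\varphi_0\in\Sigma(\varphi_0)$ by definition of the closure, the Lindenbaum Lemma gives a $\Sigma$-theory $\Delta_0\ni\varphi_0$. Proposition \ref{QuasiModel}, which relies on Lemmas \ref{Types} and \ref{K-ExistenceLemma}, then shows that $S_\Sigma$ is a quasi-model for $\varphi_0$.

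The only step needing care is soundness of Undefined Minimum, because it is stated through the abbreviation $|x|_A^\varphi>N$, which unfolds to $\neg K_A^\varphi(x\in Var^N(x_A^\varphi))$. I will argue contrapositively. Suppose at most $N$ values of $x$ are possible for $A$ given $\varphi$ at $s$. Then the terms $i_N.x_A^\varphi$ for $i\leq N$ enumerate exactly these values in increasing order, padding with $\top_\bD$ when there are fewer than $N$. So by Proposition \ref{Preservation}, applied to terms located in $A$, we get $s\models K_A^\varphi(x\in Var^N(x_A^\varphi))$. Conversely, if more than $N$ values are possible, this formula fails, since $Var^N(x_A^\varphi)$ names at most $N$ values, which are constant across $\simA$ by Proposition \ref{Preservation}. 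In that case $\mu_N x_A^\varphi=\bot$ by definition. Everything else is routine.
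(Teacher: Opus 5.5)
Your proposal is correct and takes the same route the paper intends: soundness turns satisfiability of $\varphi_0$ into $\mathbf{LDA}$-consistency, and Proposition~\ref{QuasiModel} then gives $S_\Sigma$ as a quasi-model for $\varphi_0$. Your extra check of Undefined Minimum is also sound; it uses the fact that the terms $i_N.x_A^\varphi$ list the at most $N$ possible values, padded with $\top$, while the paper simply treats that axiom's soundness as routine.
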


\medskip

The hard part is to prove the \emph{converse} of this:

\begin{proposition}\label{Satisfiability}
If $S$ is a quasi-model for $\varphi_0$, then $\varphi_0$ is satisfiable.\footnote{Note that \emph{this would immediately give us the completeness and decidability of $\mathbf{LDA}$} (by Proposition \ref{Satisfiability}, Proposition \ref{QuasiModel} and the soundness of the system $\mathbf{LDA}$, as well as the finiteness of the closure $\Sigma=\Sigma(\varphi_0)$, and the decidability of checking that a subset of $\Sigma$ is a quasi-model).}
\end{proposition}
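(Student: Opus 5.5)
We will build a genuine state model from the quasi-model $S$ by the standard unraveling method for distributed knowledge, and then supply values for the terms. \textbf{Step 1 (unraveling).} We take as states the finite paths $\sigma=(\Delta_0, A_1,\Delta_1,\ldots,A_n,\Delta_n)$ with $\Delta_i\in S$ and $\Delta_{i-1}\sim_{A_i}\Delta_i$, starting at the type $\Delta_0\ni\varphi_0$ given by (*). For each agent $a$ we let $\sima$ be the reflexive--symmetric--transitive closure of the one-step extensions $\sigma\to\sigma'$ whose label $A$ contains $a$. Because the paths form a tree, $\sigma\simA\sigma'$ (the intersection) holds iff every edge on the unique tree path between them carries a label $\supseteq A$. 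Proposition~\ref{Access} (equivalence and group monotonicity) then gives $last(\sigma)\simA last(\sigma')$ on types. So the group relations are now standard intersections while still refining the syntactic $\simA$.

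\textbf{Step 2 (values).} This is the new part. Variables have no values in a quasi-model, so we define the value domain syntactically. On pairs $(\sigma,x)$ with $x\in Var_\Sigma$, let $\approx$ be the equivalence generated by: $(\sigma,x)\approx(\sigma,y)$ if $(x=y)\in last(\sigma)$; and $(\sigma,x)\approx(\sigma',x)$ along a tree edge labelled $A$ whenever $\mathcal{L}(x)\subseteq A$. The second clause is the ``hitch-a-ride'' transmission of values through $A$-preserved terms. We take $D$ to be the set of $\approx$-classes, with $\bot,\top$ the classes of the constants. Predicates hold on classes if the corresponding atom is in the relevant type, and functions are interpreted on classes of terms in $Var_\Sigma$ and extended arbitrarily elsewhere.

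The key lemma is \emph{consistency}: if $(\sigma,x)\approx(\sigma',y)$ and both can be brought to a common node, then $(x=y)$ lies in the relevant type. We prove it by induction along the tree path. At each edge we use that $\mathcal{L}(x)\subseteq A$ terms and formulas $x=y$ with $\mathcal{L}(y)\subseteq A$ are preserved by $\simA$ on types, together with the type closure conditions \ref{eq4}, \ref{eq5}, \ref{trans}. The order is the relation generated by the $\leq$-atoms under the same identifications. By conditions \ref{trans}--\ref{bounds} it is a partial order with bottom $\bot$ and top $\top$, and we extend it to a total order by the Order Extension Principle. Extra elements are added if needed so that $|D|>1$, which condition \ref{non-trivial} already ensures.

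\textbf{Step 3 (truth lemma).} By simultaneous induction on $\Sigma$ and $Var_\Sigma$, we show $\sigma\models\psi$ iff $\psi\in last(\sigma)$, and $\sigma(x)=[\sigma,x]$. The $K_A$ case uses (**) to create successor paths, together with Proposition~\ref{Access2} and Proposition~\ref{Quasi-model properties}. The case-term uses condition \ref{cases}.

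The main obstacle is $\mu_N x_A^\varphi$. We must show that the set $\{\tau(x):\tau\simA\sigma,\tau\models\varphi\}$ has size $\le N$ exactly when $|x|_A^\varphi\le N\in last(\sigma)$, and that its minimum is then the class of $\mu_N x_A^\varphi$.
\begin{itemize}
\item When the formula is present, the terms $i_N.x_A^\varphi\in Var_\Sigma$ have location $A$, so they are constant across the $\simA$-cloud. By the $K_A^\varphi$ truth case, every $\varphi$-state there has $x$ equal to one of them, which bounds the size. Condition \ref{Lower Bound} gives the lower bound, and condition \ref{Reaching the Min} (via (**)) shows the minimum is attained.
\item When the size is $>N$, condition \ref{Undefined Min} gives $\bot$.
\item When there are no $\varphi$-states, condition \ref{Trivial Min} gives $\top$.
\end{itemize}
The delicate point is the converse direction. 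If $|x|_A^\varphi>N$ is in the type, we must produce $N+1$ genuinely distinct values. We plan to get these by iterating condition \ref{Reaching the Min} with $Y$ ranging over sets of $\le N$ location-$A$ terms, creating fresh branches in the tree. The totality choice in Step~2 must not later identify these values, and we will check that the Order Extension step never merges classes.
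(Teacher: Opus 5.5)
\textbf{Overall.} Your route is the paper's: unravel $S$ into the tree of histories, take values as $\approx$-classes of history--term pairs, extend the induced order by the Order Extension Principle, and prove the Truth Lemma by induction on sub-expression complexity. Your generators give the same $\approx$ as the paper's $\sim\cup\to_=\cup\ot_=$. Your worry that the Order Extension step might ``merge classes'' is moot, since it acts on the already formed quotient $D$. The plan fails in two places.

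\textbf{The undefined-minimum case.} You plan to show that $(|x|_A^\varphi>N)\in last(\sigma)$ yields $N+1$ distinct values by iterating condition~\ref{Reaching the Min}. That condition only applies to a type containing $\varphi$ and $K_A^\varphi(\mu_N x_A^\varphi\neq x)$. Here condition~\ref{Undefined Min} puts $\mu_N x_A^\varphi=\bot$ into the type, so that hypothesis amounts to $K_A^\varphi(x\neq\bot)$. It fails exactly when some $\varphi$-history in the $\simA$-cloud has $x=\bot$, and then you have no way to generate the extra values. You cannot fall back on the truth of $|x|_A^\varphi\le N$ either: it contains $\mu_N x_A^\varphi=1_N.x_A^\varphi$ as a subterm, so the induction hypothesis does not cover it. For the same reason, your upper bound in the first bullet must come from Lemma~\ref{K} on types, not from a ``$K_A^\varphi$ truth case''.

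The missing observation is that you never need $N+1$ values when $\bot_\bD\in Val_\sigma(x_A^\varphi)$, since then $min_N\,Val_\sigma(x_A^\varphi)=\bot_\bD$ anyway. The paper therefore splits on whether $[h,\mu_N x_A^\varphi]$ is $\top_\bD$, $\bot_\bD$ or neither. Lemma~\ref{Undefined-Min} then shows: if $[h,\mu_N x_A^\varphi]=\bot_\bD$, either $\bot_\bD\in Val_h(x_A^\varphi)$ or $|Val_h(x_A^\varphi)|>N$. In the second branch:
\begin{itemize}
\item $\bot_\bD\notin Val_h$ gives $K_A^\varphi(x\neq\bot)$ by Corollary~\ref{KK};
\item together with $K_A(\mu_N x_A^\varphi=\bot)$, which holds by local knowledge, this yields the hypothesis of condition~\ref{Reaching the Min};
\item one then iterates along a chain $h_0\to_A\cdots\to_A h_N$, with $A$-local witnesses $y_n$ equal to $x$ at $h_{n-1}$ (or $\bot$ if none exists), and gets distinctness from Corollary~\ref{EasyEqual}.
\end{itemize}

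\textbf{Interpreting function symbols.} Defining $I(F)$ on classes via terms of $Var_\Sigma$ needs well-definedness: if $[h,x_i]=[h',y_i]$ for all $i$, with $F(\ux),F(\uy)\in Var_\Sigma$, then $[h,F(\ux)]=[h',F(\uy)]$. Your consistency lemma concerns equalities at one node only, so it does not give this. The paper asserts it as Lemma~\ref{IntLemma}(1), but it fails in general, because $Var_\Sigma$ is not closed under applying $F$ to $A$-local terms.

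Here is a counterexample with a unary $G$. Let $\Delta_1\ni(v_1=\bot),(G(v_1)=\top)$ and $\Delta_2\ni(v_2=\bot),(G(v_2)=\bot)$. Realise each at its own state with its own interpretation of $G$, and evaluate $K_a$ over both states. Since $G(\bot)\notin Var_\Sigma$, no $a$-local formula of $\Sigma$ separates them, so $\Delta_1\sim_a\Delta_2$. In the tree, $[h_0,v_1]=[h_0,\bot]=[h_1,\bot]=[h_1,v_2]$, yet $[h_0,G(v_1)]=\top_\bD\neq\bot_\bD=[h_1,G(v_2)]$.

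In fact $\{\Delta_1,\Delta_2\}$ is a quasi-model for the unsatisfiable formula $(v_1=\bot\wedge G(v_1)=\top)\wedge\langle K_a\rangle(v_2=\bot\wedge G(v_2)=\bot)$. So, once non-constant function symbols are present, the proposition as stated is false. No version of this construction can succeed until the closure $Var_\Sigma$ or the notion of quasi-model is strengthened, for instance by requiring the types used to agree on such rigid facts.
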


\emph{The rest of this section is dedicated to the proof of Proposition \ref{Satisfiability}}.

\medskip

\par\noindent\textbf{Unravelling: the tree of histories} Let us fix a quasi-model $S$, a formula $\varphi_0$ and a type $\Delta_0\in S$ with $\varphi_0\in \Delta_0$. We will construct a model for $\varphi_0$, based on an unravelling of $S$ around $\Delta_0$.
A \emph{history} is a finite sequence $h=(\Delta_0, A^1, \Delta_1, \ldots, A^n, \Delta_n)$ of any length $n\geq 0$, where $\Delta_1, \ldots,\Delta_n\in S$ are types and
$A^1, \ldots, A^n\subseteq\Agents_\Sigma$ are groups, such that we have
$\Delta_{i-1}\sim_{A^i} \Delta_i$ for all $i=1,n$. Let $H$ be the set of all histories.
We denote by $last(h):=\Delta_n$ the last state in history $h$, and by $\to_{A}$ the natural \emph{forward one-step relation} on histories in $H$, given by putting: $h\to_A h'$ iff $h'=(h, A, \Delta')$ (with $last(h) \simA \Delta'=last(h')$). We denote by $\ot_{A}$ the \emph{backward one-step relation}, defined as the converse of the forward relation: $h \ot_A h'$ iff $h'\to_A h$.
The one-step relations structure $H$ into a \textit{tree rooted at $\Delta_0$} (with the immediate successor relation given by $h\to h'$ iff $h\to_A h'$ for some group $A$). In particular, we have \textit{the tree property} : \emph{every two nodes $h, h'$ of the tree are connected by a unique non-redundant path} $h=h_0 \ot_{A^1} h_1 \ot_{A^2} \ldots \ot_{A^i} h_{i}\to_{A^{i+1}} \ldots \to_{A^n} h_n=h'$ (-in which neighboring nodes are immediate successors, in one order or another, and no nodes are repeated).

\medskip
\par\noindent\textbf{Epistemic relations on histories}
To make this tree into a model for our restricted vocabulary $\mathcal{V}_\Sigma$, we define our \emph{single-agent indistinguishability relations} $\sima\subseteq H\times H$ on histories, by putting
\vspace{-2mm}
$$\sima \,\, :=\,\, \left( \bigcup_{A\ni a} \to_A \cup \bigcup_{A\ni a} \ot_A \right)^*,
\vspace{-2mm}$$
where $\ot_A$ is the converse of $\to_A$, the unions range over groups $A\subseteq \Agents_\Sigma$ s.t. $a\in A$, and $R^*$ is the reflexive-transitive closure of $R$.
Since our goal is to build a standard model, the \emph{group indistinguishability relations} $\simA\subseteq H\times H$ are taken to be simply the \emph{intersections} $\simA \,\, :=\,\, \bigcap_{a\in A} \sima$ of all the individual relations.

It is useful to give more concrete characterizations of the relations $\simA$ (and $\sima$) on histories:

\begin{lemma}\label{path}
The following are \emph{equivalent}, for $A\subseteq \Agents_\Sigma$ and histories $h,h'\in H$:
\begin{enumerate}
\item $h\simA h'$;
\item $A\subseteq A^i$, for all groups $A^i$ that appear as transition labels on the non-redundant path from $h$ to $h'$.
\end{enumerate}
\vspace{-2mm}
\end{lemma}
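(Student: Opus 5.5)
The plan is to first characterize the single-agent relations $\sima$ on the tree, and then obtain the group relations by intersection.

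\emph{Step 1 (single agents).} I will show that $h\sima h'$ holds iff every label $A^i$ on the non-redundant path from $h$ to $h'$ contains $a$.

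For the direction from right to left, suppose every label on the path contains $a$. Each step of the path is then either an $\to_{A^i}$ step or an $\ot_{A^i}$ step with $a\in A^i$. Hence the whole path is a chain of steps in $\bigcup_{A\ni a}\to_A\cup\bigcup_{A\ni a}\ot_A$, so $h\sima h'$ by the definition of reflexive-transitive closure. The empty path gives reflexivity.

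For the direction from left to right, suppose $h\sima h'$. Then there is some finite walk from $h$ to $h'$ in which every step is a tree edge whose label contains $a$. This walk may be redundant. I then use the tree property: in a tree, any walk between two nodes traverses every edge of the unique non-redundant path between them. This holds because removing such an edge disconnects $h$ from $h'$, so the walk must cross it.

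The one point needing care is that each edge carries a unique label. If $h\to_A h'$ then $h'=(h,A,\Delta')$, so the label $A$ is determined by the pair $(h,h')$. Every edge of the non-redundant path therefore appears in the walk with that same label, and that label contains $a$.

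\emph{Step 2 (groups).} By definition $\simA=\bigcap_{a\in A}\sima$. By Step 1, $h\simA h'$ holds iff for every $a\in A$, every label $A^i$ on the non-redundant path contains $a$. This is equivalent to $A\subseteq A^i$ for all labels on the path, which is exactly the equivalence claimed in Lemma~\ref{path}.

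\emph{Main obstacle.} The only genuinely nontrivial step is the edge-covering argument in the left-to-right direction of Step 1. I will make it precise by induction on the length of the walk. The induction maintains the following invariant: after each step, the non-redundant path from $h$ to the current node uses only edges already traversed by the walk. When the walk moves to a neighbour, the new non-redundant path either drops the last edge of the old path (if the move backtracks) or extends the old path by the newly traversed edge. In both cases the invariant is preserved. Since every traversed edge has a label containing $a$, this gives the required conclusion.
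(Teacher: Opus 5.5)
Your proof is correct and follows the same route as the paper, whose proof just says to combine the definitions of $\sim_a$ and $\sim_A=\bigcap_{a\in A}\sim_a$ with the uniqueness of the non-redundant path. Your walk-covering induction, together with the observation that each edge's label is determined by its endpoints (since $h'=(h,A,\Delta')$), supplies exactly the detail the paper leaves implicit.
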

\vspace{-1mm}
\begin{proof}
\vspace{-2mm}
Use the definitions of $\sima$ and $\simA=\bigcap_{a\in A}\sima$ on $H$, and the uniqueness of non-redundant path.
\end{proof}

\begin{lemma}\label{equiv}
If $h\simA h'$, then $last(h)\simA last(h')$.
\vspace{-1mm}
\end{lemma}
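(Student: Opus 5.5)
We argue by combining Lemma \ref{path} with the transitivity and group monotonicity of the relations $\simA$ on types (Proposition \ref{Access}). Suppose $h\simA h'$ in the tree $H$. By the tree property there is a unique non-redundant path
$h=h_0 \,\ot_{A^1}\, h_1 \cdots h_{i}\,\to_{A^{i+1}} \cdots \to_{A^n} h_n=h'$,
and by Lemma \ref{path} every label on it satisfies $A\subseteq A^j$.

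The key step is the single transition. If $h_{j-1}\to_{A^j} h_j$, then by definition of histories $h_j=(h_{j-1},A^j,\Delta')$ with $last(h_{j-1})\sim_{A^j}\Delta'=last(h_j)$ as types. If instead $h_{j-1}\ot_{A^j} h_j$, the same holds with the roles exchanged, and we use the symmetry of $\sim_{A^j}$ on types (Proposition \ref{Access}). In both cases $last(h_{j-1})\sim_{A^j} last(h_j)$. Since $A\subseteq A^j$, Group Monotonicity (Proposition \ref{Access}) then gives $last(h_{j-1})\simA last(h_j)$.

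Finally, we chain these along the path. Since $\simA$ on types is transitive (again Proposition \ref{Access}), we obtain $last(h)=last(h_0)\simA last(h_n)=last(h')$. The case $n=0$ is trivial by reflexivity.

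There is no real obstacle. The only care needed is to check that the backward steps $\ot_{A^j}$ still transport the type relation, which follows from symmetry, and to invoke Lemma \ref{path} rather than unfolding the intersection $\bigcap_{a\in A}\sima$ directly. Unfolding the intersection would only give, for each $a\in A$ separately, a path whose labels contain $a$. That is not enough to conclude that $A$ is contained in the labels, so the direct route would not work.
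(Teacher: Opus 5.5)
Your proof is correct and is the paper's argument. The paper inducts on the length of the non-redundant path, peeling off the last transition and using Lemma \ref{path}, the definition of histories, Group Monotonicity and transitivity of $\simA$ on types; your direct chaining along the path just unrolls that induction.

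Your closing caveat slightly overstates things. In the tree, the per-agent walks do force every label on the non-redundant path to contain each $a\in A$, and that is exactly why Lemma \ref{path} holds. What genuinely fails is intersecting per-agent relations on types, since $\simA$ on types is not $\bigcap_{a\in A}\sima$.
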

\vspace{-1mm}
\begin{proof}
\vspace{-1mm}
The proof is by \emph{induction on the length $N$ of the non-redundant path} from $h$ to $h'$.

For the \emph{base case} $h=h'$, the conclusion follows trivially (given that $\simA$ are equivalence relations).

\emph{Inductive case}: Suppose the non-redundant path from $h$ and $h'$ has length $N+1$, and let us look at the last transition on this path. Given Lemma \ref{path}, this transition can be either of the form
$h_N {\to}_{A^N} h_{N+1}=h'$, or of the form $h_N {\ot}_{A^N} h_{N+1}=h'$, with $A^N\supseteq A$. By definition of $\to_A$ on histories, we have either  $h'=(h_N, A^N, last(h'))$ or  $h_N=(h', A^N, last(h_N))$, with $last(h_N)\sim_{A^N} last (h')$ in both cases. By Monotonicity (Proposition \ref{Access}) and the fact that $A\subseteq A^N$, we obtain $last(h_N)\simA last(h')$. On the other hand, we also have $last(h)\simA last(h_N)$ (-since the non-redundant path from $h$ to $h_N$ has length $N$, so by the induction hypothesis the pair $(h,h_N)$ satisfies the conclusion of our Lemma, with $h'$ replaced by $h_N$). Putting these two together (and using the transitivity of $\simA$), we conclude that $last(h)\simA last(h')$, as desired.
\end{proof}

\begin{lemma}\label{K}
If $(K_A\varphi) \in last(h)$ and $h\simA h'$, then $\varphi\in last(h')$.
\vspace{-1mm}
\end{lemma}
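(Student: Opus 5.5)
The plan is to chain Lemma \ref{equiv} with Proposition \ref{Access2}, working entirely at the level of the last types of the two histories. No further induction on paths should be needed beyond what is already packaged in Lemma \ref{equiv}.

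Suppose $(K_A\varphi)\in last(h)$ and $h\simA h'$ in the tree $H$, where $\simA=\bigcap_{a\in A}\sima$ is the standard group relation on histories. Lemma \ref{equiv} then gives $last(h)\simA last(h')$, now as the syntactic accessibility relation on types. This lemma was proved by induction on the length of the non-redundant path, using Lemma \ref{path} and Group Monotonicity (Proposition \ref{Access}).

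Next, $last(h)$ and $last(h')$ are types in the quasi-model $S$, and they are related by $\simA$ on types with $(K_A\varphi)\in last(h)$. Proposition \ref{Access2} therefore applies directly and yields $\varphi\in last(h')$. That is exactly the claim.

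The only point needing care is the silent switch between the two meanings of $\simA$: the intersection relation on histories in the hypothesis, and the type-relation defined via formulas $\theta\in\Sigma$ with $\mathcal{L}(\theta)\subseteq A$ in Proposition \ref{Access2}. Lemma \ref{equiv} is precisely the bridge between them. I would also note that $A\subseteq\Agents_\Sigma$ here, which holds because $K_A\varphi\in\Sigma$, so that Lemma \ref{path} and Lemma \ref{equiv} apply. I expect no real obstacle: all the substantive work is already done in Lemma \ref{equiv}.
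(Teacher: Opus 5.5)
Your proposal is correct and is exactly the paper's proof: Lemma~\ref{equiv} gives $last(h)\simA last(h')$ on types, and Proposition~\ref{Access2} then yields $\varphi\in last(h')$. Your remark on the two meanings of $\simA$ (on histories and on types), with Lemma~\ref{equiv} as the bridge between them, is accurate.
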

\vspace{-1mm}
\begin{proof}
\vspace{-1mm}
By Lemma \ref{equiv}, $h\simA h'$ implies $last(h)\simA last(h')$. This, together with $K_A\varphi \in last(h)$, implies by Proposition \ref{Access2} that 
$\varphi\in last(h')$, as desired.
\end{proof}

\begin{lemma}\label{Diamond}(\emph{Diamond Lemma})
 If $\langle K_A \rangle \varphi\in last(h)$, then there exists some $h'\simA h$ s.t. $\varphi\in last(h')$.
 \vspace{-1mm}
 \end{lemma}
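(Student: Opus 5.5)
The Diamond Lemma follows directly from the quasi-model condition (**) together with the definition of the relations on histories. Suppose $\langle K_A \rangle \varphi\in last(h)$, where $last(h)\in S$ and $S$ is a quasi-model. By property (**) of quasi-models there is a type $\Delta'\in S$ with $last(h)\simA \Delta'$ (the relation on types) and $\varphi\in\Delta'$. We may assume $A\subseteq\Agents_\Sigma$, since $\langle K_A\rangle\varphi\in\Sigma$ means the agents of $A$ occur in $\Sigma$. This makes $A$ a legitimate transition label.

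We extend the history by a single $A$-labelled step, putting $h':=(h, A, \Delta')$. This is a history because all its earlier links are inherited from $h$, and the new link satisfies $last(h)\sim_A\Delta'$ on types, which is exactly the requirement in the definition of histories. By the definition of the forward relation we have $h\to_A h'$, and $last(h')=\Delta'$ contains $\varphi$.

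It remains to check that $h\simA h'$ on histories, where $\simA$ is the intersection of the individual relations. For each $a\in A$, the pair $(h,h')$ lies in $\to_A$ with $a\in A$, so it belongs to $\bigcup_{B\ni a}\to_B$, and hence to its reflexive-transitive closure $\sima$. Therefore $h\sima h'$ for every $a\in A$, which gives $h\simA h'$. Alternatively, Lemma \ref{path} gives this at once: the non-redundant path from $h$ to $h'$ consists of the single label $A\supseteq A$.

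There is no real obstacle here. The only points needing care are that the group $A$ counts as an admissible label (it is non-empty and contained in $\Agents_\Sigma$) and that the new sequence actually satisfies the definition of a history.
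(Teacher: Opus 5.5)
Your proof is correct and follows the paper's argument: obtain a suitable $\Delta'$, take the one-step extension $h'=(h,A,\Delta')$, and read off $h\simA h'$ from the definition of $\sima$ (or from Lemma~\ref{path}). The only difference is that you get $\Delta'$ from property (**) of the quasi-model $S$, whereas the paper cites Lemma~\ref{K-ExistenceLemma}, which is stated only for the canonical $S_\Sigma$; since histories must consist of types from the fixed, arbitrary quasi-model $S$, your citation is actually the appropriate one.
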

 \begin{proof}
\vspace{-1mm} By Lemma \ref{K-ExistenceLemma}, there exists some type $\Delta'\simA last(h)$ s.t. $\varphi\in \Delta'$. Take $h'=(h, A, \Delta')$. By Lemma \ref{path} we have $h'\simA h$, and obviously $\varphi\in \Delta'= last(h')$, as desired.\end{proof}

\begin{corollary}\label{KK}
If $(K_A\varphi)\in \Sigma$ and $h\in H$, then:  $(K_A\varphi) \in last(h)$ iff we have $\varphi\in last(h')$ for all $h'\simA h$.
\end{corollary}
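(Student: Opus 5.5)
The corollary is just the combination of Lemma \ref{K} and the Diamond Lemma \ref{Diamond}. For one direction we use Lemma \ref{K} directly; for the other we argue by contraposition and use the closure properties of $\Sigma$ together with the type conditions on negation.

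\emph{Left to right.} Assume $(K_A\varphi)\in last(h)$ and let $h'\simA h$. Lemma \ref{K} gives $\varphi\in last(h')$ at once.

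\emph{Right to left.} Suppose $(K_A\varphi)\in\Sigma$ but $(K_A\varphi)\notin last(h)$.

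\begin{enumerate}
\item By type condition \ref{neg}, $\sim\! K_A\varphi$ belongs to $last(h)$. Here $\sim\! K_A\varphi$ is $\neg K_A\varphi$, which is provably equivalent to $\langle K_A\rangle \sim\!\varphi$.
\item We need the diamond form $\langle K_A\rangle\sim\!\varphi$ to be in $\Sigma$. This holds by the Fisher--Ladner clause: $K_A\varphi\in\Sigma$ and $\sim\!\varphi$ is a subformula of itself, so $K_A\!\sim\!\varphi$ and $\langle K_A\rangle\sim\!\varphi$ are in $\Sigma$.
\item We then need $\langle K_A\rangle\sim\!\varphi\in last(h)$, not merely $\neg K_A\varphi$.
\begin{itemize}
\item If $last(h)$ is a $\Sigma$-theory (the canonical quasi-model $S_\Sigma$), this follows from deductive closure within $\Sigma$.
\item For an arbitrary type, we identify $\langle K_A\rangle\psi$ with $\neg K_A\neg\psi$ up to single negation. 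Then $\neg K_A\varphi$ and $\langle K_A\rangle\sim\!\varphi$ coincide, modulo the double-negation convention that defines $\sim$.
\end{itemize}
\item Apply the Diamond Lemma \ref{Diamond} to obtain $h'\simA h$ with $\sim\!\varphi\in last(h')$.
\item By condition \ref{neg} again, $\varphi\notin last(h')$, which contradicts the right-hand side.
\end{enumerate}

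\emph{Main obstacle.} The only delicate point is the syntactic bookkeeping in step 3: checking that the diamond formula fed to the Diamond Lemma really is the negation of $K_A\varphi$ as it sits in $\Sigma$. If the convention does not make these literally identical, we would instead use the Diamond Lemma's proof route. Lemma \ref{K-ExistenceLemma} applies to $\neg K_A\varphi$ via the consistency of $\Delta_A\cup\{\sim\!\varphi\}$, and its proof goes through verbatim with $\neg K_A\varphi$ in place of $\langle K_A\rangle\sim\!\varphi$.
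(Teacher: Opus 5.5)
Your proposal is correct and is exactly the intended argument. The paper states this corollary without proof immediately after Lemma~\ref{K} and the Diamond Lemma~\ref{Diamond}, and you obtain the two directions from those lemmas, using the same identification of $\sim\! K_A\varphi$ with $\langle K_A\rangle\!\sim\!\varphi$ that the paper makes in case (v) of the Truth Lemma~\ref{Truth}. One caveat: your fallback via the consistency argument of Lemma~\ref{K-ExistenceLemma} only works when the types are $\Sigma$-theories (i.e.\ for $S_\Sigma$), whereas for an arbitrary quasi-model $S$ the witness must come from property (**), so the main line of your argument is the one to keep.
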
 

\smallskip

\par\noindent\textbf{The Value Domain: a Quotient Construction} The \emph{set of values} $D$ of our model will be a quotient of the Cartesian product $H\times Var_\Sigma$.
We \emph{define an equivalence relation}
$\approx$ on pairs $(history, variable)$ in $H\times Var_\Sigma$ (telling us \emph{when two such pairs represent the same value}), 
as well as \emph{a total preorder} $\lessapprox$ on these pairs in $H\times Var_\Sigma$ (telling us \emph{when the value of a pair is at most equal to another pair's value}).  
Then we take our canonical set of objects $D$ to be \emph{the quotient of $H\times Var_\Sigma$ with respect to $\approx$}, while the preorder $\lessapprox$
induces our desired \emph{total order} $\leq$ on the quotient $D$.

For this, we first introduce another \emph{equivalence relation} $\sim$ on $H\times Var_\Sigma$ (representing \emph{identity of objects at a given node}),
and another partial preorder $\lesssim$ on $\sim$ on $H\times Var_\Sigma$ (representing the \emph{order relation on values at a given node}).
This is given by putting:
\vspace{-2mm}
$$(h,x) \sim (h', x') \,\, \mbox{ iff } \,\, h=h' \mbox{ and } (x= x')\in last(h),
\vspace{-2mm}$$
$$(h,x) \lesssim (h', x') \,\, \mbox{ iff } \,\, h=h' \mbox{ and } (x\leq x')\in last(h).$$

\par\noindent Second, we define \emph{(forward and backward) one-step relations $\to_=$ and $\to_\leq$ on pairs} in $H\times Var_\Sigma$:
\vspace{-2mm}
$$(h,x) \to_= (h', x') \,\, \mbox{ iff } \,\, \exists y\in Var_\Sigma\, \exists A\supseteq \mathcal{L}(y)
\mbox{ s.t. } h\to_A h',
(x=y)\in last(h)  \mbox{ \& } (y=x')\in last(h');\vspace{-2mm}$$
$$(h,x) \ot_= (h', x') \,\, \mbox{ iff } \,\, \exists y\in Var_\Sigma\, \exists A\supseteq \mathcal{L}(y)
\mbox{ s.t. } h\ot_A h',
(x=y)\in last(h)  \mbox{ \& } (y=x')\in last(h');$$
$$(h,x) \to_\leq (h', x') \,\, \mbox{ iff } \,\, \exists y\in Var_\Sigma\, \exists A\supseteq \mathcal{L}(y)
\mbox{ s.t. } h\to_A h',
(x\leq y)\in last(h)  \mbox{ \& } (y\leq x')\in last(h');$$
$$(h,x) \ot_\leq (h', x') \,\, \mbox{ iff } \,\, \exists y\in Var_\Sigma\, \exists A\supseteq \mathcal{L}(y)
\mbox{ s.t. } h\ot_A h',
(x\leq y)\in last(h)  \mbox{ \& } (y\leq x')\in last(h').$$
Note that $\ot_=$ is just the converse of $\to_=$, but $\ot_\leq$ is \emph{not} the converse of $\to_\leq$.

\medskip

\par\noindent\textbf{Value Identity and Order}
Finally, \emph{\emph{we define our main equivalence relation $\approx$ and our total preorder $\lessapprox$ on pairs}} $(history, variable)$ in $H\times Var_\Sigma$, by putting: $\approx \, \,\, :=\, \,\, (\sim\cup\to_=\cup \ot_=)^*$; and $\lessapprox \, \,\, :=\, \,\, (\lesssim\cup \to_\leq \cup \ot_\leq)^*$, 
where $R^*$ is the reflexive-transitive closure of a relation $R$.
It is useful to have a more concrete characterization of $\approx$ and $\lessapprox$, in terms of the non-redundant path from $h$ to $h'$:

\begin{lemma} \label{PathLemma} (``Path Lemma'')
 Let $(h, x), (h', x')\in  H\times Var_\Sigma$, and let
$h=h_0 \ot h_1 \ot \ldots \ot h_i \to\ldots \to h_n=h'$
be the non-redundant path from $h$ to $h'$. Then the following are equivalent:
\begin{itemize}
\item $(h,x)\approx(h',x')$;
\item either $(h,x)\sim (h', x')$ (if $n=0$), or else there exist $x_0, x_1, \ldots, x_i, \ldots, x_n\in Var_\Sigma$ s.t.
we have:
$(h,x)= (h_0, x_0)\leftarrow_= (h_1, x_1) \leftarrow_= (h_2, x_2)\leftarrow_=\cdots \leftarrow_= 
(h_i,x_i)\to_= \cdots \to_=  (h_{n-1}, x_{n-1})\to_=
(h_n, x_n)= (h', x')$.
\end{itemize}
Similarly, the following are equivalent:
\begin{itemize}
\item $(h,x)\lessapprox (h',x')$;
\item either $(h,x)\lesssim (h', x')$ (if $n=0$), or else there exist $x_0, x_1, \ldots, x_i, \ldots, x_n\in Var_\Sigma$ s.t.
we have:
$(h,x)= (h_0, x_0)\leftarrow_\leq (h_1, x_1) \leftarrow_\leq (h_2, x_2)\leftarrow_\leq\cdots \leftarrow_\leq 
(h_i,x_i)\to_\leq \cdots \to_\leq (h_{n-1}, x_{n-1})\to_\leq
 (h_n, x_n)= (h', x')$.
\end{itemize}
\end{lemma}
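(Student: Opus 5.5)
The plan is to prove both equivalences (for $\approx$ and for $\lessapprox$) the same way. The backward direction is immediate, since a chain of $\sim$-steps, or of $\to_=$/$\ot_=$-steps, witnesses membership in the reflexive-transitive closure. For the forward direction, we show that the set $R$ of pairs $((h,x),(h',x'))$ satisfying the ``path'' condition contains the generators and is reflexive and transitive. Since $\approx$ is the least such relation, $\approx\subseteq R$ follows.

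For the first two properties: reflexivity holds because $(x=x)\in last(h)$, by the equality axioms and the closure of types. The generators belong to $R$ directly, since $\sim$ is the case $n=0$, and a single $\to_=$ or $\ot_=$ step is a path of length one. For transitivity, take $(h,x)\,R\,(h',x')\,R\,(h'',x'')$. We concatenate the two chains along the paths $h\leadsto h'$ and $h'\leadsto h''$.

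The concatenation need not be non-redundant: by the tree property it equals the non-redundant path from $h$ to $h''$ after deleting back-and-forth segments. Such a segment looks like $g\to_A g'\ot_B g$ (or the mirror image) with the same node repeated. We therefore need two \emph{cancellation/absorption} facts.

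\emph{(a) Local absorption.} If $(g,u)\sim(g,u')$ and $(g,u')\to_=(g',w)$, then $(g,u)\to_=(g',w)$. This follows from transitivity of equality inside the type $last(g)$, using the type conditions and Modus Ponens closure; the same holds on the other side. This also handles the junction at $h'$ and the $n=0$ cases.

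\emph{(b) Backtracking.} Suppose $(g,u)\to_=(g',w)\ot_=(g,u')$, i.e.\ $(g',w)\to_=$-related back to $g$. Then we want $(u=u')\in last(g)$. Unfolding the definitions, there are witnesses $y,y'$ with $\mathcal{L}(y)\subseteq A$ and $\mathcal{L}(y')\subseteq B$, where $g\to_A g'$ and $g\to_B g'$. By the tree structure the edge is unique, so $A=B$. Moreover $(u=y)\in last(g)$, $(y=w),(w=y')\in last(g')$, and $(y'=u')\in last(g)$. Hence $(y=y')\in last(g')$, and $\mathcal{L}(y=y')\subseteq A$. Since $last(g)\simA last(g')$, the definition of $\simA$ on types gives $(y=y')\in last(g)$. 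By transitivity in $last(g)$ we get $(u=u')$, i.e.\ $(g,u)\sim(g,u')$. We then apply (a) to absorb it into the neighbouring step.

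Iterating (b) removes every redundant detour, by induction on the total length of the concatenated chain. This yields a chain along the non-redundant path, of the required shape: first $\ot$-steps, then $\to$-steps. The fact that the path first descends toward the meeting node and then ascends is inherited from the tree structure.

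For $\lessapprox$ the argument is identical, with $\le$ replacing $=$ and the order-transitivity type condition replacing equality transitivity. One subtlety arises in the backtracking step for $\leq$. A forward step $(g,u)\to_\leq(g',w)$ followed by a backward step $(g',w)\ot_\leq(g,u')$ is defined with witnesses $u\le y$, $y\le w$ in $g'$ and $w\le y'$, $y'\le u'$ in $g$, and the step is not a converse. Still, $(y\le y')\in last(g')$ is located in $A$, so it transfers to $last(g)$, giving $u\le u'$ there, i.e.\ $(g,u)\lesssim(g,u')$.

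The \textbf{main obstacle} is exactly this transfer of formulas $y=y'$ (resp.\ $y\le y'$) across an edge. It requires that $\mathcal{L}(y=y')=\mathcal{L}(y)\cup\mathcal{L}(y')\subseteq A$ and that $(y=y')\in\Sigma$. The latter is guaranteed by the closure clause putting $x=y$ and $x\le y$ into $\Sigma$ for all $x,y\in Var_\Sigma$. It is also crucial that the two steps traverse the same edge with the same label. This is ensured by the uniqueness of parent edges in the tree of histories: a history $(h,A,\Delta')$ determines $A$.
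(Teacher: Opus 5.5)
Your proof is correct. The paper itself states the Path Lemma without proof; the neighbouring Lemma~\ref{equiv} and Lemma~\ref{IntLemma} are handled by induction along the non-redundant path using the conditions on types. So there is no published argument to compare with, and yours supplies the missing justification.

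You show that the ``path'' relation is reflexive, transitive and contains the generators. The only real work is cancelling a backtrack $g\to_A g'\ot_A g$ (or its mirror image). You do this by transferring $y=y'$ (resp.\ $y\leq y'$) between $last(g')$ and $last(g)$, using $last(g)\simA last(g')$ and $\mathcal{L}(y)\cup\mathcal{L}(y')\subseteq A$. Your argument also handles correctly the fact that $\ot_\leq$ is not the converse of $\to_\leq$.

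There are two minor points.

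First, the types in an arbitrary quasi-model need not be consistent theories, and $\Sigma$ need not contain the relevant implications. So you should not appeal to ``the equality axioms'' or to Modus Ponens closure; derive what you need from the type conditions directly:
\begin{itemize}
\item Reflexivity $(x=x)\in\Delta$ comes from condition~\ref{total} (Totality, with $y:=x$) followed by condition~\ref{antisymm} (Anti-symmetry).
\item Symmetry and transitivity of $=$ inside a type come from condition~\ref{eq4} with $P$ taken to be $=$, together with reflexivity.
\item For the order case, transitivity is condition~\ref{trans}.
\end{itemize}

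Second, you could streamline the transitivity step. It suffices to show that the path relation is closed under appending a single generator step. Each such extension then cancels at most one backtrack, which avoids the iterated-cancellation argument on concatenated chains.
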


\begin{corollary} \label{Preservation2}
If $h,h'\in H$ and $x\in Var_\Sigma$ are s.t. $h\simA h'$ and $\mathcal{L}(x)\subseteq A$, then  $(h,x)\approx(h',x)$.
\vspace{-1mm}
\end{corollary}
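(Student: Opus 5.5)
The plan is to apply the Path Lemma (Lemma \ref{PathLemma}) directly, using the variable $x$ itself as the "carrier" term at every node. Let $h=h_0 \ot_{A^1} h_1 \ot \cdots \ot h_i \to \cdots \to_{A^n} h_n=h'$ be the non-redundant path from $h$ to $h'$.

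If $n=0$, then $h=h'$. Since $(x=x)\in \Sigma$ and every type contains $x=x$, we get $(h,x)\sim(h,x)$, hence $(h,x)\approx(h,x)$. The fact that $(x=x)\in last(h)$ follows from the type conditions: totality gives $x\leq x\in\Delta$, and antisymmetry then yields $x=x\in\Delta$.

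If $n>0$, we first use Lemma \ref{path}. The hypothesis $h\simA h'$ gives $A\subseteq A^j$ for every label $A^j$ on the path. Since $\mathcal{L}(x)\subseteq A\subseteq A^j$, each step of the path has a label $A^j\supseteq\mathcal{L}(x)$. We then take $x_0=x_1=\cdots=x_n=x$ and the witness $y=x$ at every step.

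\begin{itemize}
\item For a forward step $h_{j-1}\to_{A^j} h_j$, the definition of $\to_=$ requires $(x=x)\in last(h_{j-1})$ and $(x=x)\in last(h_j)$. Both hold by the reflexivity observation above, so $(h_{j-1},x)\to_=(h_j,x)$.
\item For a backward step $h_{j-1}\ot_{A^j} h_j$, the same witness gives $(h_{j-1},x)\ot_=(h_j,x)$.
\end{itemize}

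Chaining these steps, the Path Lemma yields $(h,x)\approx(h',x)$. Alternatively, one can bypass the Path Lemma: $\approx$ is defined as the reflexive-transitive closure containing $\to_=$ and $\ot_=$, so the chain of steps already suffices.

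The only delicate point is confirming that $(x=x)$ belongs to every type. This requires $(x=x)\in\Sigma$, which holds because $Var_\Sigma$ is closed under pairs forming equalities, and then the totality and antisymmetry conditions on types do the rest. Apart from that, the argument is a direct unwinding of the definitions, with Lemma \ref{path} supplying the label inclusions along the path.
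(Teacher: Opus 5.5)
Your proof is correct and follows essentially the same route as the paper. You take the non-redundant path, use Lemma \ref{path} to get $\mathcal{L}(x)\subseteq A\subseteq A^j$ for every label, and chain $\to_=$/$\ot_=$ steps with $x$ itself as the witness at each node; your explicit check that $(x=x)$ lies in every type (via totality and antisymmetry), and your remark that the needed direction already follows from the definition of $\approx$ as a reflexive-transitive closure, fill in details the paper leaves implicit.
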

\vspace{-1mm}
\begin{proof}
\vspace{-2mm}
Let
$h=h_0 \ot_{A^1} h_1 \ot_{A^2} \ldots \ot_{A^i} h_{i}\to_{A^{i+1}} \ldots \to_{A^n} h_n=h'$
be the non-redundant path from $h$ to $h'$. Since $h\simA h'$, we know that $A\subseteq A_k$ for all $k$ (by Lemma \ref{path}). Using this together with $\mathcal{L}(x)\subseteq A$ (and the definition of the relation $(h,x)\to_= (h',x')$ on history-variable pairs), we obtain
$(h,x)\leftarrow_= (h_1, x) \leftarrow_=\cdots
(h_i, x)\to_= \cdots
\to_= (h_n, x)= (h', x)$.
By the Path Lemma, we have $(h,x)\approx (h', x)$.\end{proof}

\begin{corollary}\label{EasyEqual}
Let $x, x'\in Var_\Sigma$ and $h, h'\in H$, and suppose that the non-redundant path from $h$ to $h'$ is of the form
$h=h_0 \to_A h_1 \to_A \ldots \to_A h_i\to_A \ldots \to_A h_n=h'$.
Then we have $(h,x)\approx (h', x')$ iff there exists $y\in Var_\Sigma$ with $\mathcal{L}(y)\subseteq A$, $(x=y)\in last(h)$ and $(y=x')\in last(h')$.
\end{corollary}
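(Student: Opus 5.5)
The plan is to prove the two directions separately, using the Path Lemma (Lemma \ref{PathLemma}) for the equality relation together with the fact that within a single type, equality between terms behaves like an equivalence relation.

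\emph{Right-to-left.} Suppose $y\in Var_\Sigma$ satisfies $\mathcal{L}(y)\subseteq A$, $(x=y)\in last(h)$ and $(y=x')\in last(h')$. I would first transport $y$ along the path. Since each step is $h_k\to_A h_{k+1}$ and $\mathcal{L}(y)\subseteq A$, the definition of $\to_=$ (taking $y$ itself as the witness term) gives $(h_k,y)\to_=(h_{k+1},y)$ for every $k<n$. This only needs $(y=y)\in last(h_k)$. That membership holds for every type: $(y=y)\in\Sigma$ because $\Sigma$ is closed under $x=y$ for terms in $Var_\Sigma$, and it belongs to every type by Totality and Antisymmetry (conditions \ref{total} and \ref{antisymm} with $x=y$).

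The endpoints need separate handling. At $h_0$ I use the witness $y$ with $(x=y)\in last(h_0)$ and $(y=y)\in last(h_1)$, which gives $(h_0,x)\to_=(h_1,y)$. At the last step I use $(y=y)\in last(h_{n-1})$ and $(y=x')\in last(h_n)$, which gives $(h_{n-1},y)\to_=(h_n,x')$. The case $n=1$ combines both ends in one step with witness $y$. The case $n=0$ uses $\sim$ instead: here I need $(x=x')\in last(h)$, which follows from $(x=y),(y=x')\in last(h)$ by transitivity of equality inside a type. That transitivity comes from conditions \ref{eq4} and \ref{trans}, or equivalently from antisymmetry together with transitivity of $\leq$. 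With the chain in place, the Path Lemma yields $(h,x)\approx(h',x')$.

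\emph{Left-to-right.} Since the path is purely forward, the Path Lemma gives terms $x_0=x,x_1,\dots,x_n=x'$ with $(h_k,x_k)\to_=(h_{k+1},x_{k+1})$. Each such step comes with a witness $y_k$ satisfying $\mathcal{L}(y_k)\subseteq A_k$, where $A_k$ is the actual label, so $A_k=A$ by the tree structure. The step also gives $(x_k=y_k)\in last(h_k)$ and $(y_k=x_{k+1})\in last(h_{k+1})$.

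The goal is to collapse these witnesses into a single $y$, and this is the \textbf{main obstacle}. I will argue by induction on $n$ that $(x=y_k)\in last(h_k)$ for the last witness used, and I take $y:=y_{n-1}$. At node $h_{k+1}$ we have both $(y_k=x_{k+1})$ and $(x_{k+1}=y_{k+1})$, so transitivity in the type gives $(y_k=y_{k+1})\in last(h_{k+1})$. This formula has location inside $A$, and $last(h_k)\simA last(h_{k+1})$, so by the definition of $\simA$ on types it also lies in $last(h_k)$. Combining with the induction hypothesis $(x=y_k)\in last(h_k)$, transitivity in $last(h_k)$ gives $(x=y_{k+1})\in last(h_k)$. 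I then need to push this back to $h_0$, which is the step where I am least sure of the argument. Here $x$ is not local, so I cannot move $(x=y_{k+1})$ directly. Instead I move the equalities between local witnesses back step by step: $(y_j=y_{k+1})$ is $A$-local and holds at $h_{j+1}$, hence also at $h_j$, and because $last(h_j)\simA last(h_{j+1})$ equality among $A$-local terms transfers all the way down to $h_0$. That gives $(y_0=y_{n-1})\in last(h_0)$, and with $(x=y_0)$ we get $(x=y)\in last(h)$.

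Finally, $(y_{n-1}=x')\in last(h')$ holds by the definition of the last step. The $n=0$ case is immediate with $y=x$ only if $\mathcal{L}(x)\subseteq A$ holds; otherwise I would treat it as vacuous or require $n\geq1$, and I would note the convention explicitly.
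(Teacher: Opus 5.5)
Your argument is correct and is the Path-Lemma argument the paper leaves implicit, since the corollary is stated there without proof. One direction transports a single $A$-local witness along the forward path; the other collapses the Path-Lemma witnesses $y_0,\ldots,y_{n-1}$ by moving each $A$-local equality $(y_k=y_{k+1})\in last(h_{k+1})$ down to $last(h_0)$, which works because all $last(h_k)$ are $\sim_A$-related as types, and then chaining there with $(x=y_0)$.

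For the write-up, state the invariant as $(x=y_k)\in last(h_0)$. The version you wrote, $(x=y_k)\in last(h_k)$, is false in general because $x$ need not be $A$-local.

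Your $n=0$ caveat is also right, but that case is not vacuous: the left-to-right direction is genuinely false there. For example, take $\varphi_0=K_a(v=v)$ and $A=\{a\}$, so $Var_\Sigma=\{\top,\bot,v\}$, and let $x=x'=v$ at a type containing $v\neq\top$ and $v\neq\bot$; no $A$-local term equals $v$. So the corollary must be read with $n\geq 1$, which is the only case the paper uses (in the Undefined-Minimum Lemma, with $m<n$).
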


\par\noindent\textbf{From Quasi-Model to Model} We are first defining our \emph{first-order data model} $\bD=(D, I)$ for the restricted vocabulary $\mathcal{V}_\Sigma$: as announced, the \emph{set of `values'} $D$ is the
quotient
\vspace{-2mm}
$$D\,\, :=\,\, (H\times Var_\Sigma)/\approx \, =\, \{[h,x]: (h,x)\in H\times Var_\Sigma\},\vspace{-2mm}$$
where $[h,x]$ denotes the equivalence class of $(h,x)$ modulo $\approx$, defined by
\vspace{-2mm}
$$[h,x]:=\{(h', x')\in H\times Var_\Sigma: (h,x) \approx (h', x')\}  \, \, \mbox{ (for any given pair $(h,x)\in H\times Var_\Sigma$)}.\vspace{-2mm}$$
The partial preorder $ \lessapprox$ on pairs $(h,x)\in H\times Var_\Sigma$ induces a \emph{partial order on the equivalence classes} $[h,x]\in D$, which in its turn can be extended to some \emph{total order on $D$, that we will denote by $\leq_\bD$}.\footnote{Though not unique, such a total order $\leq_\bD$ exists by the Order-Extension Principle, a well-known consequence of the Axiom of Choice.}

The \emph{interpretation function} $I$ will map $n$-ary functional symbols $F\in Funct_\Sigma$ into $n$-ary functions $I(F): D^n\to D$ given by:
$I(F) ([h,x_1], \ldots, [h,x_n]) \, :=\, [h, F(x_1, \ldots, x_n)]$ if $F(x_1, \ldots, x_n)\in Var_\Sigma$; and \linebreak
$I(f) ([h,x_1], \ldots, [h,x_n])  :=\, \bot_\bD$, otherwise; 
it will also map $n$-ary predicate symbols $P\in Pred_\Sigma\setminus \{\leq\}$ into $n$-ary relations $I(P)\subseteq D^n$ given by:
$I(P) \, := \, \{([h,x_1], \ldots, [h,x_n])\in D^n: h\in H, \ux=(x_1, \ldots, x_n)\in Var_\Sigma^n \mbox{ s.t. } P \ux\in last(h)\}$;
while the interpretation $I(\leq)$ will be just the total order $\leq_\bD$ constructed above.
\smallskip

\par\noindent\textbf{The Model} Finally, \emph{our epistemic state model} $\bM= (H, \sim, \underline{\bullet}, \underline{\bullet}(\bullet))$ is given by taking: as set of states, the set $H$ of all histories; the indistinguishability relations $\sima\subseteq H\times H$ are as defined above on histories\footnote{Note that this is meant to be a (standard) model, so the group indistinguishability relations $\simA$ are just
the intersections $\bigcap_{a\in A}\sima$, thus coinciding with the general relations $\simA$ introduced above.}; 
the valuation/assignment map $\underline{\bullet}(\bullet): H\times (V_\Sigma\cup Prop_\Sigma) \to D$ is given by putting: $\underline{h}(v) := [h,v]$ for $v\in V_\Sigma$; and $\underline{h}(p):=\top_\bD$ iff $p\in last (h)$ (-else, $\underline{h}(p):=\bot_\bD$) for $p\in Prop_\Sigma$.

\begin{lemma} \label{IntLemma} (``Interpretation Lemma'')
The interpretation $I$ is well-defined, i.e. we have the following:
\begin{enumerate}
\item $(h,\ux)\approx (h', \uy)$ implies $(h, F(\ux))\approx (h', F(\uy))$;
\item  $(h,\ux)\approx (h', \uy)$ implies that: $(P\ux\,\uz)\in last (h)$ iff $(P\uy\, \uz)\in last (h')$;
\item $I(E)$ is really the identity relation on $D$.
\end{enumerate}
\vspace{-1mm}
\end{lemma}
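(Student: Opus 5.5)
The plan is to reduce all three items to the Path Lemma (Lemma \ref{PathLemma}): $\approx$-equivalence between pairs at different nodes is witnessed by a chain of one-step transfers $\to_=$ / $\ot_=$ along the non-redundant path. It therefore suffices to check that the relevant property is preserved by each single step $\sim$, $\to_=$, $\ot_=$, and then to chain. For tuples $\ux=(x_1,\ldots,x_n)$, $\uy=(y_1,\ldots,y_n)$, we read $(h,\ux)\approx(h',\uy)$ componentwise. For item 1 we assume $F(\ux),F(\uy)\in Var_\Sigma$ (otherwise $I(F)$ is $\bot_\bD$ by definition and there is nothing to check). Since all components share the same path from $h$ to $h'$, the Path Lemma gives, for each $j$, intermediate terms $x_j^k$ at each node $h_k$ of the path.

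\textbf{Single-node case ($h=h'$).} Here we have $(x_j=y_j)\in last(h)$ for all $j$, and type conditions \ref{eq4} and \ref{eq5} give $(F(\ux)=F(\uy))\in last(h)$ and $P\ux\uz\in last(h)\Leftrightarrow P\uy\uz\in last(h)$. The one subtlety is that $F(\ux)$ and $F(\uy)$ must be in $Var_\Sigma$, which is why the $\bot_\bD$ clause is needed.

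\textbf{One-step case.} Suppose $(h,x_j)\to_=(h',y_j)$ via witnesses $w_j$ with $\mathcal{L}(w_j)\subseteq A_j$ and $h\to_{A_j}h'$; since $h'$ is a child of $h$, all $A_j$ equal the single label $A$ of that edge. The naive move is to use $F(\uw)$ as the new witness: $\mathcal{L}(F(\uw))=\bigcup\mathcal{L}(w_j)\subseteq A$, and then use equality reasoning at $h$ and at $h'$. \textbf{This is the main obstacle:} $F(\uw)$ need not lie in $Var_\Sigma$, and $P\uw\uz$ need not lie in $\Sigma$ (for the latter we need the $w_j\in Var_\Sigma$, which holds, so $P\uw\uz\in\Sigma$ by the closure condition; the issue is really for $F$).

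For $F$, I would first try to show that witnesses can be chosen inside a closed family: closure of $Var_\Sigma$ under the relevant constructions, or the observation that at $h$ the term $F(\ux)$ itself can serve as a witness once it is equated with an $A$-local term. If that fails, the fallback is to change the verification of item 1. Instead of transporting a witness, use the Local Knowledge condition: from $(x_j=w_j)\in last(h)$ and the fact that $\mathcal{L}(w_j)\subseteq A$, the value of $F(\uw)$ is $A$-invariant. Then argue via Lemma \ref{equiv}, $last(h)\sim_A last(h')$, that any formula in $\Sigma$ of the form $F(\ux)=u$ with $u$ $A$-local transfers. This requires a term $u\in Var_\Sigma$ with $\mathcal{L}(u)\subseteq A$ equal to $F(\ux)$, which I expect to extract from the definition of $\to_=$ together with the closure conditions of $\Sigma$. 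For $P$, the transfer is cleaner: $P\uw\uz'$ with $\mathcal{L}\subseteq A$ is preserved by $last(h)\sim_A last(h')$, provided the extra arguments $\uz$ are themselves transported along the path.

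\textbf{Item 3.} By construction $I(=)$ is given by the $P$-clause: $([h,x_1],[h,x_2])\in I(=)$ iff $(x_1=x_2)\in last(h)$ for some representatives at a common node. If $[h,x_1]=[h,x_2]$ then $(h,x_1)\approx(h,x_2)$, and the Path Lemma with $n=0$ gives $(x_1=x_2)\in last(h)$, since the trivial path forces the $\sim$ case. Conversely, $(x_1=x_2)\in last(h)$ gives $(h,x_1)\sim(h,x_2)$, hence equal classes. Well-definedness of this clause under changing representatives is exactly item 2 applied to the predicate $=$. So the whole proof rests on the stepwise transfer lemma, with the $Var_\Sigma$-membership of the witnessing terms as the main obstacle.
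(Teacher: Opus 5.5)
Your strategy is the paper's: its whole proof reads ``induction on the length of the non-redundant path, using our conditions on types''. The obstacle you isolate in the one-step case of item 1 is a genuine gap, which you leave open and which the paper's one-line proof does not address either.

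To get $(h,F(\ux))\to_=(h',F(\uy))$ across an edge $h\to_A h'$, you need some $u\in Var_\Sigma$ with $\mathcal{L}(u)\subseteq A$, $(F(\ux)=u)\in last(h)$ and $(u=F(\uy))\in last(h')$. Nothing supplies such a $u$:
\begin{itemize}
\item condition 4 on types only relates $F$-terms at a single node, and only when both lie in $Var_\Sigma$;
\item the relation $\simA$ on types transports only $A$-local formulas of $\Sigma$;
\item $Var_\Sigma$ is not closed under applying $F$ to the witnesses $\overline{w}$, so formulas like $F(\ux)=F(\overline{w})$ are not in $\Sigma$ at all, and your Local Knowledge fallback has nothing to act on.
\end{itemize}
The mixed case, where only one of $F(\ux)$, $F(\uy)$ is in $Var_\Sigma$, is also not ``nothing to check''. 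Read literally, the $\bot_\bD$ clause already gives two values at one node when $(v=\top)\in last(h)$, $F(v)\in Var_\Sigma$ and $F(\top)\notin Var_\Sigma$.

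Your other remarks are right. Item 2 as literally stated fails unless $\uz$ is transported too: take $P$ to be $=$, $\ux=\uy=(\top)$ and $\uz=(v)$ with $v$ taking different values at the two ends. The full-tuple version goes through via the $A$-local formula $P\overline{w}\in\Sigma$, as you describe. Your argument for item 3 is fine.

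In fact the gap cannot be filled with the definitions as given, even in the canonical quasi-model $S_\Sigma$ used for completeness. Take $\varphi_0:=(v=\top)\wedge(F(v)=\top)\wedge K_a\,true$.
\begin{itemize}
\item Then $Var_\Sigma=\{v,F(v),\top,\bot\}$. The formulas of $\Sigma$ with location $\subseteq\{a\}$ are Boolean combinations of atoms over $\top,\bot$ and of $K_a$ applied to Boolean combinations of $true$ and $false$. All of these are provable or refutable in $\mathbf{LDA}$, so any two $\Sigma$-theories are $\sim_a$-related.
\item Let $\Delta_0$ be the $\Sigma$-theory of a state satisfying $v=\top$ in a model with $F_\bD(\top_\bD)=\top_\bD$. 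Let $\Lambda$ be the $\Sigma$-theory of a state satisfying $v=\top$ in a model with $F_\bD(\top_\bD)=\bot_\bD$. By soundness both lie in $S_\Sigma$, and $\varphi_0\in\Delta_0$.
\item With $h_0=(\Delta_0)$ and $h_1=(\Delta_0,\{a\},\Lambda)$, we have $(h_0,v)\to_=(h_1,v)$ via the witness $\top$, since $\mathcal{L}(\top)=\emptyset$.
\item The $\approx$-class of $(h_0,\top)$ is exactly $\{(h',x): (x=\top)\in last(h')\}$. This set is closed under $\sim$, $\to_=$ and $\ot_=$: if $(x=\top)$ and $(x=y)$ hold at one end with $y$ $A$-local, then $(y=\top)$ is an $A$-local formula of $\Sigma$ and so holds at the other end. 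The set contains $(h_0,F(v))$ but not $(h_1,F(v))$.
\end{itemize}
Hence $[h_0,v]=[h_1,v]$ while $[h_0,F(v)]=\top_\bD\neq[h_1,F(v)]$. So item 1 is false, and no choice of $I(F)(\top_\bD)$ makes the Truth Lemma hold for $F(v)=\top$ at both $h_0$ and $h_1$. Repairing this requires changing the construction, for instance by enriching $Var_\Sigma$ and $\Sigma$ with $F$-applications to local terms, or by restricting which $\simA$-edges the tree may use. A better proof of the lemma as stated is not possible.
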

\vspace{-1mm}
\begin{proof}
\vspace{-2mm}
Induction on the length of the non-redundant path from $h$ to $h'$, using our conditions on types.\end{proof}

\vspace{-2mm}

\begin{lemma} \label{Pres-Value}(``Preservation of Values'')
If $h\simA h'$ and $x\in Var_\Sigma$ is s.t. $\mathcal{L}(x)\subseteq A$, then $[h,x]=[h',x]$.
\vspace{-2mm}
\end{lemma}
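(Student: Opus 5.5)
The statement is essentially a restatement of Corollary \ref{Preservation2} in terms of the quotient $D$. By definition $[h,x]$ is the $\approx$-equivalence class of $(h,x)$, so $[h,x]=[h',x]$ holds iff $(h,x)\approx(h',x)$. My plan is therefore to reduce the lemma to Corollary \ref{Preservation2}, re-checking the path argument behind that corollary so that the reduction is self-contained.

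The first step is to take the non-redundant path $h=h_0 \ot_{A^1} h_1 \ot \cdots \ot_{A^i} h_i \to_{A^{i+1}} \cdots \to_{A^n} h_n=h'$ provided by the tree property. By Lemma \ref{path}, the hypothesis $h\simA h'$ gives $A\subseteq A^k$ for every label $A^k$ on this path. Hence $\mathcal{L}(x)\subseteq A\subseteq A^k$ for all $k$.

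The second step is to verify each individual link. For a forward step $h_{k-1}\to_{A^k} h_k$, we have $(h_{k-1},x)\to_= (h_k,x)$ by choosing the witness $y:=x$ and the group $A^k\supseteq\mathcal{L}(x)$. The two membership conditions, $(x=x)\in last(h_{k-1})$ and $(x=x)\in last(h_k)$, must then be checked. Here I need reflexivity of equality inside types. The formula $x=x$ lies in $\Sigma$ by the closure clause for $x,y\in Var_\Sigma$. Since $\Sigma$-theories are maximally consistent and $x=x$ is provable, it is in every theory. For a general type, condition \ref{total} gives $x\leq x\in\Delta$, and condition \ref{antisymm} then gives $x=x\in\Delta$; I will invoke that. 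Backward steps $\ot_=$ are handled symmetrically, since $\ot_=$ is defined with the same witness clause.

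The final step is to chain these links, which yields exactly the pattern required by the Path Lemma (Lemma \ref{PathLemma}), with all $x_k:=x$. It follows that $(h,x)\approx(h',x)$, and so $[h,x]=[h',x]$. In the degenerate case $n=0$ we have $h=h'$, and reflexivity of $\approx$ suffices. The only point needing care is the reflexivity of $=$ within arbitrary types. Everything else is immediate from Lemma \ref{path} and the definition of $\to_=$, so I expect no real obstacle.
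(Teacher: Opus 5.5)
Your proposal is correct and follows the paper's route: the paper proves this lemma in one line, combining Corollary~\ref{Preservation2} with the fact that $[h,x]=[h',x]$ iff $(h,x)\approx(h',x)$. Your re-derivation of that corollary's path argument adds one detail the paper leaves implicit, namely the check via conditions~\ref{total} and~\ref{antisymm} that $(x=x)$ belongs to every type.
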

\vspace{-2mm}
\begin{proof}
\vspace{-1mm}
This follows immediately from Corollary \ref{Preservation2} and the definition of $[h,x]$.\end{proof}

\vspace{-1mm}

The next results use the following notation, for $h\in H$, $x\in Var_\Sigma$, $\varphi\in\Sigma$ and $A\subseteq \Agents$:
$$Val_h (x_A^\varphi)\,\, :=\,\, \{[h', x]\in D \mid h'\simA h, \varphi \in last (h')\}$$
\begin{lemma}\label{Trivial-Min}(``Trivial-Minimum Lemma'')
If $\mu_N x_A^\varphi\in Var_\Sigma$ and $h\in H$ are s.t. $Val_h (x_A^\varphi)=\emptyset$, then $[h, \mu_N x_A^\varphi]=\top_\bD$.
\vspace{-2mm}\end{lemma}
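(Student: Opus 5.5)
The plan is to reduce the semantic hypothesis $Val_h(x_A^\varphi)=\emptyset$ to the syntactic fact $(K_A\neg\varphi)\in last(h)$. Condition \ref{Trivial Min} on types then gives $(\mu_N x_A^\varphi=\top)\in last(h)$, and the one-node case of $\approx$ turns this into equality of values.

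\emph{Step 1 (membership in the closure).} Since $\mu_N x_A^\varphi\in Var_\Sigma$, the last closure condition for $\Sigma$ applied with $z:=x$ and $Y:=\emptyset$ gives $K_A^\varphi(x\in\emptyset)\in\Sigma$. By definition $x\in\emptyset=\bigvee\emptyset=false$, so this formula is $K_A(\varphi\to false)=K_A\neg\varphi$. Hence $K_A\neg\varphi\in\Sigma$, and so $\neg\varphi\in\Sigma$ and $\varphi\in\Sigma$ by subformula closure.

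\emph{Step 2 (deriving $K_A\neg\varphi\in last(h)$).} Apply Corollary \ref{KK} to $K_A\neg\varphi$. Suppose $K_A\neg\varphi\notin last(h)$. Then there is some $h'\simA h$ with $\neg\varphi\notin last(h')$. By the negation condition \ref{neg} on types, and since $\neg\varphi$ is the single negation $\sim\varphi$, we get $\varphi\in last(h')$. Then $[h',x]\in Val_h(x_A^\varphi)$, contradicting $Val_h(x_A^\varphi)=\emptyset$. Therefore $K_A\neg\varphi\in last(h)$.

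\emph{Step 3 (conclusion).} Condition \ref{Trivial Min} on types yields $(\mu_N x_A^\varphi=\top)\in last(h)$. So $(h,\mu_N x_A^\varphi)\sim(h,\top)$, hence $(h,\mu_N x_A^\varphi)\approx(h,\top)$, i.e.\ $[h,\mu_N x_A^\varphi]=[h,\top]$.

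It remains to check that $[h,\top]$ really is $\top_\bD=I(\top)$. The interpretation of the $0$-ary symbol $\top$ is $[h,\top]$, so it suffices that this class is the same for every $h\in H$. Since $\mathcal{L}(\top)=\emptyset$, the witness $y:=\top$ satisfies $\mathcal{L}(y)\subseteq A^i$ for every transition label $A^i$. Together with $(\top=\top)\in last(h_i)$ at each node, this gives a chain of $\to_=$ and $\ot_=$ steps along the non-redundant path between any two histories. The Path Lemma \ref{PathLemma} (equivalently, the argument of Corollary \ref{Preservation2} with location $\emptyset$) then gives $(h,\top)\approx(h'',\top)$ for all $h,h''\in H$.

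I expect this last bookkeeping point to be the only delicate step: making sure $\top_\bD$ is well defined as a single element of $D$, uniformly in $h$. The earlier steps are a direct combination of the closure conditions, Corollary \ref{KK}, and condition \ref{Trivial Min}.
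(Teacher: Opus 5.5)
Your proof is correct and is essentially the paper's argument. The paper also first derives $(K_A\neg\varphi)\in last(h)$ by contradiction, via $\langle K_A\rangle\varphi\in last(h)$ and the Diamond Lemma (which is what Corollary~\ref{KK} packages), and then applies condition~(\ref{Trivial Min}) on types. Your two extra checks fill in points the paper leaves implicit: that $K_A\neg\varphi\in\Sigma$, via the closure condition with $Y=\emptyset$, and that $[h,\top]$ is independent of $h$ and hence equals $\top_\bD$. One small wording fix: if $\varphi$ is itself a negation, get $\varphi\in last(h')$ by applying condition~\ref{neg} to $\neg\varphi$, whose single negation is $\varphi$.
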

\vspace{-2mm}
\begin{proof}
\vspace{-1mm}
First, we prove an auxiliary
\emph{Claim:} $(K_A\neg\varphi)\in last(h)$. To show this,
suppose towards a contradiction that $(K_A\neg\varphi)\not\in last(h)$. This implies that $\langle K_A\rangle \varphi\in last(h)$ (given the closure conditions on types and the fact that $\mu_N x_A^\varphi\in Var_\Sigma$ implies  $\langle K_A\rangle \varphi\in\Sigma$). By the Diamond Lemma \ref{Diamond}, there exists some $h'\simA h$ with $\varphi\in last(h')$, and thus $[h', x]\in Val_h (x_A^\varphi)$. But this contradicts the assumption that $Val_h (x_A^\varphi)=\emptyset$.

Using now the above Claim, and applying condition (\ref{Trivial Min}) on types, we obtain that $(\mu_N x_A^\varphi=\top)\in last(h)$, i.e. $[h, \mu_N x_A^\varphi]=\top_\bD$, as desired.\end{proof}

\begin{lemma}\label{Trivial-Val}(``Trivial-Value Lemma'')
If $[h, \mu_N x_A^\varphi]=\top_\bD$, then $ Val_h(x_A^\varphi)\subseteq \{\top_\bD\}$. 
\vspace{-1mm}
\end{lemma}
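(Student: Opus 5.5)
To prove the Trivial-Value Lemma, take an arbitrary element of $Val_h(x_A^\varphi)$, i.e. a value $[h',x]$ with $h'\simA h$ and $\varphi\in last(h')$, and show that $[h',x]=\top_\bD$. The strategy is to transport the information $[h,\mu_N x_A^\varphi]=\top_\bD$ from $h$ to $h'$, and then use the Lower Bound condition on types at $h'$.

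\emph{Step 1 (transport).} Since $\mathcal{L}(\mu_N x_A^\varphi)=A$ and $h\simA h'$, the Preservation of Values Lemma \ref{Pres-Value} gives $[h',\mu_N x_A^\varphi]=[h,\mu_N x_A^\varphi]=\top_\bD$.

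\emph{Step 2 (lower bound).} Condition (\ref{Lower Bound}) on types is stated unconditionally as $(\mu_N x_A^\varphi\leq x)\in\Delta$. Read together with the axiom it encodes, it should be understood as requiring $\varphi\in\Delta$. Since we have $\varphi\in last(h')$, I would use the axiom form: the formula $\varphi\to(\mu_N x_A^\varphi\leq x)$ is an $\mathbf{LDA}$-theorem, and it lies in the theory $last(h')$ modulo closure. Either way we obtain $(\mu_N x_A^\varphi\leq x)\in last(h')$. Hence $(h',\mu_N x_A^\varphi)\lesssim(h',x)$, so $(h',\mu_N x_A^\varphi)\lessapprox(h',x)$. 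Because $\leq_\bD$ extends the order induced by $\lessapprox$ on classes, this gives $\top_\bD=[h',\mu_N x_A^\varphi]\leq_\bD[h',x]$.

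\emph{Step 3 (conclude).} By condition (\ref{bounds}) on types, $(x\leq\top)\in last(h')$, so $[h',x]\leq_\bD[h',\top]$. Here I must make sure that $\top_\bD$ is interpreted as $[h',\top]$, i.e. that all classes $[h,\top]$ coincide. This holds by Corollary \ref{Preservation2}, since $\mathcal{L}(\top)=\emptyset$ and every two histories are $\sim_\emptyset$-related (or directly from the Path Lemma, using $\top$ as the linking term $y$). By antisymmetry of the total order $\leq_\bD$, we get $[h',x]=\top_\bD$.

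\emph{Main obstacle.} The delicate point is Step 2's reliance on the Lower Bound condition being conditional on $\varphi$; I will check that the type condition, as used in the closure, is indeed available in the form ``if $\varphi\in\Delta$ then $(\mu_N x_A^\varphi\leq x)\in\Delta$''. A secondary point is that $\leq_\bD$ must genuinely extend $\lessapprox$, with $\top_\bD$ as its top element. This needs the extension to respect $[h,x]\leq[h,\top]$, which holds since such pairs are already related by $\lesssim$.
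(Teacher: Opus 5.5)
Your argument is correct and follows essentially the same route as the paper's proof. It transports along $\simA$ using $\mathcal{L}(\mu_N x_A^\varphi)=A$, applies the Lower Bound and $\top$-bound conditions at $last(h')$, and closes with antisymmetry; the difference is only one of level: the paper reads off $(\mu_N x_A^\varphi=\top)\in last(h)$ from the Path Lemma, transfers this formula to $last(h')$ via Lemma~\ref{equiv}, and uses antisymmetry inside the type to contradict $(x\neq\top)\in last(h')$, whereas you transport the value by Lemma~\ref{Pres-Value} and use antisymmetry of $\leq_\bD$ (also, in Step~2 rely on the Lower Bound type condition, which applies since $\varphi\in last(h')$, not on $\mathbf{LDA}$-theoremhood, because types in an arbitrary quasi-model need not be theories).
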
 
\vspace{-1mm}
\begin{proof}
\vspace{-2mm} 
From $[h,\mu_N x_A^\varphi]=\top_\bD$, we obtain that $(h,\mu_N x_A^\varphi)\approx (h,\top)$. By the Path Lemma \ref{PathLemma}, we must have
$(\mu_N x_A^\varphi=\top)\in last(h)$. Suppose now (towards a contradiction) that $Val_h(x_A^\varphi)\not\subseteq \{\top_\bD\}$, i.e. there exists some $h'\simA h$ with $\varphi, (x\neq\top)\in last (h')$. By Lemma \ref{equiv}, $h\simA h'$ implies that $last(h)\simA last(h')$. From this, together with the fact that $(\mu_N x_A^\varphi=\top)\in last(h)$ and that $\mathcal{L}(\mu_N x_A^\varphi=\top)=A$, =
we derive that $(\mu_N x_A^\varphi=\top)\in last(h')$ (by the definition of $\simA$ on types). 
Using condition \ref{Lower Bound} on types, we get that $(\top\leq x)\in last(h')$, which together with conditions \ref{bounds} and \ref{antisymm} on types, gives us that $(x=\top)\in last(h')$, contradicting the above assumption that $(x\neq\top)\in last (h')$.
\end{proof}

\begin{lemma} \label{Undefined-Min}(``Undefined-Minimum Lemma'')
Let $\mu_N x_A^\varphi\in Var_\Sigma$ be s.t. $[h, \mu_N x_A^\varphi]=\bot_\bD$. Then we have either $\bot_\bD\in 
Val_h (x_A^\varphi)$ or else $|Val_h (x_A^\varphi)|>N$
\vspace{-1mm}
\end{lemma}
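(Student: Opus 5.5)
Assume $[h,\mu_N x_A^\varphi]=\bot_\bD$ and $\bot_\bD\notin Val_h(x_A^\varphi)$. We will show $|Val_h(x_A^\varphi)|>N$, arguing by contraposition: suppose $|Val_h(x_A^\varphi)|\leq N$ and derive $[h,\mu_N x_A^\varphi]\neq\bot_\bD$. As in the proof of the Trivial-Value Lemma \ref{Trivial-Val}, the Path Lemma \ref{PathLemma} applied to $(h,\mu_N x_A^\varphi)\approx(h,\bot)$ (trivial path) gives $(\mu_N x_A^\varphi=\bot)\in last(h)$. By Lemma \ref{equiv} and $\mathcal{L}(\mu_N x_A^\varphi=\bot)=A$, this formula is also in $last(h')$ for every $h'\simA h$.

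First we rule out the empty case. If $Val_h(x_A^\varphi)=\emptyset$, the Trivial-Minimum Lemma \ref{Trivial-Min} gives $[h,\mu_N x_A^\varphi]=\top_\bD\neq\bot_\bD$, using $\top\neq\bot$ in types and the Path Lemma. So we may take some $h_1\simA h$ with $\varphi\in last(h_1)$.

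The main step uses condition \ref{Reaching the Min} on types at $last(h_1)$. Since $(\mu_N x_A^\varphi=\bot)\in last(h')$ for all $h'\simA h$, and $\bot\notin Val_h(x_A^\varphi)$, every $h'\simA h$ with $\varphi\in last(h')$ satisfies $[h',x]\neq\bot_\bD$. Hence $(x\neq\bot)\in last(h')$, so $(x\neq\mu_N x_A^\varphi)\in last(h')$. By Corollary \ref{KK} (with Lemma \ref{equiv} transferring $\simA$ from $h$ to $h_1$), we get $K_A^\varphi(\mu_N x_A^\varphi\neq x)\in last(h_1)$; this needs the closure conditions to put $K_A(\varphi\to \mu_N x_A^\varphi\neq x)$ in $\Sigma$.

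Next we choose $Y$. Since $|Val_h(x_A^\varphi)|\leq N$, the possible values are $d_1,\dots,d_k$ with $k\leq N$. Take $Y=\{1_N.x_A^\varphi,\dots,N_N.x_A^\varphi\}\subseteq Var_\Sigma$; these are in $Var_\Sigma$ by the closure condition on $Var_\Sigma$ for $\mu_N x_A^\varphi=1_N.x_A^\varphi$. They have location $A$, and $|Y|\leq N$. Condition \ref{Reaching the Min} then yields $\langle K_A^\varphi\rangle(x\notin Y)\in last(h_1)$. By the Diamond Lemma \ref{Diamond} there is $h_2\simA h$ with $\varphi\in last(h_2)$ and $x\neq y$ for all $y\in Y$.

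\textbf{The expected obstacle} is turning this into a contradiction. The $k\leq N$ values must be shown to be named by terms in $Y$, or by some other $A$-local terms in $Var_\Sigma$. For the latter, the natural route is to exploit how values are created in the tree: each $[h',x]$ with $h'\simA h$ is linked, via the Path Lemma and Corollary \ref{EasyEqual}, to $A$-local terms $y$ at intermediate nodes. Because there are at most $N$ distinct values, one selects at most $N$ such $A$-preserved terms $Y'$ naming all of them. Then $K_A^\varphi(x\in Y')$ holds in $last(h_1)$ by Corollary \ref{KK}, contradicting $\langle K_A^\varphi\rangle(x\notin Y')$ obtained from condition \ref{Reaching the Min} with $Y'$ in place of $Y$. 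If such $A$-local names cannot be guaranteed in $Var_\Sigma$, I would instead argue that condition \ref{Undefined Min} read contrapositively, together with the fact that $(|x|_A^\varphi\leq N)$ is decided in the types, forces $|x|_A^\varphi>N$ in $last(h)$. One would then extract $N+1$ pairwise distinct values via Corollary \ref{KK} and the Interpretation Lemma \ref{IntLemma}.
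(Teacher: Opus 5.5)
\textbf{There is a genuine gap: the step you flag as the obstacle is the whole content of the lemma, and neither of your routes closes it.}

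Your preliminary steps are sound. You obtain $(\mu_N x_A^\varphi=\bot)\in last(h)$ and $K_A^\varphi(\mu_N x_A^\varphi\neq x)\in last(h_1)$, you rule out $Val_h(x_A^\varphi)=\emptyset$, and you rightly see that $Y=Var^N(x_A^\varphi)$ gives no contradiction. The trouble is the first route. You assert that each $[h',x]\in Val_h(x_A^\varphi)$ is ``linked via the Path Lemma and Corollary \ref{EasyEqual}'' to $A$-local terms, but that is not an argument. The Path Lemma only describes pairs that are already known to be $\approx$-related. Nothing in the types forces $x$ to equal some $y\in Var_\Sigma$ with $\mathcal{L}(y)\subseteq A$ at a $\varphi$-node.

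The naming property does hold under your assumption $|Val_h(x_A^\varphi)|\leq N$, but only by an argument you do not give. Suppose no $A$-local $y$ has $(x=y)\in last(h')$. Take the copies $h'=h'_0\to_A h'_1\to_A h'_2\to_A\cdots$ with $h'_{k+1}:=(h'_k,A,last(h'))$. They all lie in the $\simA$-class of $h$ and contain $\varphi$, and by Corollary \ref{EasyEqual} their $x$-values are pairwise distinct. This contradicts $|Val_h(x_A^\varphi)|\leq N$. With that in hand you can finish:
\begin{itemize}
\item choose $Y'$ with $|Y'|\leq N$ and $\mathcal{L}(Y')\subseteq A$ naming all of $Val_h(x_A^\varphi)$;
\item get $(x\in Y')$ at every $\varphi$-node $\simA h$, by Lemma \ref{Pres-Value} and the trivial-path case of Lemma \ref{PathLemma};
\item hence $K_A^\varphi(x\in Y')\in last(h_1)$, which contradicts condition (\ref{Reaching the Min}).
\end{itemize}
As written, this step is missing.

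Your fallback also fails. Condition (\ref{Undefined Min}) says that $|x|_A^\varphi>N$ implies $\mu_N x_A^\varphi=\bot$. Its contrapositive says nothing when $\mu_N x_A^\varphi=\bot$ holds, so reading it ``contrapositively'' does not force $|x|_A^\varphi>N$. That formula can in fact be placed in $last(h)$, via condition (\ref{Reaching the Min}) with $Y=Var^N(x_A^\varphi)$. But it only yields a single $\varphi$-node $\simA h$ at which $x$ avoids the terms $i_N.x_A^\varphi$, which here may all denote $\bot$. It does not give $N+1$ distinct tree values. Passing from that syntactic fact to the semantic cardinality is precisely what has to be proved.

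The paper does this by iterating condition (\ref{Reaching the Min}) along a chain $h_0\to_A h_1\to_A\cdots\to_A h_N$ of $\varphi$-nodes $\simA h$. At step $n$ it records an $A$-local witness $y_n$ with $(x=y_n)\in last(h_{n-1})$ if one exists, and $y_n:=\bot$ otherwise. It then applies the condition to $Y=\{y_1,\ldots,y_n\}$, which gives $x\neq y_i$ for all $i\leq n$ at $h_n$. Distinctness of the $N+1$ values $[h_n,x]$ follows from Corollary \ref{EasyEqual} together with the $\simA$-invariance of the $A$-local formula $y=y_{m+1}$.
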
 
\vspace{-1mm}
\begin{proof}
\vspace{-2mm}
Assume towards a contradiction that $[h, \mu_N x_A^\varphi]=\bot_\bD$, but $\bot_\bD\not\in Val_h (x_A^\varphi)$; i.e.: $(\mu_N x_A^\varphi=\bot)\in last(h)$, but
$(x\neq\bot)\in last(h')$ for all $h'\simA h$ with $\varphi\in last(h')$. By Corollary \ref{KK}, $K_A^\varphi (x\neq \bot)\in last(h)$. 
This, together with $K_A (\mu_N x_A^\varphi=\bot)\in last(h)$ (which follows from $(\mu_N x_A^\varphi=\bot)\in last(h)$, by Proposition \ref{Quasi-model properties}(2), and with the closure conditions on $\Sigma$ and $\mathcal{L}(\mu_N x_A^\varphi=\bot)=A$), yield $K_A^\varphi (\mu_N x_A^\varphi \neq x)\in last(h)$. 

To show that $|Val_h (x_A^\varphi)|>N$, we will construct a sequence 
\vspace{-2mm}
$$h_0\to_A h_1\to_A \ldots \to_A h_n\to_A \ldots \to_A h_N,  \,\, \mbox{ with }  h_n \simA h \mbox{ and } \varphi\in last(h_n) \mbox{ for all $n$},
\vspace{-2mm}$$  
together with a sequence of $N$ `witnesses'
$y_1, \ldots, y_n, \ldots, y_{N}\in Var_\Sigma, \,\, \mbox{ with } \mathcal{L}(y_n)\subseteq A \mbox{ for all $n$}.$

The construction is by recursion on $n\leq N$.
For the base step ($n=0$), note that $(\langle K_A\rangle \varphi)\in last(h)$ (since otherwise we'd have $(\mu_N x_A^\varphi=\top)\in last(h)$ by condition (\ref{Trivial Min}) on types, contradicting the fact that $(\mu_N x_A^\varphi=\bot)\in last(h)$, given also condition (\ref{non-trivial}) on types). By condition (**) on quasi-models, there exists $\Delta_0\in S$ s.t. $last(h)\simA \Delta_0$ and $\varphi\in \Delta_0$. Take now $h_0: =(h, A, \Delta_0)$, which fulfills the desired specifications.

For the step $n$ of the induction (with $1\leq n\leq N$, assuming given $h_{n-1}$ and $y_1, \ldots, y_{n-1}$ with the above properties), we first take the $n^{th}$ witness $y_n$ to be any term in $Var_\Sigma$ with $\mathcal{L}(y_n)\subseteq A$ and $(x= y_n)\in last(h_{n-1})$, if such a term exists; and otherwise, we just put $y_n:=\bot$. Next, we note that, by the induction hypothesis, we have $h_{n-1}\simA h$ and $\varphi\in last(h_{n-1})$, hence $last(h_{n-1})\simA last(h)$. Together 
with the fact that $K_A^\varphi (\mu_N x_A^\varphi \neq x)\in last(h)$, this gives us that $\varphi, K_A^\varphi (\mu_N x_A^\varphi \neq x)\in last(h)$. By condition (\ref{Reaching the Min}) on types (applied to $Y:= \{y_1,\ldots, y_n\}$, where note that $\mathcal{L}(Y)\subseteq A$ and $|Y|\leq N$), we obtain 
$\langle K_A^\varphi \rangle \bigwedge_{1\leq i\leq n} (x\neq y_i)\in last(h_{n-1})$. Using again clause (**) on quasi-models, there exists some $\Delta_n\in S$ s.t.
$last(h_{n_1})\simA \Delta_n$ and $\varphi, \bigwedge_{1\leq i\leq n} (x\neq y_i)\in \Delta_n$. Take now $h_n: =(h_{n-1}, A, \Delta_n)$, which obviously fulfills the desired specifications.

Given this construction, it is clear that $[h_n, x]\in Val_h (x_A^\varphi)$ for all $0\leq n\leq N$. We will prove that \emph{all these $N+1$ values are distinct} (which immediately gives us that $|Val_h (x_A^\varphi)|>N$, as desired): 

Let $1\leq n\leq N$. It is enough to show that $[h_n, x]\neq [h_m, x]$ for all $m<n$. Suppose not, i.e. assume towards a contradiction that we have $(h_n, x)\approx (h_m, x)$ for some $m<n$. Given that the non-redundant path from $h_m$ to $h_n$ has the shape $h_m\to_A h_{m+1} \to_A \ldots \to_A h_n$, we can apply Corollary \ref{EasyEqual} to conclude that there exists some $y\in Var_\Sigma$ with $\mathcal{L}(y)\subseteq A$, $(y=x)\in last(h_m)$ and $(y=x)\in last(h_n)$. On the other hand, we also have by construction that $(x= y_{m+1})\in last(h_m)$ and $(x\neq y_{m+1})\in last (h_n)$ (since $m+1\leq n$). Putting all these together and using our conditions on equality, we obtain that $(y=y_{m+1})\in last(h_m)$ and $(y\neq y_{m+1})\in last(h_n)$. But this contradicts the fact that $last(h_m)\simA last(h_n)$ (since $h_m\simA h_n$ by construction), given that $\mathcal{L}(y=y_{m+1})\subseteq A$ (and given the definition of the relation $\simA$ on types).
\end{proof}

\begin{lemma} \label{Least-Value}(``Least-of-$N$ Values Lemma'')
If $\mu_N x_A^\varphi\in Var_\Sigma$ is s.t. $[h,\mu_N x_A^\varphi]\neq\bot_\bD, \top_\bD$, then:\begin{enumerate}
\item $(|x|_A^\varphi\leq N)\in last(h)$;
\item $|Val_h (x_A^\varphi)|\leq N$;
\item $[h,\mu_N x_A^\varphi]= min\, Val_h (x_A^\varphi)$; i.e., 
there exists some history $h_0\in H$, satisfying:
$h_0\simA h$;
$\varphi\in last(h_0)$; and
$[h, \mu_N x_A^\varphi] =  [h_0, x] \leq_\bD [h', x]$ for all $[h',x]\in Val_h (x_A^\varphi)$.
\end{enumerate}
\vspace{-2mm}
\end{lemma}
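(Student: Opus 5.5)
We prove the three items in order, using the syntactic conditions on types together with the Diamond Lemma \ref{Diamond}, Corollary \ref{KK} and the Path Lemma \ref{PathLemma}.

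\emph{Item 1.} Suppose $(|x|_A^\varphi\leq N)\notin last(h)$, i.e. $(|x|_A^\varphi>N)\in last(h)$; this formula lies in $\Sigma$ by the closure clause for $\mu_N x_A^\varphi$, taking $Y=Var^N(x_A^\varphi)$, which is in $Var_\Sigma$ by the $j_N$-closure of $Var_\Sigma$. Then condition (\ref{Undefined Min}) on types gives $(\mu_N x_A^\varphi=\bot)\in last(h)$, so $[h,\mu_N x_A^\varphi]=[h,\bot]=\bot_\bD$, contradicting the hypothesis. We need $\bot_\bD$ to be read as $[h,\bot]$, and we need that class to be independent of $h$ and least in $\leq_\bD$. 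The first follows from Corollary \ref{Preservation2} since $\mathcal{L}(\bot)=\emptyset$; the second from condition (\ref{bounds}), transported by $\lessapprox$. The same observations apply to $\top$.

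\emph{Item 2.} By item 1 and Corollary \ref{KK}, every $h'\simA h$ with $\varphi\in last(h')$ satisfies $(x\in Var^N(x_A^\varphi))\in last(h')$, i.e. $(x=i_N.x_A^\varphi)\in last(h')$ for some $i\leq N$. Each $i_N.x_A^\varphi$ has location $A$, so by Preservation of Values (Lemma \ref{Pres-Value}) we get $[h',x]=[h',i_N.x_A^\varphi]=[h,i_N.x_A^\varphi]$. Hence $Val_h(x_A^\varphi)\subseteq\{[h,i_N.x_A^\varphi]:i\leq N\}$, which has at most $N$ elements.

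\emph{Item 3.} This is the main step. First, $Val_h(x_A^\varphi)\neq\emptyset$: otherwise Lemma \ref{Trivial-Min} gives value $\top_\bD$. Next we show that the minimum is actually reached. Suppose, towards a contradiction, that $K_A^\varphi(\mu_N x_A^\varphi\neq x)\in last(h)$; this is the negation of $\langle K_A^\varphi\rangle(\mu_N x_A^\varphi=x)$, which is in $\Sigma$. Pick any $h'\simA h$ with $\varphi\in last(h')$. Since $\mathcal{L}(K_A^\varphi(\cdots))=A$ and $last(h')\simA last(h)$, this formula also lies in $last(h')$. Apply condition (\ref{Reaching the Min}) at $last(h')$ with $Y=Var^N(x_A^\varphi)$. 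This yields $\langle K_A^\varphi\rangle(x\notin Y)\in last(h')$, which contradicts $K_A^\varphi(x\in Y)$; the latter belongs to $last(h')$ by item 1 and preservation along $\simA$. Hence $\langle K_A^\varphi\rangle(\mu_N x_A^\varphi=x)\in last(h)$. The Diamond Lemma then gives $h_0\simA h$ with $\varphi,(\mu_N x_A^\varphi=x)\in last(h_0)$, so $[h_0,x]=[h_0,\mu_N x_A^\varphi]=[h,\mu_N x_A^\varphi]$ by Lemma \ref{Pres-Value}.

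Finally, for the lower bound, take any $[h',x]\in Val_h(x_A^\varphi)$. Condition (\ref{Lower Bound}) gives $(\mu_N x_A^\varphi\leq x)\in last(h')$, so $(h',\mu_N x_A^\varphi)\lessapprox(h',x)$, hence $[h',\mu_N x_A^\varphi]\leq_\bD[h',x]$. The left side equals $[h,\mu_N x_A^\varphi]$ by preservation. The delicate points are that $\leq_\bD$ extends the order induced by $\lessapprox$, which holds by construction, and that all the relevant formulas lie in $\Sigma$, which should follow from the closure clause for $\mu_N$-terms.
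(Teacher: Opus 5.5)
Your proposal is correct and follows essentially the same route as the paper: item 1 uses the Undefined-Minimum condition on types, item 2 uses $|x|_A^\varphi\leq N$ together with preservation of the $A$-located terms $i_N.x_A^\varphi$, and item 3 combines the Trivial-Minimum Lemma (non-emptiness), the Reaching-the-Minimum condition with $Y=Var^N(x_A^\varphi)$ plus the Diamond Lemma (attainment), and the Lower Bound condition plus Preservation of Values (minimality). Your explicit contradiction argument in item 3 just unpacks the contrapositive that the paper uses directly when it applies the Reaching-the-Minimum condition at a $\varphi$-history $h_1\sim_A h$, so nothing substantive differs.
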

\vspace{-2mm}
\begin{proof}\vspace{-2mm}
For part 1, from $[h,\mu_N x_A^\varphi]\neq\bot_\bD$ we obtain that $(\mu_N x_A^\varphi\neq \bot)\in last (h)$, so by condition (\ref{Undefined Min}) on types we have $(|x|_A^\varphi\leq N)\in last(h)$, as desired.

For part 2, we use part 1 and Lemma \ref{K}, to obtain that $(\bigvee_{1\leq i\leq n} x=i_N.x_A^\varphi)\in last(h')$ holds for all $[h',x]\in Val_h (x_A^\varphi)$; hence, for every $[h',x]\in Val_h (x_A^\varphi)$ there exists $i\in \{1, \ldots, N\}$ s.t. $[h', x]=[h', i_N.x_A^\varphi]= [h, i_N.x_A^\varphi]$ (where the last equality follows by the Corollary \ref{Preservation2} from the
fact that $\mathcal{L}(i_N. x_A^\varphi)=A$ together with $h'\simA h$).
Thus, we have that $Val_h (x_A^\varphi) \subseteq \{ [h, i_N.x_A^\varphi] \mid 1\leq i\leq N\}$, which implies that $|Val_h (x_A^\varphi)|\leq N$.

For part 3, we use the assumption that $[h,\mu_N x_A^\varphi]\neq\top_\bD$ and Lemma \ref{Trivial-Min} to obtain $Val_h (x_A^\varphi)\neq \emptyset$; i.e., there exists $h_1\simA h$ s.t. $\varphi \in last(h_1)$. Putting this together with the fact that $(|x|_A^\varphi\leq N)\in last(h_1)$ (which follows from part 1 and $h_1\simA h$, using $\mathcal{L}(|x|_A^\varphi\leq N)=A$ and Lemma \ref{equiv}), and applying condition (\ref{Reaching the Min}) on types (with $Y:= Var^N (x_A^\varphi)=\{i_N. x_A^\varphi : 1\leq i\leq N\}$, while recalling that by definition $|x|_A^\varphi\leq N := K_A^\varphi (x\in  Var^N (x_A^\varphi))$, we conclude that $\langle K_A^\varphi \rangle (x=\mu_N x_A^\varphi)\in last(h_1)$. By the Diamond Lemma \ref{Diamond}, there exists some $h_0\simA h_1\simA h$ with $\varphi, (x=\mu_N x_A^\varphi)\in last(h_0)$, hence $[h_0, x]=[h_0, \mu_N x_A^\varphi] = [h, \mu_N x_A^\varphi]$ (with the last equality due to $h_0\simA h$ and $\mathcal{L}(\mu_N x_A^\varphi)=A$, by Lemma \ref{Pres-Value}). Finally, to prove that $[h,\mu_N x_A^\varphi]= min\, Val_h (x_A^\varphi)$, let  $[h',x]\in Val_h (x_A^\varphi)$, i.e., $h'\simA h$ with $\varphi\in last(h)$, and we need to show that $[h, \mu_N x_A^\varphi]\leq_\bD [h', x]$.  By condition (\ref{Lower Bound}) on types, we have $(\mu_N x_A^\varphi\leq x)\in last(h')$, hence
$[h', \mu_N x_A^\varphi]\leq_\bD [h', x]$. Since $h\simA h'$ and $\mathcal{L}(\mu_N x_A^\varphi)=A$, Lemma \ref{Pres-Value} gives us that $[h, \mu_N x_A^\varphi]= [h', \mu_N x_A^\varphi]\leq_\bD [h', x]$, as desired.\end{proof}

\begin{lemma} \label{Truth}(``Truth Lemma'')
For every 
$\varphi\in \Sigma$ and 
$x\in Var_\Sigma$, the following hold for all 
$h\in H$:
\begin{enumerate}
\item $h\models_\bM \varphi$ iff $\varphi\in last(h)$;
\item $\underline{h}(x)_\bM=[h,x]$.
\end{enumerate}
\end{lemma}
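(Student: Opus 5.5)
The plan is a simultaneous induction on the sub-expression complexity order $<$ restricted to $\Sigma\cup Var_\Sigma$. This order is well-founded, and both sets are closed under the relevant subexpressions. Parts 1 and 2 are proved together for all $h\in H$ at once, because formulas depend on terms ($P\ux$, $K_A^\varphi(z\in Y)$) and terms depend on formulas ($\varphi\!\!\to\!\! x|y$, $\mu_N x_A^\varphi$).

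\emph{Formula cases.} For an atom $p$, the claim holds by the definition of $\underline{h}(p)$. The Boolean cases use conditions (\ref{neg}), (\ref{conj}) on types. The case $K_A\varphi$ is exactly Corollary \ref{KK} combined with the induction hypothesis for $\varphi$ at every $h'\simA h$. For $P\ux$ with $P\neq\leq$, the induction hypothesis gives $h(x_i)=[h,x_i]$. Then $h\models P\ux$ iff $([h,x_1],\ldots,[h,x_n])\in I(P)$. By the definition of $I(P)$ together with Interpretation Lemma \ref{IntLemma}(2), this holds iff $P\ux\in last(h)$. For equality, we use Lemma \ref{IntLemma}(3): $[h,x]=[h,y]$ iff $(h,x)\approx(h,y)$, and by the Path Lemma with $n=0$ this holds iff $(x=y)\in last(h)$. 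For $x\leq y$, one direction is immediate: if $(x\leq y)\in last(h)$, then $(h,x)\lesssim(h,y)$, so $[h,x]\leq_\bD[h,y]$. Conversely, if $(x\leq y)\notin last(h)$, totality (\ref{total}) and anti-symmetry (\ref{antisymm}) give $(y\leq x)\in last(h)$ and $(x\neq y)\in last(h)$. Hence $[h,y]\leq_\bD[h,x]$ and $[h,x]\neq[h,y]$, and antisymmetry of $\leq_\bD$ then yields $[h,x]\not\leq_\bD[h,y]$.

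\emph{Term cases.} For a basic variable $v$, the claim is the definition of $\underline{h}(v)$. For $F(\ux)$, the claim follows from the definition of $I(F)$, using that $F(\ux)\in Var_\Sigma$ and Lemma \ref{IntLemma}(1) for well-definedness. For $\varphi\!\!\to\!\! x|y$, we have $\varphi\in\Sigma$, so the induction hypothesis on $\varphi$ and condition (\ref{cases}) give $(\varphi\!\!\to\!\! x|y = x)\in last(h)$ or $(\ldots = y)\in last(h)$. The equality case above then converts this to the required identity of classes.

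\emph{Main case: $\mu_N x_A^\varphi$.} By the induction hypothesis on $x$ and $\varphi$ at all $h'\simA h$, the semantic set $(x_A^\varphi)_{h,\bM}$ equals $Val_h(x_A^\varphi)$. We must show $min_N\, Val_h(x_A^\varphi)=[h,\mu_N x_A^\varphi]$, and we split on the value of $[h,\mu_N x_A^\varphi]$.
\begin{enumerate}
\item If it is neither $\bot_\bD$ nor $\top_\bD$, the Least-of-$N$ Lemma \ref{Least-Value} gives $1\leq|Val_h|\leq N$ with minimum $[h,\mu_N x_A^\varphi]$, as required.
\item If it is $\top_\bD$, Lemma \ref{Trivial-Val} gives $Val_h\subseteq\{\top_\bD\}$, and either set ($\emptyset$ or $\{\top_\bD\}$) has $min_N=\top_\bD$.
\item If it is $\bot_\bD$, Lemma \ref{Undefined-Min} gives either $|Val_h|>N$, whence $min_N=\bot_\bD$, or $\bot_\bD\in Val_h$. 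In the latter case we still need $|Val_h|\leq N$ or $>N$ to produce $\bot_\bD$.
\end{enumerate}
The subtle point in case 3 is when $\bot_\bD\in Val_h$ but $|Val_h|\leq N$. Then $Val_h\neq\emptyset$, so $min_N\,Val_h=min\,Val_h=\bot_\bD$, because $\bot_\bD$ is the bottom of $\leq_\bD$. So in every sub-case the value is $\bot_\bD$, and the case closes. I expect this case analysis to be the delicate step.

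Throughout, I must also check that $\bot_\bD=[h,\bot]$ and $\top_\bD=[h,\top]$ are respectively least and greatest in $\leq_\bD$, independently of $h$. This follows from condition (\ref{bounds}) together with the Local Knowledge-style preservation for the location-free terms $\bot,\top$ (Corollary \ref{Preservation2}).
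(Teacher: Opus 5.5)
Your proposal is correct and follows the paper's proof essentially step for step: the same simultaneous induction on sub-expression complexity, Corollary~\ref{KK} for $K_A$, and the same three-way split on $[h,\mu_N x_A^\varphi]$, closed by Lemmas~\ref{Trivial-Val}, \ref{Undefined-Min} and \ref{Least-Value}. It is also more careful than the paper in two places the paper passes over, namely the converse direction of the $x\leq y$ case (where $\leq_\bD$ is only some linear extension, so totality of $last(h)$ is genuinely needed) and the subcase $\bot_\bD\in Val_h(x_A^\varphi)$ with $|Val_h(x_A^\varphi)|\leq N$. Two justifications are mis-cited but easily fixed: $(x\neq y)\in last(h)$ follows from $(x\leq y)\notin last(h)$ via reflexivity of $\leq$ (from totality) plus Indiscernability, not anti-symmetry; and the $h$-independence of $[h,\bot]$ and $[h,\top]$ comes from chaining $\to_=$ and $\ot_=$ steps with witness $\bot$ or $\top$ (location $\emptyset$) along the path, not from Corollary~\ref{Preservation2}, since two arbitrary histories need not be $\sim_A$-related for any group $A$.
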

\begin{proof}
\vspace{-3mm}
We prove 1.\&2. \textit{for all expressions $\alpha\in \Sigma\cup Var_\Sigma$ by induction on sub-expression complexity}.

(i) \emph{Predicative Atoms}: $\varphi=P\ux$, with $\ux=(x_1, \ldots, x_n)$.
For $P\in Pred\setminus \{\leq\}$, we have the equivalences:
\noindent $h \models P\ux$ iff $h(\ux)\in I(P)$ iff (by the induction hypothesis for claim (2)) $([h,x_1], \ldots, [h, x_n])\in I(P)$ iff $\exists h'\in H \mbox{ s.t. }  (\, (h,\ux)\approx (h', \ux) \& (P \ux)\in last(h') \, )$ iff
$(P \ux)\in last(h)$ (by Lemma \ref{IntLemma}).
For $\leq$, we have the equivalencies:
\noindent $h \models x\leq y$ iff $(h,x) \lesssim (h,y)$ iff $(h,x) \lessapprox (h,y)$ iff $[h,x]\leq_\bD [h,y]$ (by the Path Lemma).

(ii) \emph{Propositional atoms}: $p\in Prop_\Sigma$. By definition, $h\models p$ iff $\underline{h}(p)= \top_\bD$ iff $p\in last(h)$.

(iii) \emph{Basic variables}: $v\in V_\Sigma$. By definition, $h(v)_\bM=\underline{h} (v)=[h,v]$.

(iv) \emph{Boolean Case}: $\varphi\to\psi$. These is trivial, using conditions \ref{neg} and \ref{conj} on types.

(v) \emph{$K_A$-modal Case}: $(K_A \varphi)\in \Sigma$. From \emph{left-to-right}: assume (towards a contradiction) that $h\models K_A\varphi$ but $(K_A\varphi)\not\in last(h)$. By condition \ref{neg} on types, we have $(\langle K_A\rangle\!\sim\!\varphi)\in last(h)$, and so by the property (**) of quasi-models, there exists a type $\Delta'\in S$ s.t. $last(h)\simA \Delta'$ and $(\sim\!\varphi)\in\Delta'$. Take $h':=(h, A, \Delta')$: this is a well-defined history in $H$, satisfying $h \simA h'$ and $last(h')=\Delta'$. From $h\models K_A\varphi$ we infer that $h'\models \varphi$, and so by the induction hypothesis we have $\varphi\in last(h')=\Delta'$, in contradiction to $(\sim\varphi)\in\Delta'$.

For the \emph{right-to-left} direction: we assume that $(K_A\varphi)\in last (h)$, and we have to prove that $h\models K_A\varphi$. For this, let $h'\in H$ be s.t. $h\simA h'$, and we need to show that $h'\models \varphi$. From $h\simA h'$ we obtain that $last(h)\simA last(h')$ (by Lemma \ref{equiv}), which together with $h\models K_A\varphi$ gives us $\varphi\in last(h')$ (by Proposition \ref{Access2}). Applying the induction hypothesis, we conclude that $h'\models \varphi$, as desired.

(vi) \emph{Terms defined by cases} $\varphi\!\!\to\!\!x|y$. Given $(\varphi\!\!\to\!\!x|y)\in\Sigma$, we have $\varphi\in\Sigma$, so we have that either $\varphi \in last(h)$, or else $(\sim\varphi)\in last(h)$. In the first case, we have $(\varphi\!\!\to\!\!x|y=x)\in last(h)$ (by condition \ref{cases} on types), hence $[h, \varphi\!\!\to\!\!x|y]=[h,x]$; and on the other hand, $\varphi \in last(h)$ yields $h\models\varphi$ (by the induction hypothesis for Claim 1), so by definition we have $\underline{h}(\varphi\!\!\to\!\!x|y)_\bM=\underline{h}(x)_\bM$, and by the induction hypothesis for Claim 2 we have $\underline{h}(x)_\bM=[h,x]$, thus obtaining
$\underline{h}(\varphi\!\!\to\!\!x|y)_\bM= \underline{h}(x)_\bM= [h,x]=[h, \varphi\!\!\to\!\!x|y]$, as desired. The second case is similar: from $(\sim\varphi)\in last(h)$ we get
$(\varphi\!\!\to\!\!x|y=y)\in last(h)$ (by condition \ref{cases} on types), hence $[h, x|_\varphi]=[h,y]$; and on the other hand, $(\sim\varphi) \in last(h)$ implies $\varphi\not\in last(h)$ (by the consistency of types), which by the induction hypothesis for Claim 1 yields $h\not\models\varphi$, so by definition we have $\underline{h}(\varphi\!\!\to\!\!x|y)_\bM=\underline{h}(y)_\bM$, and by the induction hypothesis for Claim 2 we have $\underline{h}(y)_\bM=[h,y]$, thus obtaining $\underline{h}(\varphi\!\!\to\!\!x|y)_\bM= \underline{h}(y)_\bM= [h,y]=[h,\varphi\!\!\to\!\!x|y]$, as desired.

(vii) \emph{Functional terms} $F(\ux)\in Var_{\Sigma}$ for some $\ux=(x_1,\ldots, x_n)$. We assume by the induction hypothesis that $\underline{h}(\ux)_\bM=[h,\ux]$, and using the semantics of functional terms and the definition of $I(F)$ in our history model, we obtain
$\underline{h}(F(\ux))_\bM= (I(F)) (h(\ux))= (I(F)) [h,\ux]= [h, F(\ux)]$, as desired.

(viii)  \emph{Least-value terms} $\mu_N x_A^\varphi\in Var_{\Sigma}$. 
Using the notation
$Val_h(x_A^\varphi):=\{[h', x] \in D : h'\simA h, \varphi\in last(h')\}$ and the induction hypothesis for Claims (1) and (2), we have 
$Val_h(x_A^\varphi) = Val(x_A^\varphi)_{h,\bM}$,
where
$Val(x_A^\varphi)_{h,\bM}:=\{\underline{h'}(x)_\bM : h'\simA h, h'\models_\bM \varphi\}$ is the notation from Section \ref{Logic}.
We distinguish \emph{three subcases}:

\emph{Case 1}: $\,[h,\mu_N x_A^\varphi]\,=\,\top_\bD$.  By the Trivial-Value Lemma \ref{Trivial-Val}, we have  $\,Val_h(x_A^\varphi)\, \subseteq \,  \{\top_\bD\,\}$, and hence \linebreak $min_N Val_h(x_A^\varphi)\,=\,\top_\bD$ (since by definition we have $\, min_N \emptyset\, =\, \min_N \{\top_\bD\}\, =\, \top_\bD$). Given this, we obtain that $\, \underline{h}(\mu_N x_A^\varphi)_\bM \, =\,
min_N Val(x_A^\varphi)_{h,\bM} = min_N Val_h(x_A^\varphi)=\bot_\bD= [h,\mu x_A^\varphi]$, as desired.

\emph{Case 2}: $[h,\mu_N x_A^\varphi]=\bot_\bD$. By the Undefined-Minimum Lemma \ref{Undefined-Min}, we have either 
$\bot_\bD\in Val_h (x_A^\varphi)$ or else $|Val_h (x_A^\varphi)|>N$. In both cases, we get $min_N Val_h (x_A^\varphi)\, = \, \bot_\bD$ by definition, so we obtain \linebreak 
$\underline{h}(\mu_N x_A^\varphi)_\bM =
min_N Val(x_A^\varphi)_{h,\bM}= min_N Val_h (x_A^\varphi)=\bot_\bD= [h,\mu_N x_A^\varphi]$, as desired.

\emph{Case 3}: $[h,\mu_N x_A^\varphi]\neq \top_\bD, \bot_\bD$. By Lemma \ref{Least-Value}, we have both
$|Val_h (x_A^\varphi)|\leq N$ and $[h,\mu_N x_A^\varphi]=min\, Val_h (x_A^\varphi)$, i.e.,  $[h,\mu_N x_A^\varphi]=min_N Val_h (x_A^\varphi)$, thus
$\underline{h}(\mu_N x_A^\varphi)_\bM =min_N Val(x_A^\varphi)_{h,\bM}= min_N Val_h (x_A^\varphi)=[h,\mu_N x_A^\varphi]$, as desired.
\end{proof}


%

\smallskip

\par\noindent\emph{\textbf{Proof of  Proposition \ref{Satisfiability}}}: If $S$ is a quasi-model for $\varphi_0$, then by applying claim 1 of the Truth Lemma \ref{Truth} to $\varphi_0$ and to the history $h_0:=(\Delta_0)$, we conclude that $h_0\models \varphi_0$ in our history-based model $\bM$ above.

\vspace{-3mm}

\section{Completeness and Decidability Proofs for $LDAE$}\label{B}
\vspace{-2mm}
In this section we prove Theorem \ref{CompDLKV}. We first establish our co-expressivity result: \emph{$LDAE$ and $LDA$ are provably co-expressive}. For this, we need a few preliminary notions and results.

\smallskip\par\noindent\textbf{Reducible expressions}
A formula $\theta$ in $LDAE$ is said to be \emph{reducible} if it is provably equivalent to a 'static' formula; i.e., if there exists some formula $\theta'$ in the static fragment $LDA$ s.t.
$\vdash \, \theta \leftrightarrow \theta'$
is provable in $\mathbf{LDAE}$. A term $x\in Var_{LDAE}$ is \emph{reducible} if it is provably equal to a static term; i.e., if there exists some $x'\in Var_{LDA}$ s.t. $\vdash \, x=x'$ is provable in $\mathbf{LDAE}$. Finally, an event $\be=(\bE,e)$ is \emph{reducible} if all preconditions $pre_f$ and all post-conditions of the form $\underline{e}(p)$ or $\underline{e}(v)$ are reducible (for all $f\in E$, $p\in Prop$ and $v\in V$).

\begin{lemma}\label{re} \emph{(Replacement of Equivalents)}
Suppose that $\vdash \varphi \leftrightarrow \varphi'$, $\vdash \theta \leftrightarrow \theta'$, $\vdash x=x'$, $\vdash y=y'$ and $\vdash x_i =x'_i$ (for all $i=1,n$) are provable in $\mathbf{LDAE}$. Then the following are also provable in $\mathbf{LDAE}$:\\

\begin{minipage}{0.45\textwidth}
\begin{enumerate}
\item $\vdash \varphi\!\!\to\!\!x|y = \varphi\!\!\to\!\!x'|y'$
\item $\vdash F(x_1, \ldots, x_n)= F(x'_1, \ldots, x'_n)$
\item $\vdash \mu_N x_A^\varphi = \mu_N (x')_A^{\varphi'}$
\item $\vdash Px_1 \ldots x_n \leftrightarrow Px'_1 \ldots x'_n$
   \end{enumerate}
   \end{minipage}
\hfill   
\begin{minipage}{0.45\textwidth}
\begin{enumerate}
\setcounter{enumi}{4}
\item $\vdash (\varphi \to \theta) \leftrightarrow (\varphi' \to \theta')$
\item $\vdash K_A \varphi \leftrightarrow K_A\varphi'$
\item $\vdash [\be]\varphi \leftrightarrow [\be]\varphi'$
\item $\vdash \be(x)=\be(x')$
    \end{enumerate}
    \end{minipage}
\end{lemma}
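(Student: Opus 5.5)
The plan is to prove each of the eight items separately, using only the static axioms of $\mathbf{LDA}$ (now extended to the full language), the reduction axioms of Table \ref{tb1}, and classical propositional reasoning. The items split into three groups.

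\textbf{Purely first-order items.} Items 1, 2, 4 and 5 need only the equality and order axioms together with propositional logic.
\begin{itemize}
\item For item 2, from $\vdash x_i=x'_i$ I would conjoin to obtain $\vdash \ux=\ux'$ and then apply (Functionality).
\item For item 4, I would use (Indiscernability) in the same way, once in each direction.
\item Item 5 is plain CPL.
\item For item 1, I would argue by cases on $\varphi$. Assuming $\varphi$, (Definition by Cases) gives $\varphi\!\!\to\!\!x|y=x$. Since $\vdash\varphi\leftrightarrow\varphi'$, we also have $\varphi'$, hence $\varphi'\!\!\to\!\!x'|y'=x'$. Combined with $\vdash x=x'$, symmetry and transitivity of $=$ give the equality. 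Symmetry and transitivity are themselves derivable from (Indiscernability) applied to the predicate $=$, taking reflexivity from (Anti-symmetry) and (Totality). The case $\neg\varphi$ is symmetric.
\end{itemize}

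\textbf{Modal items.} For item 6, I would apply Necessitation to $\varphi\to\varphi'$ and $\varphi'\to\varphi$, then use Distribution. Item 7 is the same argument with $[\be]$-Necessitation and the (Distributivity) axiom for $[\be]$, which yields the K-principle for $[\be]$.

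\textbf{Item 8.} I would not do a structural induction on $x$. The reason is that $\be(x)$ is never reduced syntactically here: I only need an equality between $\be(x)$ and $\be(x')$. From $\vdash x=x'$, $[\be]$-Necessitation gives $\vdash[\be](x=x')$. Then (Change of Properties), applied to the predicate $=$, gives $\vdash pre_\be\to \be(x)=\be(x')$.

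For the case $\neg pre_\be$, I need $\be(x)=\top$. Here I would use (Impossible Change) from Proposition \ref{theorems2}: $\neg pre_\be$ yields $[\be]false$, i.e. $[\be](x=x)\wedge[\be]\neg(x=x)$ in any form. I expect this case to be the main obstacle, because there is no direct axiom stating $\neg pre_\be\to\be(x)=\top$ for arbitrary $x$.

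The cleanest workaround is to avoid that fact altogether. I would instead use (Change of Properties) for $P$ being $=$ with the tuple $(x,x')$. Its right-hand side is $pre_\be\to \be(x)=\be(x')$, which is not directly an equality, so I would first try deriving $\be(x)=\be(x')$ under $\neg pre_\be$ by proving $\be(z)=(pre_\be\!\!\to\!\!\be(z))$ for arbitrary $z$. If this cannot be obtained from the axioms, I would fall back on an induction on the sub-expression complexity of $x$, using the term reduction axioms (Basic Value Change), (Functional Change), (Change of Cases) and (Cutoff-Min. Change). Each of these axioms has the shape $pre_\be\!\!\to\!\!\ldots$, which yields $\top$ when $\neg pre_\be$ holds, via (Definition by Cases).

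Item 3 needs a similar treatment. From $\vdash\varphi\leftrightarrow\varphi'$ and $\vdash x=x'$, I would first derive $\vdash K_A(\varphi\leftrightarrow\varphi')$ and $K_A(x=x')$. I would then show $\mu_N x_A^\varphi=\mu_N x'^{\varphi'}_A$ using the $\mu$-axioms:
\begin{itemize}
\item (Lower Bound) and (Reaching the Minimum), instantiated on both sides, give mutual $\leq$ in the defined case, and (Anti-symmetry) then closes it.
\item (Trivial Minimum) and (Undefined Minimum) handle the $\top$ and $\bot$ cases.
\end{itemize}
This item is the most delicate of the static ones, since it requires transferring the cardinality formulas $|x|_A^\varphi\leq N$ between the primed and unprimed versions. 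That transfer goes through an inner induction on $N$ along the abbreviations $i_N.x_A^\varphi$.
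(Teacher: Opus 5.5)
\textbf{Items 1, 2 and 4--7} go through as you describe. So does the $pre_\be$-half of item 8, via $[\be]$-Necessitation and Change of Properties for $=$. The paper states this lemma without proof, so the comparison is against the axioms alone.

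\textbf{Item 8, the $\neg pre_\be$ half.} This is a genuine gap. Your fallback induction on $x$ has no case for a nested dynamic term $x=\be'(y)$, because Table \ref{tb1} contains no axiom for $\be(\be'(y))$. This is exactly the case that matters: Lemma \ref{reduction} applies item 8 with $x$ dynamic and $x'$ static. (Change of Cases is not of the shape $pre_\be\!\!\to\!\!\cdots$ either, but that case can be rescued: $\neg pre_\be$ gives $[\be]\psi$ for every $\psi$, so $\be(\psi\!\!\to\!\!y|y')=\be(y)$.)

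It appears that no cleverer derivation closes the nested case. Keep the standard semantics, except that at states where $pre_\be$ fails, $\be(z)$ takes the value $\delta(z)$. Here $\delta(\be'(y)):=\bot_\bD$, $\delta(\psi\!\!\to\!\!y|y'):=\delta(y)$, and $\delta(z):=\top_\bD$ for basic variables, functional terms and $\mu$-terms. This reinterpretation keeps every axiom sound:
\begin{itemize}
\item $\delta$ is state-independent and $\mathcal{L}(pre_\be)\subseteq\mathcal{L}(\be(z))$, so Preservation still holds, and hence all static axioms stay sound (Local Knowledge and Reaching the Minimum included).
\item Change of Properties, Knowledge Update and the $pre_\be$-cases of the term axioms only evaluate dynamic terms where their preconditions hold.
\item Where $pre_\be$ fails, both sides of Basic Value Change, Functional Change and Cutoff-Min.\ Change are $\top_\bD$, and both sides of Change of Cases are $\delta(y)$.
\end{itemize}
Now take $\mathbf{t}:=\,!true$. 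Then $\vdash\mathbf{t}(v)=v$, by Basic Value Change plus Definition by Cases. But for $\be:=\,!p$, the terms $\be(\mathbf{t}(v))$ and $\be(v)$ get the values $\bot_\bD$ and $\top_\bD$ at every $\neg p$-state. So neither item 8 nor your intermediate goal $\be(z)=(pre_\be\!\!\to\!\!\be(z))$ is derivable from the axioms as they stand. What is missing is an additional, sound axiom such as $\neg pre_\be\to(\be(x)=\top)$. With it, item 8 follows at once from your $pre_\be$-half.

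\textbf{Item 3.} Your route is circular as stated. The formula $|x|_A^\varphi\leq N$ unfolds to $K_A^\varphi(x\in Var^N(x_A^\varphi))$, and $1_N.x_A^\varphi$ is $\mu_N x_A^\varphi$ itself. So transferring the cardinality formula to the primed side presupposes the very equation being proved, and an induction on $N$ never reaches it.

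No transfer is needed. Reaching the Minimum accepts any $Y$ with $|Y|\leq N$ and $\mathcal{L}(Y)\subseteq A$, so you may use $Y:=Var^N(x_A^\varphi)$ for both terms: $K_A(\varphi\leftrightarrow\varphi')$ and $K_A(x=x')$ turn $K_A^\varphi(x\in Y)$ into $K_A^{\varphi'}(x'\in Y)$. Then split into four cases:
\begin{itemize}
\item $K_A\neg\varphi$: apply Trivial Minimum twice.
\item $\langle K_A\rangle\varphi\wedge|x|_A^\varphi\leq N$: Lower Bound for one term and Reaching the Minimum for the other yield $\langle K_A\rangle(\mu_N x_A^\varphi\leq\mu_N(x')_A^{\varphi'})$ and its converse. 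Local Knowledge turns these into the plain inequalities, since their location is $A$, and Anti-symmetry closes the case.
\item The primed mirror image of the previous case, handled the same way.
\item $|x|_A^\varphi>N\wedge|x'|_A^{\varphi'}>N$: apply Undefined Minimum twice.
\end{itemize}
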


\smallskip

We first prove a preliminary ``one-step reduction'' result:

\begin{lemma}\label{one-step reduction} Let $\be$ be any reducible event.
Then, for every 'static' formula $\theta$ in $LDA$ and every 'static' term $x\in Var_{LDA}$, the formula $[\be] \theta$ and the term $\be(x)$ are also reducible.
\end{lemma}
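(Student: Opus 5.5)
We argue by simultaneous induction on the sub-expression complexity of the static expression $\alpha\in\{\theta, x\}$, proving at once that $[\be]\theta$ and $\be(x)$ are reducible for every reducible event $\be$ (so the claim is available for all events $\mathbf{f}=(\bE,f)$ with $f\in E$, which are reducible whenever $\be$ is, since reducibility of $\be$ concerns all $pre_f$ and postconditions in $\bE$). Throughout, we fix static equivalents $pre'_f$, $post'_f(p)$, $post'_f(v)$ of the pre- and postconditions, and we use Replacement of Equivalents (Lemma \ref{re}) freely to substitute these inside right-hand sides of reduction axioms. We also use that the static fragment is closed under the term and formula constructors, so composing static pieces yields a static expression.

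The cases follow the clauses of the syntax, each handled by the matching axiom of Table \ref{tb1}. For a basic variable $v$, (Basic Value Change) gives $\be(v)=(pre_\be\!\!\to\!\! post_\be(v))$, which is provably equal to the static term $pre'_\be\!\!\to\!\! post'_\be(v)$. For atoms $p$, (Change of Facts) gives the static formula $pre'_\be\to post'_\be(p)$. For $F(\ux)$ and $P\ux$, (Functional Change) and (Change of Properties) reduce to $\be(x_i)$ for the smaller terms $x_i$, which are reducible by induction, so Lemma \ref{re}(2),(4) finish. For $\varphi\to\psi$ we use (Distributivity) and the induction hypothesis on $\varphi,\psi$. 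For $K_A\varphi$ we use (Knowledge Update): the right side is built from $pre_\be$ and the formulas $[\mathbf{f}]\varphi$ for $f\simA e$, each reducible by the induction hypothesis applied to the smaller $\varphi$ and the reducible event $\mathbf{f}$; then Lemma \ref{re}(5),(6) give a static equivalent. For $\varphi\!\!\to\!\! x|y$ we use (Change of Cases) together with the induction hypothesis on $\varphi$, $x$, $y$.

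The main obstacle is the case $\mu_N x_A^\varphi$, handled via (Cutoff-Min. Change). The right-hand side involves $pre_\be$, the terms $\mu_N\mathbf{f}(x)_{\underline{e}(A)}^{\langle\mathbf{f}\rangle\varphi}$, the operator $min$ (a nested definition by cases using $\leq$), and the formula $|Var_e^N(x_A^\varphi)|^\Diamond\leq N$. The latter is built from terms $n_N.\mathbf{f}(x)_{\underline{f}(A)}^{\langle\mathbf{f}\rangle\varphi}$ (themselves nested $\mu$-terms whose conditions add conjuncts of the form $\mathbf{f}(x)>\ldots$), equalities between them, and formulas $\langle K_{\underline{f}(A)}^{\langle\mathbf{f}\rangle\varphi}\rangle\,\mathbf{f}(x)=\top$. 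All dynamic subexpressions occurring here are of only two kinds: $\mathbf{f}(x)$ and $\langle\mathbf{f}\rangle\varphi=\neg[\mathbf{f}]\neg\varphi$, with $x,\varphi$ strictly smaller than $\mu_N x_A^\varphi$. By the induction hypothesis these are provably equal (respectively equivalent) to static $x'_f$ and $\varphi'_f$. One must check that $\neg\varphi$ is also covered; this can be arranged by running the induction on formulas up to single negation, or by treating $\neg\varphi=\varphi\to false$ with the $false$ case trivial. Lemma \ref{re}(1)--(6) then propagates these replacements through $\mu$-terms, the conditions $\theta\wedge x>n_N.x^\theta$, $min$, and the Boolean and epistemic structure, producing a static term. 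The delicate point is that the replacement inside nested conditions of the $n_N$ terms is legitimate only because Lemma \ref{re}(3) allows replacing both the body and the condition of a $\mu$-term; I would verify this by a small auxiliary induction on $n$.
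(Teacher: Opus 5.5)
Your proof is correct and is essentially the paper's: a simultaneous induction on the static expression, with the claim available for every $\mathbf{f}=(\bE,f)$ (which is reducible along with $\be$), each case discharged by the matching reduction axiom plus Replacement of Equivalents (Lemma \ref{re}), and the cutoff-min case handled by replacing $\mathbf{f}(x)$ and $\langle\mathbf{f}\rangle\varphi$ with static equivalents inside the right-hand side of (Cutoff-Min. Change). The only cosmetic difference is that the paper gets a static equivalent of $\langle\mathbf{f}\rangle\varphi$ from the derived Negation Reduction law of Proposition \ref{theorems2}; this amounts to your reading of $\neg\varphi$ as $\varphi\to false$, with $[\mathbf{f}]false$ reducing to $\neg pre_f$.
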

\vspace{-2mm}
\begin{proof}
\vspace{-1mm}
Since $\be=(\bE, e)$ is reducible, its precondition $pre_e$ is also reducible. Let us fix some static formula $\rho$ s.t. $\vdash pre_e\leftrightarrow \rho$ is provable in $\mathbf{LDAE}$. 
We'll prove both claims simultaneously, by induction on sub-expression complexity (-and we'll make liberal use of Lemma \ref{re}, without explicitly mentioning it):

\smallskip

For $\theta:=p$:  since $e$ is reducible, there exists some static formula $\theta'$ s.t. $\vdash \underline{e}(p)\leftrightarrow \theta'$. 
Using the Change of Facts axiom, we can see that $[\be]p$ is provably equivalent to $\rho \to \theta'$.

For $\theta:=Px_1\ldots x_n$: by the induction hypothesis, for all $i=1,n$ we have $\vdash \be(x_i)=x'_i$ for some static terms $x'_1, \ldots, x'_n\in Var_{LDA}$. Using also the Indiscernability axiom and the Property Change reduction axiom, $[\be] \theta$ is provably equivalent to $pre_e \to P x'_1\ldots x'_n$, and hence also to $\rho\to P x'_1\ldots x'_n$.

For $\theta \, :=\, \varphi\to \psi$ is similar: by induction, there exist static formulas $\varphi_e$ and $\psi_e$ s.t. $\vdash [\be] \varphi \leftrightarrow \varphi'$  and $\vdash [\be] \psi \leftrightarrow \psi'$ are provable. Using the $[\be]$-Distributivity axiom, we obtain $\vdash [\be] \theta \leftrightarrow (\varphi' \to \psi')$.

For $\theta:= K_A\varphi$: by induction, for each $f\in E$ there exists some static formula $\varphi_f$ s.t. $\vdash [\mathbf{f}] \varphi \leftrightarrow \varphi_f$. Putting this together with the Knowledge Update axiom, we get $\vdash [\be] \theta \leftrightarrow \bigwedge \{\rho \to K_{\underline{e}(A)} \varphi_f : f\simA e\})$.

For $x:=v$ (basic variable): since $e$ is reducible, there exists some static term $x'$ s.t. $\vdash \underline{e}(v)= x'$. Using this and the Value Change axiom, we obtain that $\vdash \be(x)\, =\,(\rho\!\!\to\!\! x'|\top)$, so $\be(x)$ is reducible.

For $x:=F(x_1, \ldots, x_n)$: by the induction hypothesis, there exist static terms $x'_1, \ldots, x'_n$ s.t. $\vdash \be(x_i)=x'_i$ for all $i\leq n$. Using the Functional Change reduction axiom, we obtain $\vdash \be(x)\, =\, (\rho\!\!\to\!\! F(x'_1, \ldots, x'_n))$.

For $x:=\varphi\!\!\to\!\!y|z$: by the induction hypothesis, there exist static terms $y', z'$ and static formula $\varphi'$, s.t.
$\vdash \be(y)=y'$, $\vdash \be(z)=z'$ and $\vdash [\be]\varphi \leftrightarrow \varphi'$ are provable. Using these, as well as the Case Change reduction axiom, we obtain $\vdash \be(x)\, =\, (\varphi'\!\!\to\!\! y'|z')$, and so $\be(x)$ is reducible.

For $x:=\mu_N y_A^\varphi$: first note that by definition, the reducibility of $\be=(\bE, e)$ implies the reducibility of \emph{all} $\mathbf{f}=(\bE, f)$ with $f\in E$.
By the induction hypothesis and Proposition \ref{theorems2}, for every $f\in E$ there exist a static formula $\varphi_f$ and a static term $y_f$, s.t. $\vdash \langle \boldf\rangle \varphi \leftrightarrow \varphi_f$ and $\vdash \boldf(y)=y_f$.
Thus, if we put
$$\eta \, \,\, :=\,\,\, \{n_N (y_f)_{\underline{f}(A)}^{\varphi_f} : f\simA e, n\leq N\}^\Diamond\leq N,$$ then $\vdash \eta \leftrightarrow Val_e^N (x_A\varphi)^\Diamond \leq N$ is provable in $\mathbf{LDAE}$. Using the Min. Val. Change axiom, $\be(\mu_N y_A^\varphi)$ is provably equal to 
$\rho\!\!\to\!\! \left(\eta\!\!\to\!\!
min\, \{\mu_N (y_f)_{\underline{f}(A)}^{\varphi_f}: f\simA e\}|\bot \right)$.
\end{proof}
Using this, we can establish our full reduction result:

\begin{lemma}\label{reduction} \emph{(Co-expressivity of $LDAE$ and $LDA$)}
All expressions (formulas, terms and events) $\alpha$ of $LDAE$ are reducible. As a consequence, $LDAE$ and $LDA$ have the same expressive power.
\end{lemma}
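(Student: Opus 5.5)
We argue by induction on the well-founded sub-expression complexity order $<$, proving simultaneously for all formulas, terms and events $\alpha$ that $\alpha$ is reducible. The induction hypothesis for $\alpha$ is that every expression $\beta<\alpha$ is reducible.

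\emph{Static constructors.} For atoms $p$, basic variables $v$, and for $P\ux$, $F(\ux)$, $\varphi\to\psi$, $\varphi\!\!\to\!\! x|y$, $K_A\varphi$ and $\mu_N x_A^\varphi$, the immediate components are strictly smaller. The induction hypothesis therefore supplies static equivalents for them. We then substitute these equivalents using the Replacement of Equivalents Lemma \ref{re} (clauses 1--6), and the result is a static expression provably equal or equivalent to $\alpha$.

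\emph{Events.} For an event $\be=(\bE,e)$, the preconditions $pre_f$ and postconditions $\underline{f}(p)$, $\underline{f}(v)$ are all $<\be$ by clause (3) of the complexity order. So they are reducible by the induction hypothesis, and $\be$ is reducible by definition. There is a subtle point: $\bE$ may be infinite in the sense of infinitely many atoms $p$ and variables $v$. Reducibility of an event quantifies over all of these, but each one is individually smaller than $\be$, so the induction hypothesis applies to each and no uniform bound is needed.

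\emph{Dynamic constructors.} Take $\alpha=[\be]\varphi$ or $\alpha=\be(x)$. Here $\be<\alpha$ and $\varphi<\alpha$ (resp. $x<\alpha$), so by the induction hypothesis $\be$ is reducible and there is a static $\varphi'$ with $\vdash\varphi\leftrightarrow\varphi'$ (resp. a static $x'$ with $\vdash x=x'$). By Lemma \ref{re}(7),(8) we get $\vdash[\be]\varphi\leftrightarrow[\be]\varphi'$ (resp. $\vdash\be(x)=\be(x')$). The one-step reduction Lemma \ref{one-step reduction}, applied to the reducible event $\be$ and the static $\varphi'$ (resp. $x'$), then gives a static equivalent, and transitivity of provable equivalence (resp. equality) closes the case.

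The main point to check is that Lemma \ref{one-step reduction}, whose own internal induction runs over the static argument $\theta$, really needs only the reducibility of $\be$ and nothing about $\theta$ beyond its being static. This is exactly why we first replace $\varphi$ by the static $\varphi'$ before invoking it: nested dynamic operators are handled innermost-first by the outer induction.

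For the consequence, reducibility of every formula means each $LDAE$ formula $\theta$ has a static $\theta'$ with $\vdash\theta\leftrightarrow\theta'$. By soundness of $\mathbf{LDAE}$ (Theorem \ref{CompDLKV}, soundness part, which was established independently), $\theta$ and $\theta'$ are semantically equivalent, so $LDAE$ and $LDA$ have the same expressive power. This is also the step that will yield completeness: if $\theta$ is valid, then $\theta'$ is valid, hence provable in $\mathbf{LDA}\subseteq\mathbf{LDAE}$ by Theorem \ref{CompLKV}, hence $\theta$ is provable. Decidability then follows from the effectiveness of the translation.
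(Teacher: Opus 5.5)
Your proposal is correct and follows the paper's proof essentially step for step. The paper uses the same induction on sub-expression complexity: static constructors via clauses 1--6 of Lemma \ref{re}; $[\be]\varphi$ and $\be(x)$ by first replacing the argument with a static equivalent (Lemma \ref{re}, clauses 7--8) and then invoking Lemma \ref{one-step reduction}; and events as reducible because their pre- and postconditions are strictly smaller. Your derivation of equal expressive power from soundness spells out a step the paper leaves implicit.
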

\vspace{-2mm}
\begin{proof}
\vspace{-1mm}
Induction on the subexpression-complexity of the expression $\alpha$.
The base cases $\alpha:= p\in Prop$ and $\alpha:=v\in V$ are trivial.
The cases $\alpha:= \varphi\!\!\to\!\!x|y$, $\alpha:= F(\ux)$, $\alpha:=\mu_N x_A^\varphi$, $\alpha:= P\ux$, $\alpha:=\varphi\to \theta$ and $\alpha:=K_A\varphi$ are straightforward: use the induction hypothesis and parts 1-6 of Lemma \ref{re}.

For $\alpha:=[\be]\varphi$: by induction, $\be$ and $\varphi$ are reducible, so there exists static $\varphi'$ s.t. $\vdash \varphi \leftrightarrow \varphi'$. By Lemma \ref{re}.7,  $\vdash \alpha \leftrightarrow [\be]\varphi'$, and by Lemma \ref{one-step reduction} $[\be]\varphi'$ is reducible (since $\be$ is reducible and $\varphi'$ is static), so $\vdash [\be]\varphi' \leftrightarrow \varphi''$ for some static $\varphi''$. Using transitivity of equivalence, we get $\vdash \alpha \leftrightarrow \varphi''$, so $\alpha$ is reducible.

For $\alpha:= \be(x)$: by induction, $\be$ and $x$ are reducible, so there exists a static term $x'$ s.t. $\vdash x=x'$. By Lemma \ref{re}.8, we have $\vdash \alpha=\be(x')$, and by Lemma \ref{one-step reduction} $\be(x')$ is reducible (since $\be$ is reducible and $x'$ is static), so $\vdash \be(x')=x''$ for some static term $x''$. Using transitivity of equality, we obtain $\vdash \alpha = x''$.

Finally, the case $\alpha:=\be=(\bE,e)$ is straightforward: for all $f\in E$, $p\in Prop$ and $v\in V$, the preconditions $pre_f$ and postconditions $\underline{f}(p)$ and $\underline{f}(v)$ are $<\be$ in the subexpression complexity order, so they are all reducible (by the induction hypothesis), and hence $\be$ is also reducible (by definition).
\end{proof}

\vspace{-1mm}
\par\noindent\emph{\textbf{Proof of
Theorem \ref{CompDLKV}}}: Completeness follows immediately from Lemma \ref{reduction}, the soundness of $\mathbf{LDAE}$, and the completeness of the system $\mathbf{LDA}$ (Theorem \ref{CompLKV}); decidability follows from the decidability of the static logic $LDA$ (Theorem \ref{CompLKV}) and the co-expressivity of $LDA$ and $LDAE$ (Lemma \ref{reduction}).
\vspace{-3mm}

\section{Conclusions and Future Work}
\vspace{-2mm}
In this paper we axiomatize a decidable but richly expressive logic, that deals with a large class of data-exchange events, while capturing group knowledge of both propositional and non-propositional data, as well as a group's ability to narrow down the values of a variable to finitely many possibilities. 

There are a number of things still left to do. 
As already mentioned, one
can add common knowledge operators $C_A\varphi$, but pre-encoding their dynamics requires a generalization to polyadic conditionals $C_A^e \varphi$ as in \cite{BS24}. 
We leave this for a journal version, where we also plan to explore the relationships of our formalism with the Logic of Functional Dependence \cite{BvB2020a} and Graded (Multi)Modal Logic.

Our axioms are somewhat complicated due to the fact that we did \emph{not} succeed to prove FMP (Finite Model Property) for our logics. 
This would have allowed us to replace cutoff-minimization $\mu_N x_A^\varphi$ by 
plain minimization $min\, x_A^\varphi$ 
(over all $A$-possible $x$-values given $\varphi$), which would greatly simplify our axioms. 
On the other hand, we \emph{don't} have a counterexample to FMP, so this issue is still an open question.
\vspace{-4mm}
\bibliographystyle{eptcs}
\bibliography{AiML-REFS}

\begin{thebibliography}{10}
\providecommand{\bibitemdeclare}[2]{}
\providecommand{\surnamestart}{}
\providecommand{\surnameend}{}
\providecommand{\urlprefix}{Available at }
\providecommand{\url}[1]{\texttt{#1}}
\providecommand{\href}[2]{\texttt{#2}}
\providecommand{\urlalt}[2]{\href{#1}{#2}}
\providecommand{\doi}[1]{doi:\urlalt{https://doi.org/#1}{#1}}
\providecommand{\eprint}[1]{arXiv:\urlalt{https://arxiv.org/abs/#1}{#1}}
\providecommand{\bibinfo}[2]{#2}

\bibitemdeclare{article}{AgotnesWang2017}
\bibitem{AgotnesWang2017}
\bibinfo{author}{Thomas \surnamestart Agotnes\surnameend} \&
  \bibinfo{author}{Yi~N. \surnamestart Wang\surnameend} (\bibinfo{year}{2017}):
  \emph{\bibinfo{title}{Resolving Distributed Knowledge}}.
\newblock {\slshape \bibinfo{journal}{Artificial Intelligence}}
  \bibinfo{volume}{252}, pp. \bibinfo{pages}{1--21},
  \doi{10.1016/j.artint.2017.07.002}.

\bibitemdeclare{incollection}{Baltag2010}
\bibitem{Baltag2010}
\bibinfo{author}{Alexandru \surnamestart Baltag\surnameend}
  (\bibinfo{year}{2010}): \emph{\bibinfo{title}{Presentation: ``What is DEL
  good for?''}}.
\newblock In: {\slshape \bibinfo{booktitle}{ESSLI Workshop on `Logic,
  Rationality and Intelligent Interaction' (organized by J. van Benthem and E.
  Pacuit)}}.
\newblock
  \urlprefix\url{https://cs.stanford.edu/~epacuit/lograt/wkshp-esslli2010.html}.

\bibitemdeclare{incollection}{Baltag2016}
\bibitem{Baltag2016}
\bibinfo{author}{Alexandru \surnamestart Baltag\surnameend}
  (\bibinfo{year}{2016}): \emph{\bibinfo{title}{To Know is to Know the Value of
  a Variable}}.
\newblock In: {\slshape \bibinfo{booktitle}{Advances in Modal Logic 2016}},
  \bibinfo{volume}{11}, \bibinfo{publisher}{College Publications}, pp.
  \bibinfo{pages}{135--155}.
\newblock \bibinfo{note}{ISBN:978-1-84890-201-5}.

\bibitemdeclare{incollection}{BBS2016}
\bibitem{BBS2016}
\bibinfo{author}{Alexandru \surnamestart Baltag\surnameend},
  \bibinfo{author}{Rachel \surnamestart Boddy\surnameend} \&
  \bibinfo{author}{Sonja \surnamestart Smets\surnameend}
  (\bibinfo{year}{2018}): \emph{\bibinfo{title}{Group Knowledge in
  Interrogative Epistemology}}.
\newblock In: {\slshape \bibinfo{booktitle}{Outstanding Contributions to
  Logic}}, \bibinfo{volume}{12}, \bibinfo{publisher}{Springer}, pp.
  \bibinfo{pages}{131--164}, \doi{10.1007/978-3-319-62864-6_5}.

\bibitemdeclare{incollection}{BMS}
\bibitem{BMS}
\bibinfo{author}{Alexandru \surnamestart Baltag\surnameend},
  \bibinfo{author}{Lawrence \surnamestart Moss\surnameend} \&
  \bibinfo{author}{Slawomir \surnamestart Solecki\surnameend}
  (\bibinfo{year}{1998}): \emph{\bibinfo{title}{The Logic of Public
  Announcements, Common Knowledge, and Private Suspicions}}.
\newblock In: {\slshape \bibinfo{booktitle}{Proceedings TARK 98}},
  \bibinfo{publisher}{Morgan Kaufmann}, pp. \bibinfo{pages}{43--56}.

\bibitemdeclare{article}{NEW-BMS}
\bibitem{NEW-BMS}
\bibinfo{author}{Alexandru \surnamestart Baltag\surnameend},
  \bibinfo{author}{Lawrence \surnamestart Moss\surnameend} \&
  \bibinfo{author}{Slawomir \surnamestart Solecki\surnameend}
  (\bibinfo{year}{2023}): \emph{\bibinfo{title}{Logics for epistemic actions:
  completeness, decidability, expressivity}}.
\newblock {\slshape \bibinfo{journal}{Logics}} \bibinfo{volume}{1(2)}, pp.
  \bibinfo{pages}{97--147}, \doi{10.3390/logics1020006}.

\bibitemdeclare{article}{BM}
\bibitem{BM}
\bibinfo{author}{Alexandru \surnamestart Baltag\surnameend} \&
  \bibinfo{author}{Lawrence~S. \surnamestart Moss\surnameend}
  (\bibinfo{year}{2004}): \emph{\bibinfo{title}{Logics for Epistemic
  Programs}}.
\newblock {\slshape \bibinfo{journal}{Synthese}} \bibinfo{volume}{139(2)}, pp.
  \bibinfo{pages}{165--224}, \doi{10.1023/B:SYNT.0000024912.56773.5e}.

\bibitemdeclare{incollection}{BMD}
\bibitem{BMD}
\bibinfo{author}{Alexandru \surnamestart Baltag\surnameend},
  \bibinfo{author}{Lawrence~S. \surnamestart Moss\surnameend} \&
  \bibinfo{author}{Hans \surnamestart van Ditmarsch\surnameend}
  (\bibinfo{year}{2008}): \emph{\bibinfo{title}{Epistemic Logic and Information
  Update}}.
\newblock In \bibinfo{editor}{P.~\surnamestart Adriaans\surnameend} \&
  \bibinfo{editor}{J.~\surnamestart {van Benthem}\surnameend}, editors:
  {\slshape \bibinfo{booktitle}{Philosophy of Information, part of series:
  Handbook of the Philosophy of Science}}, \bibinfo{volume}{8},
  \bibinfo{publisher}{Elsevier}, pp. \bibinfo{pages}{361--465},
  \doi{10.1016/C2009-0-16481-4}.

\bibitemdeclare{incollection}{sep-dynamic-epistemic}
\bibitem{sep-dynamic-epistemic}
\bibinfo{author}{Alexandru \surnamestart Baltag\surnameend} \&
  \bibinfo{author}{Bryan \surnamestart Renne\surnameend}
  (\bibinfo{year}{2016}): \emph{\bibinfo{title}{{Dynamic Epistemic Logic}}}.
\newblock In \bibinfo{editor}{Edward~N. \surnamestart Zalta\surnameend},
  editor: {\slshape \bibinfo{booktitle}{The {Stanford} Encyclopedia of
  Philosophy}}, \bibinfo{edition}{{W}inter 2016} edition,
  \bibinfo{publisher}{Metaphysics Research Lab, Stanford University}.
\newblock
  \urlprefix\url{https://plato.stanford.edu/archives/win2016/entries/dynamic-epistemic/}.

\bibitemdeclare{incollection}{BS20}
\bibitem{BS20}
\bibinfo{author}{Alexandru \surnamestart Baltag\surnameend} \&
  \bibinfo{author}{Sonja \surnamestart Smets\surnameend}
  (\bibinfo{year}{2020}): \emph{\bibinfo{title}{Learning what Others Know}}.
\newblock In \bibinfo{editor}{L.~\surnamestart Kovacs\surnameend} \&
  \bibinfo{editor}{E.~\surnamestart Albert\surnameend}, editors: {\slshape
  \bibinfo{booktitle}{LPAR23 proceedings, EPiC Series in Computing}},
  \bibinfo{volume}{73}, pp. \bibinfo{pages}{90--110},
  \doi{10.48550/arXiv.2109.07255}.

\bibitemdeclare{inproceedings}{BS24}
\bibitem{BS24}
\bibinfo{author}{Alexandru \surnamestart Baltag\surnameend} \&
  \bibinfo{author}{Sonja \surnamestart Smets\surnameend}
  (\bibinfo{year}{2024}): \emph{\bibinfo{title}{Logics for Data Exchange and
  Communication}}.
\newblock In: {\slshape \bibinfo{booktitle}{Advances in Modal Logic 2024}},
  \bibinfo{volume}{15}, \bibinfo{publisher}{College Publications}, pp.
  \bibinfo{pages}{147--169}.
\newblock \bibinfo{note}{ISBN:978-1-84890-467-5}.

\bibitemdeclare{incollection}{BS25}
\bibitem{BS25}
\bibinfo{author}{Alexandru \surnamestart Baltag\surnameend} \&
  \bibinfo{author}{Sonja \surnamestart Smets\surnameend}
  (\bibinfo{year}{2025}): \emph{\bibinfo{title}{Group Knowledge of Hypothetical
  Values}}.
\newblock In: {\slshape \bibinfo{booktitle}{Electronic Proceedings in
  Theoretical Computer Science (TARK 2025)}}, \bibinfo{volume}{437}, pp.
  \bibinfo{pages}{135--154}, \doi{10.4204/EPTCS.437.15}.

\bibitemdeclare{article}{BvB2020a}
\bibitem{BvB2020a}
\bibinfo{author}{Alexandru \surnamestart Baltag\surnameend} \&
  \bibinfo{author}{Johan \surnamestart {van Benthem}\surnameend}
  (\bibinfo{year}{2021}): \emph{\bibinfo{title}{A Simple Logic of Functional
  Dependence}}.
\newblock {\slshape \bibinfo{journal}{Journal of Philosophical logic}}
  \bibinfo{volume}{50}, pp. \bibinfo{pages}{939--1005},
  \doi{10.1007/s10992-020-09588-z}.

\bibitemdeclare{inproceedings}{BenthemMinica2}
\bibitem{BenthemMinica2}
\bibinfo{author}{Johan \surnamestart van Benthem\surnameend} \&
  \bibinfo{author}{{\c{S}}tefan \surnamestart Minic{\u{a}}\surnameend}
  (\bibinfo{year}{2009}): \emph{\bibinfo{title}{Toward a Dynamic Logic of
  Questions}}.
\newblock In: {\slshape \bibinfo{booktitle}{LORI 2009 Proceedings, Lecture
  Notes in Computer Science}}, \bibinfo{volume}{5834},
  \bibinfo{publisher}{Springer}, pp. \bibinfo{pages}{27--41},
  \doi{10.1007/978-3-642-04893-7_3}.

\bibitemdeclare{mastersthesis}{Boddy2014}
\bibitem{Boddy2014}
\bibinfo{author}{Rachel \surnamestart Boddy\surnameend} (\bibinfo{year}{2014}):
  \emph{\bibinfo{title}{Epistemic Issues and Group Knowledge}}.
\newblock \bibinfo{type}{Master of logic thesis, mol-2014-03},
  \bibinfo{school}{University of Amsterdam}.
\newblock \urlprefix\url{https://eprints.illc.uva.nl/id/eprint/921/}.

\bibitemdeclare{article}{Ding}
\bibitem{Ding}
\bibinfo{author}{Yifeng \surnamestart Ding\surnameend} (\bibinfo{year}{2016}):
  \emph{\bibinfo{title}{The axiomatization and complexity of Knowing-What-Logic
  on model class K, Epistemic Logic with Functional Dependency Operator}}.
\newblock {\slshape \bibinfo{journal}{arXiv 1609.07684}},
  \doi{10.48550/arXiv.1609.07684}.

\bibitemdeclare{book}{DHK}
\bibitem{DHK}
\bibinfo{author}{Hans \surnamestart van Ditmarsch\surnameend},
  \bibinfo{author}{Wiebe \surnamestart van~der Hoek\surnameend} \&
  \bibinfo{author}{Barteld \surnamestart Kooi\surnameend}
  (\bibinfo{year}{2008}): \emph{\bibinfo{title}{Dynamic Epistemic Logic}}.
\newblock \bibinfo{volume}{337}, \bibinfo{publisher}{Springer},
  \doi{10.1007/978-1-4020-5839-4}.

\bibitemdeclare{incollection}{vEGWang}
\bibitem{vEGWang}
\bibinfo{author}{Jan \surnamestart van Eijck\surnameend},
  \bibinfo{author}{Malvin \surnamestart Gattinger\surnameend} \&
  \bibinfo{author}{Yanjing \surnamestart Wang\surnameend}
  (\bibinfo{year}{2017}): \emph{\bibinfo{title}{Knowing Values and Public
  Inspection}}.
\newblock In: {\slshape \bibinfo{booktitle}{Lecture Notes in Computer
  Science}}, \bibinfo{volume}{10119}, \bibinfo{publisher}{Springer}, pp.
  \bibinfo{pages}{77--90}, \doi{10.1007/978-3-662-54069-5_7}.

\bibitemdeclare{mastersthesis}{Goldbach2015}
\bibitem{Goldbach2015}
\bibinfo{author}{Roosmarijn \surnamestart Goldbach\surnameend}
  (\bibinfo{year}{2015}): \emph{\bibinfo{title}{Modelling Democratic
  Deliberation}}.
\newblock \bibinfo{type}{Master of logic thesis, mol-2015-05},
  \bibinfo{school}{University of Amsterdam}.
\newblock \urlprefix\url{https://eprints.illc.uva.nl/id/eprint/946/}.

\bibitemdeclare{incollection}{GuWang}
\bibitem{GuWang}
\bibinfo{author}{Tao \surnamestart Gu\surnameend} \& \bibinfo{author}{Yanjing
  \surnamestart Wang\surnameend} (\bibinfo{year}{2016}):
  \emph{\bibinfo{title}{``Knowing value'' logic as a normal modal logic}}.
\newblock In: {\slshape \bibinfo{booktitle}{Advances in Modal Logic}},
  \bibinfo{volume}{11}, \bibinfo{publisher}{College Pub.}, pp.
  \bibinfo{pages}{362--381}, \doi{10.48550/arXiv.1604.08709}.
\newblock \bibinfo{note}{ISBN:978-1-84890-201-5}.

\bibitemdeclare{inproceedings}{Hong}
\bibitem{Hong}
\bibinfo{author}{Bo~\surnamestart Hong\surnameend} (\bibinfo{year}{2023}):
  \emph{\bibinfo{title}{Knowing the Value of a Predicate}}.
\newblock In: {\slshape \bibinfo{booktitle}{Proceedings of LORI 2023, Lecture
  Notes in Computer Science}}, \bibinfo{volume}{14329}, pp.
  \bibinfo{pages}{149--166}, \doi{10.1007/978-3-031-45558-2_12}.

\bibitemdeclare{article}{Plaza}
\bibitem{Plaza}
\bibinfo{author}{Jan \surnamestart Plaza\surnameend}:
  \emph{\bibinfo{title}{Logics of Public Communication}}.
\newblock {\slshape \bibinfo{journal}{Synthese}} \bibinfo{volume}{158}, pp.
  \bibinfo{pages}{165--179}, \doi{10.1007/s11229-007-9168-7}.
\newblock \bibinfo{note}{Republication of J. Plaza (1989), Proceedings 4th
  International Symposium on Methodologies for Intelligent Systems,
  pp.201-216}.

\bibitemdeclare{book}{LDII}
\bibitem{LDII}
\bibinfo{author}{Johan \surnamestart {van Benthem}\surnameend}
  (\bibinfo{year}{2011}): \emph{\bibinfo{title}{Logical Dynamics of Information
  and Interaction}}.
\newblock \bibinfo{publisher}{Cambridge University Press, UK},
  \doi{10.1017/CBO9780511974533}.

\bibitemdeclare{article}{BenthemMinica1}
\bibitem{BenthemMinica1}
\bibinfo{author}{Johan \surnamestart {van Benthem}\surnameend} \&
  \bibinfo{author}{{\c{S}}tefan \surnamestart Minic{\u{a}}\surnameend}
  (\bibinfo{year}{2012}): \emph{\bibinfo{title}{Toward a Dynamic Logic of
  Questions}}.
\newblock {\slshape \bibinfo{journal}{Journal of Philosophical Logic}}
  \bibinfo{volume}{41}, pp. \bibinfo{pages}{633--669},
  \doi{10.1007/s10992-012-9233-7}.

\bibitemdeclare{incollection}{Yanjing}
\bibitem{Yanjing}
\bibinfo{author}{Yanjing \surnamestart Wang\surnameend} (\bibinfo{year}{2018}):
  \emph{\bibinfo{title}{Beyond Knowing That: A New Generation of Epistemic
  Logics}}.
\newblock In: {\slshape \bibinfo{booktitle}{Outstanding Contributions to
  Logic}}, \bibinfo{volume}{12}, \bibinfo{publisher}{Springer}, pp.
  \bibinfo{pages}{499--533}, \doi{10.1007/978-3-319-62864-6_21}.

\bibitemdeclare{inproceedings}{WangFan1}
\bibitem{WangFan1}
\bibinfo{author}{Yanjing \surnamestart Wang\surnameend} \& \bibinfo{author}{Jie
  \surnamestart Fan\surnameend} (\bibinfo{year}{2013}):
  \emph{\bibinfo{title}{Knowing that, Knowing what, and Public Communication:
  Public Announcement Logic with Kv Operators}}.
\newblock In: {\slshape \bibinfo{booktitle}{Proceedings of IJCAI 2013}},
  \bibinfo{publisher}{AAAI Press}, pp. \bibinfo{pages}{1147--1154},
  \doi{10.5555/2540128.2540293}.

\bibitemdeclare{incollection}{WangFan2}
\bibitem{WangFan2}
\bibinfo{author}{Yanjing \surnamestart Wang\surnameend} \& \bibinfo{author}{Jie
  \surnamestart Fan\surnameend} (\bibinfo{year}{2014}):
  \emph{\bibinfo{title}{Conditionally knowing what}}.
\newblock In: {\slshape \bibinfo{booktitle}{Advances in Modal Logic}},
  \bibinfo{volume}{10}, \bibinfo{publisher}{College Publications}, pp.
  \bibinfo{pages}{569--587}.
\newblock \bibinfo{note}{ISBN:978-1-84890-151-3}.

\end{thebibliography}
\end{document}